\documentclass[a4paper, onecolumn, 9pt]{article}
\usepackage[paperwidth=210mm,paperheight=297mm,centering,hmargin=2.3cm,vmargin=2.7cm]{geometry}
\usepackage{marvosym}
\usepackage{amsfonts}
\usepackage{amsthm}
\usepackage{dcolumn}
\usepackage{graphicx}
\usepackage{bm,bbm}
\usepackage{kpfonts}
\usepackage{braket}
\usepackage{color}
\usepackage[T1]{fontenc}
\usepackage[utf8]{inputenc}
\usepackage{authblk}

\newcommand{\cA}{{\mathcal A}}

\newcommand{\cC}{{\mathcal C}}

\newcommand{\cH}{{\mathcal H}}

\newcommand{\cJ}{{\mathcal J}}
\newcommand{\cK}{{\mathcal K}}
\newcommand{\cL}{{\mathcal L}}

\newcommand{\cN}{{\mathcal N}}

\newcommand{\cP}{{\mathcal P}}

\newcommand{\cS}{{\mathcal S}}
\newcommand{\cT}{{\mathcal T}}
\newcommand{\cU}{{\mathcal U}}
\newcommand{\cV}{{\mathcal V}}
\newcommand{\cX}{{\mathcal X}}
\newcommand{\cW}{{\mathcal W}}
\newcommand{\cY}{{\mathcal Y}}



\newcommand{\bbmC}{{\mathbbm C}}

\newcommand{\bbmN}{{\mathbbm N}}
\newcommand{\bbmR}{{\mathbbm R}}

\newcommand{\bbmeins}{{\mathbbm 1}}


\newcommand{\tr}{\mathrm{tr}}

\newcommand{\cl}{\mathrm{cl}}

\newcommand{\conv}{\mathrm{conv}}

\newtheorem{theorem}{Theorem}

\newtheorem{definition}[theorem]{Definition}

\newtheorem{lemma}[theorem]{Lemma}

\newtheorem{proposition}[theorem]{Proposition}
\newtheorem{remark}[theorem]{Remark}

\begin{document}

\title{Universal superposition codes: capacity regions of compound quantum broadcast channel with confidential messages}
\author[1,2]{Holger Boche}
	\author[1]{Gisbert Jan\ss en}
	\author[1]{Sajad Saeedinaeeni}
\affil[1]{{\footnotesize Lehrstuhl f\"ur Theoretische Informationstechnik, Technische Universit\"at M\"unchen, 80290 M\"unchen, Germany}}

\affil[2]{{\footnotesize Munich Center for
   Quantum Science and Technology (MCQST), 80799 M\"unchen, Germany}}
   
   \affil[ ]{{\scriptsize Electronic addresses: \{boche, gisbert.janssen, sajad.saeedinaeeni\}@tum.de}}
\date{}
\maketitle
\section{Abstract}
We derive universal codes for transmission of broadcast and confidential messages over classical-quantum-quantum and fully quantum channels. These codes are robust to channel  uncertainties considered in the compound model. To construct these codes we generalize random codes for transmission of public messages, to derive a universal superposition coding for the compound quantum broadcast channel. As an application, we give a multi-letter characterization of regions corresponding to capacity of the compound quantum broadcast channel for transmitting broadcast and confidential messages simultaneously. This is done for two types of broadcast messages, one called public and the other common. \\

\textit{A version of this paper is published in Journal of Mathematical Physics 61, 042204 (2020). Also, parts of the results will be presented and published at International Symposium on Information Theory (ISIT) 2020}. 

\section{Introduction}
Assuming the state of a channel, connecting two sides of a communication, to be perfectly known by the communicating parties, that would require their complete knowledge of all the relevant physical parameters of the communication channel, is an idealization that often cannot be achieved in real-world applications. The compound channel model, in which the communicating parties only have access to an uncertainty set to which the state of the channel belongs, invokes coding strategies that are robust to such uncertainties. The size of this uncertainty set, depends on the strategy and physical resources used for channel estimation, and under real-life physical communication conditions, will in general be infinite. Relaxing the assumption of the perfectly known channel, requires coding strategies that work for all channels belonging to a set of possibly infinite cardinality and are hence, significantly more sophisticated. A case in point is the coding strategy established by the authors of the current paper in \cite{simcomp} to derive capacity results for simultaneous transmission of classical (public) messages and quantum information over the quantum channel, given that those developed for the perfectly known channel in \cite{devetaknshor} did not provide the structure needed to deal with channel uncertainty. The compound model consists of an indexed set of channels $\{W_{s}\}_{s\in S}$. A channel from this uncertainty set is used in a memoryless fashion for communication, requiring the codes to be reliable for the set $\{W_{s}^{\otimes n}\}_{s\in S}$, where the channel is used  $n\in\mathbb{N}$ times.\\
Information theoretically, the compound model has yielded intriguing properties. One of the interesting information theoretic properties of the compound channel, is that in general, a strong converse cannot be established on the capacity of the compound classical-quantum channel for message transmission, when upper-bounding of the average decoding error is considered. This holds even for finite uncertainty sets \cite{ahls1968,ahls1969,zeroerror}. This observation implies that a second order capacity theorem cannot be developed in this case. Further, calculation of the so called $\epsilon$-capacity of the compound channel under the average error criterion is still an open question. We note however, that determining a second order $\epsilon$-capacity for the compound channel is not possible, due to the observation, that there are examples of the compound channel where the optimistic $\epsilon$-capacity is strictly larger than its pessimistic one (see \cite{BSP} Remark 13). In \cite{ahls2015} Section 3, Ahlswede posed the question of whether or not there exist simple recursive formulas for the $\epsilon$-capacity of the compound channel. This question, being of great practical significance as discussed in the concluding remarks here, was answered negatively by authors of \cite{BSP}. We note also, the importance of codes developed for the compound channel, for another prominent channel model, namely the arbitrarily varying channel (see e.g. \cite{ahls}), where an active jamming party is present. \\
We consider the compound quantum broadcast channel, connecting one sender to two receivers of different permissions or priorities. The channel is used to perform an integrated task, in which a confidential message, kept secret from the third party, is communicated simultaneously with a broadcast message available to both receivers. The requirements on the broadcast  message, determine two communication scenarios. In the first scenario, we consider the case where both receivers are required to decode the broadcast message. We refer to this message as the common message. In the second scenario the decoding condition is relaxed on one of the receivers. That is, the third party, namely the receiver from whom the confidential message is kept secret, may or may not decode the broadcast message, to which, in this scenario, we refer as the public message. \\
 The capacity of the channel for performing such tasks, will include trade-off regions, determining the resourcefulness of the public/common message transmission capacity, for enhancement of confidential message transmission. Information theoretic analysis of these tasks, will naturally be significant when regions beyond those achieved by simple time-sharing between the two tasks are achieved. We first consider the case where the sender is restricted to classical inputs, namely the classical-quantum-quantum (cqq) broadcast model. This model proves useful for obtaining capacity results for the fully quantum broadcast model, where this restriction is lifted. \\
 The classical counterparts of our results were given in \cite{shaefer}. Therein, the authors first derive robust codes for the bidirectional channel, in which both receivers are meant to decode the message. This common message will then piggyback a public message decoded by Bob. The privacy amplification strategies are then applied on part of the public codes to obtain information theoretic security via equivocation. We follow a similar approach in the context of quantum information theory. We obtain codes for the bidirectional channel (broadcast channel with no security requirement) by generalizing the random codes from \cite{mosonyi}. Our generalization of these results (see Appendix \ref{mosonyiapp}), yields a universal superposition coding for cq channels. Our input  structure allows us to use privacy amplification arguments (\cite{minglai}) on part of the codebook to achieve the desired secrecy rates. \\
The quantum broadcast model in which the channel is assumed to perfectly known by communicating parties was considered in \cite{wilde09,wilde12}, with and without a pre-shared secret key respectively. Therein, the authors have established a dynamic capacity trade off region using a coding strategy that is channel-dependent. We use a different strategy in which establish universal superposition codes for the compound bidirectional channel, exploiting properties of Renyi entropies. \\
Another regime in which the quantum broadcast model with confidential messages has been studied, is the one-shot (single serving) model. A one-shot dynamic capacity theorem was derived for regions corresponding to tasks of common, public and private message transmission over the quantum channel in \cite{salek}. It would be interesting to see if the coding strategies used therein, derived from position based decoding (see \cite{anshu17, anshu18, anshu19}), can be used to design codes for the compound channel model.\\
In the first section following this introduction, we introduce the notation used in this work. Precise definitions of channel models, codes and rate regions along with our main results for the cqq model are given in Section \ref{basicdefsec}. We prove the direct part of our capacity results for the cqq model in Section \ref{sect:coding_for_broadcast}, that is followed by the proof of converse in Section \ref{sect:outer_bounds}. The security criterion that we impose on the confidential message, is the mutual information between Alice and Eve to be arbitrarily small for large numbers of channel uses. As the common or indeed the public messages are available to Eve, we require the mentioned mutual information to be conditioned on the broadcast message. Proving the existence of capacity achieving codes is done in two steps. First we consider the case where there is no security criterion placed on the messages sent to Bob and Eve. In this case, we have a bidirectional channel, where Alice, is sending a message to be decoded by Bob and potentially by Eve (weather Eve decodes this message depends on which scenario is considered, determining in turn our labeling of it as common or public). Conditioned on this message (the corresponding codewords are distributed according to a certain structure), Alice is simultaneously transmitting a second type of message, that is decoded by Bob. The random coding that makes precisely this task possible, is given by Lemma \ref{mosonyivariation}, which is our universal superposition coding result. Application of this lemma gives us the desired bidirectional codes in forms of Lemma \ref{lemmanonsecurebcc} (where the conditioning message is common) and \ref{lemmanonsecuretpc} (where the conditioning message is public). In the second step, the second type of message described above, is used for privacy amplification. We give the code definitions and capacity results for the fully quantum channel independently in Section \ref{sect:full_quantum}. Finally, we discuss some further connections of our results with other approaches in quantum information theory in Section \ref{concludingremarks}, along with other concluding remarks and directions for future research.
	\section{Notation and conventions}\label{sect:notation}
	All Hilbert spaces are assumed to have finite dimensions and are over the field $\mathbb{C}$. All alphabets are also assumed to have finite dimensions.The set of linear operators from Hilbert space $\cH$ to itself is denoted by $\cL(\cH)$. We denote the set of states by $\cS(\cH):=\{\rho\in\cL(\cH):\rho\geq 0,\text{tr}(\rho)=1\}$. Pure states are given by projections onto one-dimensional subspaces. Given a unit vector $x\in\cH$, the corresponding pure state will be written as $\ket{x}\bra{x}$. The set of probability distributions on the finite alphabet $\cX$ of cardinality $|\cX|$, will be denoted by $\cP(\cX)$. For $n\in\mathbb{N}$, we define $\cX^{n}:=\big\{(x_{1},\dots,x_{n}):x_{i}\in\cX, \forall i\in\{1,\dots,n\}\big\}$. The sequence $\mathbf{x}$ will denote elements of $\cX^{n}$. Also, we use bold letters to denote vectors (sequences with more that one element). The probability distribution $p^{\otimes n}\in\cP(\cX^{n})$ will be given by $n$-fold product of $p\in\cP(\cX)$, namely $p^{\otimes n}(\mathbf{x})=p(x_{1})\dots p(x_{n})$ with $\mathbf{x}=(x_{1},\dots,x_{n})$. For any number $M\in\mathbb{N}$, we use $[M]:=\{1,\dots,M\}$. \\
	The classical quantum (cq) channel $W:\cX\to\cS(\cH)$, is a completely positive trace preserving map from alphabet $\cX$ to the set of states on Hilbert space $\cH$. We denote the set of all such maps by $CQ(\cX,\cH)$. This set is equipped with the norm $\parallel\cdot\parallel_{CQ}$ defined for $W\in CQ(\cX,\cH)$ by 
	\begin{equation*}
	    \parallel W\parallel_{CQ}:=\max_{x\in\cX}\parallel W(x)\parallel_{1},
	\end{equation*}
	 where $\parallel\cdot\parallel_{1}$ is the trace norm on $\cL(\cH)$. We use the term cqq channel for map $V\in CQ(\cX,\cH_{1}\otimes\cH_{2})$ with two outcomes in two sets of states on two Hilbert spaces.
	 
	 A measurement or a positive operator valued measure (POVM) with $M\in\mathbb{N}$ outcomes on Hilbert space $\cH$, is given by an $M$-tuple $(D_{1},\dots,D_{M}):D_{i}\geq 0$, $\forall i\in[M]$ and $\sum_{i\in[M]}D_{i}=\mathbbm{1}_{\cH}$. With slight abuse of notation, we write $a^c := \bbmeins_{\cH} - a$ for $a\in\cL(\cH)$.\\
	 We use the base two logarithm denoted by $\log$. The von Neumann entropy of a state $\rho \in \cS(\cH)$ is given by
	\begin{equation*}
	S(\rho):=-\text{tr}(\rho\log\rho).
	\end{equation*}

Given the state $\omega_{AB}\in\cS(\cH_{A}\otimes\cH_{B})$, a closely related quantity, namely the mutual information is given by
\begin{equation*}
    I(A;B,\omega):=S(A,\omega)+S(B,\omega)-S(AB,\omega),
\end{equation*}
where $S(\gamma,\omega)$, indicates the von Neumann entropy of the state $\omega_{\gamma}$, the marginal state of $\omega$. Consider the ensemble $\{p(x),\omega_{AB}^{x}\}$ with $\omega_{AB}^{x}\in\cS(\cH_{A}\otimes\cH_{B})$ and $p\in\cP(\cX)$. We can define a classical-quantum (cq) state $\omega_{XAB}\in\cS(\mathbb{C}^{|\cX|}\otimes\cH_{A}\otimes\cH_{B}$, given some ONB $\{e_x\}_{x\in\cX}\in\mathbb{C}^{|\cX|}$ as
\begin{equation}
   \omega_{XAB}:=\sum_{x\in\cX}p(x)\ket{e_{x}}\bra{e_{x}}^{X}\otimes\omega_{AB}^{x}
\end{equation}
Note that we have used the suffix $X$ to label the Hilbert space corresponding to alphabet $\cX$. The conditional mutual information is then defined by
\begin{equation}\label{conditionalmutual}
    I(A;B|X,\omega_{XB}):=\sum_{x\in\cX}I(A;B,\omega_{AB}^{x}).
\end{equation}
 Throughout this work we have made use of finite nets, to approximate arbitrary compound quantum channels using ones with finite uncertainty sets. Relevant definition and statements on nets, are presented in Appendix \ref{nets} by Definition \ref{netsdef} and proceeding lemma.\\
	 We use $\epsilon_{n} \to 0$ exponentially as $n\to\infty$  or we say $\epsilon_{n}$ approaches (goes to) zero exponentially, if $-\frac{1}{n}\log\epsilon_{n}$ is a strictly positive constant. We use $\cl(A)$ to denote the closure of set $A$ and $\conv(A)$ to denote its convex hull. 
	 \section{Basic definitions and main results}\label{basicdefsec}
		In this section we state the main results and definitions for the compound classical-quantum-quantum (cqq) broadcast channel. The results and definitions related to the fully quantum broadcast channel are stated in Section \ref{sect:full_quantum}. For finite alphabet $\cX$ and Hilbert spaces $\cH_{B},\cH_{E}$, let  $\cW:=\{W_{s}\}_{s\in S}\subset CQ(\cX,\cH_{B}\otimes\cH_{E})$ be a set of cqq channels. The compound cqq broadcast channel generated by this set is given by family $\{W_{s}^{\otimes n}, s\in S\}_{n=1}^{\infty}$. 
	In other words, using $n$ instances of the compound channel is equivalent to using $n$ instances of one of the channels from the uncertainty set. The users of this channel may or may not have access to the Channel State Information (CSI). In this document, we consider the case where both users only know the uncertainty set, to which the actual channel belongs.
	We consider two closely related communication scenarios of significance, having both appeared in the literature hitherto. 
	\begin{itemize}
	    \item \textbf{Broadcasting Common and Confidential messages (BCC)}, where the compound channel is used $n\in\mathbb{N}$ times by the sender Alice in control of the input of the channel, to send two types of messages $(m_{0},m_{c})$ simultaneously over the channel.
	\begin{itemize}
		\item $m_{0}\in[M_{0,n}]$, called the common message, that has to be reliably decoded by receiver Bob in control of Hilbert space $\cH_{B}$ and Eve in control of Hilbert space $\cH_{E}$.
		\item $m_{c}\in[M_{c,n}]$, called the confidential message, that has to be decoded reliably by Bob while Eve, the wiretapper, is kept ignorant. 
	\end{itemize}
	\item \textbf{Transmitting Public and Confidential messages (TPC)}, where along with the confidential message $m_{c}\in[M_{0,n}]$ and instead of the common message, Alice wishes to send a "public" message $m_{1}\in[M_{1,n}]$, that is reliably decoded by Bob while it may or may not be decoded by Eve. 
	\end{itemize}
We consider the main concepts and results related to each task in the following. We start with the BCC scenario. The precise definition of the BCC codes is given by the following.\\
\begin{definition}[ BCC codes]
	An $(n,M_{0,n},M_{c,n})$ BCC code for $\cW$, is a family $\cC=(E(\cdot|\mathbf{m}),D_{B,\mathbf{m}},D_{E,m_{0}})_{\mathbf{m}\in\mathbf{M}}$ with $\mathbf{M}:=[M_{0,n}]\times[M_{c,n}]$, stochastic encoder $E: \mathbf{M} \to \cP(\cX^{n})$, POVMs $(D_{B,\mathbf{m}})_{\mathbf{m}\in\mathbf{M}}$ on $\cH_{B}^{\otimes n}$ and $(D_{E,m_{0}})_{m_{0}\in[M_{0,n}]}$ on $\cH_{E}^{\otimes n}$. 
\end{definition}
We define the transmission error functions, for any cqq broadcast channel $W:\cX \to \cS(\cH_{B}\otimes\cH_{E})$ and $n\in\mathbb{N}$ by
\begin{itemize}
	\item $\overline{e}_{B}(\cC,W^{\otimes n}):=\frac{1}{|\mathbf{M}|}\sum_{\mathbf{m}\in\mathbf{M}}\sum_{\mathbf{x}\in\cX^{n}}E(\mathbf{x}|\mathbf{m})\tr(D^{c}_{B,\mathbf{m}}W_{B}^{\otimes n}(\mathbf{x}))$ and
	\item $\overline{e}_{E}(\cC,W^{\otimes n}):=\frac{1}{|\mathbf{M}|}\sum_{m\in\mathbf{M}}\sum_{\mathbf{x}\in\cX^{n}}E(\mathbf{x}|\mathbf{m})\tr(D^{c}_{E,m_{0}}W_{E}^{\otimes n}(\mathbf{x}))$,
\end{itemize}
where, $W_{\gamma}, \gamma\in\{B,E\}$ are the marginal channels of $W$. Moreover, we use the security criterion given by
\begin{equation}
I(M_{c};E|M_{0},\sigma_{s,n}),
\end{equation}
where $\sigma_{s,n}$ is the code state defined by
\begin{equation}\label{securitystate}
\sigma_{s,n}:=\frac{1}{|\mathbf{M}|}\sum_{\mathbf{m}\in\mathbf{M}}\ket{\mathbf{m}}\bra{\mathbf{m}}\otimes\sum_{\mathbf{x}\in\cX^{n}}E(\mathbf{x}|m)W_{s}^{\otimes n}(\mathbf{x}), \qquad (s\in S, n\in\mathbb{N}).
\end{equation}
The conditional mutual information should be understood given (\ref{conditionalmutual}) and considering ONBs $\{\ket{m_{i}}\}_{m_i\in [M_{i}]}\in\mathbb{C}^{M_{i}}$ for $i\in\{0,c\}$ and $\ket{\mathbf{m}}:=\ket{m_{0}}\otimes\ket{m_{c}}$. Based on this, we define the following achievable rate pairs.
\begin{definition}(Achievable BCC rate pair)\label{bccratepairdef}
	A pair $(R_{0},R_{c})$ of non-negative numbers is called an achievable BCC rate pair for $\cW$, if for each $\epsilon,\delta>0$, exists an $n_{0}(\epsilon,\delta)\in\mathbb{N}$, such that for all $n>n_{0}$, we find an $(n,M_{0,n},M_{c,n})$ BCC code $\cC=(E(\cdot|\mathbf{m}),D_{B,\mathbf{m}},D_{E,m_{0}})_{\mathbf{m}\in\mathbf{M}}$ such that
	\begin{enumerate}
		\item $\frac{1}{n}\log M_{i,n}\geq R_{i}-\delta $ ($i\in\{0,c\}$),
		\item $\sup_{s\in S}\overline{e}_{\gamma}(\cC,W_{s}^{\otimes n})\leq \epsilon$ ($\gamma\in\{B,E\}$),
		\item $\sup_{s\in S}I(M_{c};E|M_{0},\sigma_{s,n})\leq\epsilon$,
	\end{enumerate}
	are simultaneously fulfilled.
\end{definition}
We define the BCC capacity region of $\cW$ by 
\begin{align}
C_{BCC}[\cW]:=\{(R_{0},R_{c})\in\mathbb{R}_{0}^{+}\times\mathbb{R}_{0}^{+}:(R_{0},R_{c}) \text{ is achievable BCC rate pair for } \cW\}.
\end{align}
To state our theorem, we define the following regions, given finite alphabets $\cU,\mathcal{Y}$ and probability distribution $p=p_{UYX}\in\cP(\cU\times\mathcal{Y}\times\cX^{n})$, with the random variables $U,Y,X$ distributed accordingly. 
\begin{align*}
\hat{C}^{(1)}\big(\cW,p,n\big)&:=\big\{(R_{0},R_{c})\in\mathbb{R}_{0}^{+}\times\mathbb{R}_{0}^{+}: R_{0}\leq \inf_{s\in S}\min\left\{I(U;B,\omega_{s}),I(U;E,\omega_{s})\right\}\wedge\\& R_{c}\leq\inf_{s\in S}I(Y;B|U,\omega_{s})-\sup_{s\in S}I(Y;E|U,\omega_{s})\big\}.
\end{align*}
with
\begin{equation}\label{evaluationstate}
\omega_{s}:=\sum_{(u,y,\mathbf{x})\in\cU\times\mathcal{Y}\times\cX^n}p(u,y,\mathbf{x})\ket{u}\bra{u}\otimes\ket{y}\bra{y}\otimes W_{s}^{\otimes n}(\mathbf{x}).
\end{equation}
We state the following theorem. 
\begin{theorem}\label{bcctheorem}
	Let $\cW:=\{W_{s}\}_{s\in S}\subset CQ(\cX,\cH_{B}\otimes\cH_{E})$ be any compound cqq broadcast channel. It holds 
	\begin{align}\label{capacity:regions:bcc}
	C_{BCC}[\cW]&=\cl \bigg(\bigcup_{l=1}^{\infty}\bigcup_{p}\frac{1}{l}\hat{C}^{(1)}\big(\cW,p,l\big)\bigg),
	\end{align}
	where we have used $\frac{1}{l}A:=\{(\frac{1}{l}x_{1},\frac{1}{l}x_{2}):(x_{1},x_{2})\in A\}$. The second union is taken over all $p_{UYX}\in\cP(\cU\times\mathcal{Y}\times\cX^{l})$ such that random variable $U-Y-X$ form a Markov chain and alphabets $\cU$ and $\mathcal{Y}$ are finite. 
\end{theorem}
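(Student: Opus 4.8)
The plan is to prove the two inclusions of (\ref{capacity:regions:bcc}) separately, the achievability (right-hand side $\subseteq C_{BCC}[\cW]$) resting on the universal superposition coding of Lemma~\ref{mosonyivariation} plus privacy amplification, and the converse being a Fano-type outer bound made painless by the fact that the target region is already in regularized (``multi-letter'') form, so that no single-letterization is needed. Throughout one uses that $C_{BCC}[\cW]$ is itself closed (immediate from the $\epsilon,\delta$-formulation of Definition~\ref{bccratepairdef}), so the outer closure on the right of (\ref{capacity:regions:bcc}) is harmless and it suffices to handle $\bigcup_{l}\bigcup_{p}\frac1l\hat C^{(1)}(\cW,p,l)$.

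\emph{Achievability.} Fix a block length $l$ and a distribution $p=p_{UYX}$ with $U-Y-X$ a Markov chain, and regard $W_s^{\otimes l}$ as a single cqq channel with input alphabet $\cX^{l}$, so that $\{W_s^{\otimes l}\}_{s\in S}$ is again a compound cqq broadcast channel. I would show that every point of $\frac1l\hat C^{(1)}(\cW,p,l)$ is an achievable BCC rate pair, in two layers. First, apply the non-secure bidirectional code of Lemma~\ref{lemmanonsecurebcc} (the common-conditioning version, itself deduced from Lemma~\ref{mosonyivariation}) to $\{W_s^{\otimes l}\}_{s\in S}$: for $N$ uses of $W_s^{\otimes l}$ this produces a common message $m_0$ of rate $\approx\frac1l\inf_{s}\min\{I(U;B,\omega_s),I(U;E,\omega_s)\}$ universally decoded by both Bob and Eve, and, superimposed on it, a second message $m'$ of rate $\approx\frac1l\inf_{s}I(Y;B|U,\omega_s)$ universally decoded by Bob, the codewords being drawn from the superposition ensemble attached to $p_{Y|U}$ and $p_{X|Y}$ (here the Markov chain is what makes the layered codeword generation legitimate), with both decoding errors vanishing exponentially and uniformly in $s$. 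Second, split $m'=(m_c,k)$ with $k$ a uniform local key of rate slightly above $\frac1l\sup_{s}I(Y;E|U,\omega_s)$, and run the privacy amplification estimate of \cite{minglai} on the $Y$-layer of the codebook: since Eve's conditional information about the $Y$-codeword given $U$ is at most $\sup_{s}I(Y;E|U,\omega_s)$ for every $s$, sacrificing $k$ of this size drives $I(M_c;E|M_0,\sigma_{s,Nl})$ to zero uniformly in $s$. Reading off the parameters, $m_c$ carries rate $\approx\frac1l\big(\inf_{s}I(Y;B|U,\omega_s)-\sup_{s}I(Y;E|U,\omega_s)\big)$, so conditions 1--3 of Definition~\ref{bccratepairdef} hold with the claimed rates; taking the union over $l$ and $p$ then gives the first inclusion.

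\emph{Converse.} Given an achievable $(R_0,R_c)$ and $\epsilon,\delta>0$, pick the associated $(n,M_{0,n},M_{c,n})$ BCC code for $n$ large, set $l:=n$, and define the test distribution on $\cU\times\mathcal Y\times\cX^{n}$ by $U:=M_0$, $Y:=(M_0,M_c)$, $X:=X^{n}$ with conditional law $E(\cdot|\mathbf m)$; then $U-Y-X$ is trivially Markov and the state (\ref{evaluationstate}) is exactly the code state restricted to the relevant systems. Applying Fano's inequality to Bob's and to Eve's decoder for $m_0$ gives, for every $s$, $nR_0\le I(U;B,\omega_s)+n\epsilon_n$ and $nR_0\le I(U;E,\omega_s)+n\epsilon_n$, hence $nR_0\le\inf_{s}\min\{I(U;B,\omega_s),I(U;E,\omega_s)\}+n\epsilon_n$. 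For the confidential rate, Fano at Bob's decoder gives $nR_c\le I(Y;B|U,\omega_s)+n\epsilon_n$ for every $s$; decomposing $I(Y;B|U,\omega_s)=[I(Y;B|U,\omega_s)-I(Y;E|U,\omega_s)]+I(Y;E|U,\omega_s)$ and invoking condition~3, which forces $I(Y;E|U,\omega_{s'})\le\epsilon$ for \emph{all} $s'$, yields $nR_c\le I(Y;B|U,\omega_s)-\sup_{s'}I(Y;E|U,\omega_{s'})+2\epsilon+n\epsilon_n$ for every $s$; taking the infimum over $s$ puts $(R_0,R_c)$ within $O(\epsilon)+O(\epsilon_n)$ of $\frac1n\hat C^{(1)}(\cW,p,n)$, and letting $\epsilon\to0$, $n\to\infty$ places it in the closure of the union.

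The main obstacle is the achievability, specifically the security analysis in the compound regime: privacy amplification must be carried out simultaneously against Eve's entire uncertainty set $\{W_{s,E}\}_{s\in S}$, which is in general infinite, while the universal decoding guarantee for Bob and Eve is kept intact. I expect to treat this as elsewhere in the paper: pass to a sufficiently fine finite net of cqq channels (Definition~\ref{netsdef} and the lemma following it), build the superposition-plus-amplification code robustly on the net with errors and leakage decaying exponentially, and transfer the bounds to all of $S$ by norm-continuity of $s\mapsto W_s$ and of the entropic quantities involved. A secondary but real bookkeeping point is the exact reconciliation of the $\inf_s$, $\sup_s$ and $\min$ in the code parameters with those in $\hat C^{(1)}(\cW,p,l)$, and verifying that every error and leakage contribution is $o(n)$ uniformly in $s$.
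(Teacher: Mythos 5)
Your proposal is correct and follows essentially the same route as the paper: achievability by the non-secure universal superposition codes of Lemma~\ref{lemmanonsecurebcc} (resting on Lemma~\ref{mosonyivariation}), then sacrificing a randomization layer of rate roughly $\sup_{s}I(Y;E|U,\omega_s)$ for privacy amplification \`a la \cite{minglai} together with a finite-net reduction to handle infinite $S$ (this is Lemma~\ref{publiclyenhancedbcc}), and a Fano/Holevo converse in which the code itself furnishes the multi-letter test distribution and the security criterion supplies the subtracted Eve term, exactly as in Proposition~\ref{prop:bcc_converse}. No substantive gap.
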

\begin{remark}\label{convex:remark}
The set given on the right hand side of (\ref{capacity:regions:bcc}) is convex and hence we do not need further convexification here. This results from time sharing arguments applied on the entropic quantities appearing in (\ref{capacity:regions:bcc}). For a short proof of a similar statement, see \cite{simcomp}.
\end{remark}
We proceed with the TPC scenario. The precise definition of the TPC codes is given in the following.  
\begin{definition}[ TPC codes]
	An $(n,M_{1,n},M_{c,n})$ TPC code for $\cW$, is a family $\cC=(E(\cdot|\mathbf{m}),D_{B,\mathbf{m}})_{\mathbf{m}\in\mathbf{M}}$ with $\mathbf{M}:=[M_{1,n}]\times[M_{c,n}]$, stochastic encoder $E:\mathbf{M}\to \cP(\cX^{n})$ and a POVM $(D_{B,\mathbf{m}})_{\mathbf{m}\in\mathbf{M}}$ on $\cH_{B}^{\otimes n}$. 
\end{definition}
We define the relevant transmission error function, for any cqq broadcast channel $W:\cX\to\cS(\cH_{B}\otimes\cH_{E})$ and $n\in\mathbb{N}$ by
\begin{align*}
	\overline{e}_{B}(\cC,W^{\otimes n}):=\frac{1}{|\mathbf{M}|}\sum_{\mathbf{m}\in\mathbf{M}}\sum_{\mathbf{x}\in\cX^{n}}E(\mathbf{x}|\mathbf{m})\tr(D^{c}_{B,\mathbf{m}}W_{B}^{\otimes n}(\mathbf{x})).
	\end{align*}
 Moreover, we use the security criterion given by
\begin{equation}
I(M_{c};E|M_{1},\sigma_{s,n}),
\end{equation}
where $\sigma_{s,n}$ is the code state defined by
\begin{equation}\label{securitystateTPC}
\sigma_{s,n}:=\frac{1}{|\mathbf{M}|}\sum_{\mathbf{m}\in\mathbf{M}}\ket{\mathbf{m}}\bra{\mathbf{m}}\otimes\sum_{\mathbf{x}\in\cX^{n}}E(\mathbf{x}|\mathbf{m})W_{s}^{\otimes n}(\mathbf{x}).
\end{equation}
Again, we not that the conditional mutual information should be understood given (\ref{conditionalmutual}) and considering ONBs $\{\ket{m_{i}}\}_{m_i\in [M_{i}]}\in\mathbb{C}^{M_{i}}$ for $i\in\{1,c\}$ and $\ket{\mathbf{m}}:=\ket{m_{1}}\otimes\ket{m_{c}}$. Based on this, we define the following achievable rate pairs.
\begin{definition}(Achievable TPC rate pair)
	A pair $(R_{1},R_{c})$ of non-negative numbers is called an achievable TPC rate pair for $\cW$, if for each $\epsilon,\delta>0$, exists an $n_{0}(\epsilon,\delta)\in\mathbb{N}$, such that for all $n>n_{0}$, we find an $(n,M_{1,n},M_{c,n})$ TPC code $\cC=(E(\cdot|\mathbf{m}),D_{B,\mathbf{m}})_{\mathbf{m}\in\mathbf{M}}$ such that
	\begin{enumerate}
		\item $\frac{1}{n}\log M_{i,n}\geq R_{i}-\delta $ ( $i\in\{1,c\}$),
		\item $\sup_{s\in S}\overline{e}_{B}(\cC,W_{s}^{\otimes n})\leq \epsilon$,
		\item $\sup_{s\in S}I(M_{c};E|M_{1},\sigma_{s,n})\leq\epsilon$
	\end{enumerate}
    are simultaneously fulfilled.
\end{definition}
We define the TPC capacity region of $\cW$ by 
\begin{align}
C_{TPC}[\cW]:=\{(R_{1},R_{c})\in\mathbb{R}_{0}^{+}\times\mathbb{R}_{0}^{+}:(R_{1},R_{c}) \text{ is achievable TPC rate for } \cW\}.
\end{align}
To state our theorem, we define the following sub-regions, given finite alphabets $\cV,\mathcal{Y}$ and probability distribution $p=p_{VYX}\in\cP(\cV\times\mathcal{Y}\times\cX^{n})$, with the random variables $V,Y,X$ distributed accordingly. 
\begin{align*}
C^{(1)}\big(\cW,p,n\big)&:=\big\{(R_{1},R_{c})\in\mathbb{R}_{0}^{+}\times\mathbb{R}_{0}^{+}: R_{1}\leq \inf_{s\in S}I(V;B,\omega_{s})\wedge\\& R_{c}\leq\inf_{s\in S}I(Y;B|V,\omega_{s})-\sup_{s\in S}I(Y;E|V,\omega_{s})\big\}.
\end{align*}
with
\begin{equation}\label{evaluationstateTPC}
\omega_{s}:=\sum_{(v,y,\mathbf{x})\in\cV\times\cX\times\cX}p(v,y,\mathbf{x})\ket{v}\bra{v}\otimes\ket{y}\bra{y}\otimes W_{s}^{\otimes n}(\mathbf{x}).
\end{equation}
We can state the following theorem.
\begin{theorem} \label{tpctheorem}
	Let $\cW:=\{W_{s}\}_{s\in S}\subset CQ(\cX,\cH_{B}\otimes\cH_{E})$ be any compound cqq broadcast channel. It holds 
	\begin{align}\label{capacity:regions:tpc}
	C_{TPC}[\cW]&=\cl \bigg(\bigcup_{l=1}^{\infty}\bigcup_{p}\frac{1}{l}C^{(1)}\big(\cW,p,l\big)\bigg).
	\end{align}
	The second union is taken over all $p_{VYX}\in\cP(\cV\times\mathcal{Y}\times\cX^{l})$ such that random variable $V-Y-X$ form a Markov chain and alphabets $\cV$ and $\mathcal{Y}$ are finite. 
\end{theorem}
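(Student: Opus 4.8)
The plan is to reduce Theorem~\ref{tpctheorem} to the already-established BCC case of Theorem~\ref{bcctheorem}, since the TPC scenario is formally a relaxation of the BCC scenario: the only change is that Eve is no longer required to decode the broadcast message, so the constraint $R_0 \le \inf_s I(U;E,\omega_s)$ disappears and $U$ is relabelled $V$. Concretely, I would prove the two inclusions separately.

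\textbf{Achievability (inclusion $\supseteq$).} Fix a block length $l$, a distribution $p_{VYX}$ with Markov chain $V-Y-X$, and a pair $(R_1,R_c)$ in the interior of $\frac{1}{l}C^{(1)}(\cW,p,l)$. The strategy mirrors the two-step construction described in the introduction. First invoke Lemma~\ref{mosonyivariation} (the universal superposition coding lemma) to build, from the distribution $p_{VYX}$, a random superposition codebook whose outer codewords (indexed by the public message $m_1$) are reliably decoded by Bob at any rate below $\inf_s I(V;B,\omega_s)$, and whose inner codewords (indexed by a pair $(m_c,j)$ of a confidential index and a ``junk'' randomization index $j$) are reliably decoded by Bob at any rate below $\inf_s I(Y;B|V,\omega_s)$; this yields the non-secure bidirectional TPC code of Lemma~\ref{lemmanonsecuretpc}. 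Second, apply the privacy-amplification argument (the quantum covering/\cite{minglai}-type bound) on the inner layer: choosing the size of the junk index $j$ to exceed $\sup_s I(Y;E|V,\omega_s)$ (per channel use, times $l$) makes Eve's state nearly independent of $m_c$ given $m_1$, so that $\sup_s I(M_c;E|M_1,\sigma_{s,n}) \to 0$ exponentially, while the confidential rate that survives is $\inf_s I(Y;B|V,\omega_s) - \sup_s I(Y;E|V,\omega_s)$. One then de-randomizes by a standard expurgation/averaging argument to extract a deterministic code meeting Definition~(Achievable TPC rate pair), and finally applies the single-letterization over $l$ and the closure. Since $\frac{1}{l}C^{(1)}$ already has the single-letter form evaluated on $W_s^{\otimes l}$, blocking is automatic.

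\textbf{Converse (inclusion $\subseteq$).} Given an achievable pair $(R_1,R_c)$ with a sequence of $(n,M_{1,n},M_{c,n})$ TPC codes satisfying the three conditions, I would run the standard outer-bound manipulations (as in Section~\ref{sect:outer_bounds} for the BCC case). From Bob's reliable decoding of $m_1$, Fano's inequality gives $\log M_{1,n} \le I(M_1;B^n,\sigma_{s,n}) + n\epsilon_n$; identifying $V$ with $M_1$ and $Y$ with the pair $(M_1,M_c)$ (or the relevant sub-index), and using the chain rule plus the security condition $I(M_c;E|M_1,\sigma_{s,n}) \le \epsilon$, one gets $\log M_{c,n} \le I(Y;B^n|V,\sigma_{s,n}) - I(Y;E^n|V,\sigma_{s,n}) + n\epsilon_n$. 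Taking $\inf_s$ on the Bob terms and $\sup_s$ on the Eve terms (valid because the code works for all $s$ simultaneously), and observing that the auxiliary distribution induced on $(V,Y,X^n)$ is of the required product-free single-block form with the Markov chain $V-Y-X$ automatically satisfied by the code structure, places $(R_1-\delta, R_c-\delta)$ in $\frac{1}{n}C^{(1)}(\cW,p,n)$ for that induced $p$; letting $\epsilon,\delta \to 0$ and taking the closure finishes it.

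\textbf{Main obstacle.} The routine Fano/chain-rule steps are standard; the real work is the achievability direction, and within it the single nontrivial ingredient is the \emph{universality} of the superposition code --- i.e. getting a \emph{single} codebook that simultaneously works for all $s \in S$, including the subtraction of the two infima/suprema in the confidential rate. This is exactly what Lemma~\ref{mosonyivariation} is designed to deliver, so the proof of Theorem~\ref{tpctheorem} itself is mostly bookkeeping: one must check that the privacy-amplification estimate can be applied uniformly over the (possibly infinite) compound set, which is handled by the finite-net approximation recalled in Appendix~\ref{nets}, reducing the $\sup_s$ over $S$ to a maximum over a finite net at the cost of an arbitrarily small rate loss. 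I would therefore expect the bulk of the formal proof to consist of (i) citing Lemma~\ref{lemmanonsecuretpc} for the non-secure layer, (ii) invoking the privacy-amplification lemma on the junk index with the net approximation, and (iii) the de-randomization and closure arguments, with the converse being a direct transcription of the BCC converse with the $I(U;E)$ constraint deleted.
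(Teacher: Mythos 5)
Your proposal follows essentially the same route as the paper: the achievability is obtained by citing Lemma \ref{lemmanonsecuretpc} (the non-secure superposition code, itself a variant of Lemma \ref{lemmanonsecurebcc} with only Bob decoding, built from Lemma \ref{mosonyivariation}), then applying the covering-lemma privacy amplification on the randomization index with the finite-net approximation, de-randomizing, and blocking, exactly as in Lemmas \ref{publiclyenhancedtpc} and \ref{directparttpc}; the converse is the BCC outer-bound argument of Proposition \ref{prop:bcc_converse} with the constraint from Eve's decoding of the broadcast message deleted, which is precisely Proposition \ref{prop:tpc_converse}. No gaps worth flagging.
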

Again, we note Remark \ref{convex:remark}, regarding convexity of the set on the right hand side of (\ref{capacity:regions:tpc}).
\section{Coding for broadcast channel} \label{sect:coding_for_broadcast}
In this section we present coding strategies for BCC and TPC communication scenarios sufficient to achieve each point in the capacity region. We prove appropriate inner bounds on the capacity regions, namely the direct parts of the main theorems presented in the previous section. Here, we begin by some preliminary results, in the statements of which, we make use of typical sets and projections. The use of these objects are standard in classical as well as quantum information theory. The reader will find detailed explanations in \cite{korner}. We begin this section nevertheless, by introducing these objects. Given two probability distributions $p\in\cP(\bar{\cX})$ and $\forall x\in\bar{\cX}$,  $t(\cdot|x)\in\cP(\bar{\cY})$, $n\in\mathbb{N}$, $\delta>0$, we define the following sets. The set of $\delta$-typical sequences in $\bar{\mathcal{X}}^{n}$, is defined by
\begin{equation}\label{typicalset}
T_{p,\delta}^{n}:=\{\mathbf{x}:\forall x\in\bar{\cX},|\frac{1}{n}N(x|\mathbf{x})-p(x)|\leq\delta\text{  }
\wedge \text{ } q(x)=0\iff N(x|\mathbf{x})=0\}
\end{equation}
with $N(x|\mathbf{x})$, the number of occurrences of letter $x$ in word $\mathbf{x}$. Also, the set of conditionally typical sequences in $\bar{\cY}^{n}$, is given by
\begin{align*}
T_{t,\delta}(\mathbf{x}):&=\{\mathbf{y}\in\bar{\cY}^{n}:\forall x\in\bar{\cX},y\in\bar{\cY}:|\frac{1}{n}N(x,y|\mathbf{x},\mathbf{y})-\frac{1}{n}t(y|x)N(x|\mathbf{x})|\leq\delta\text{ and }\\&
t(y|x)=0\iff N(x,y|\mathbf{x},\mathbf{y})=0 \text{  for  } x\in\bar{\cX},y\in\bar{\cY}\}.
\end{align*}
The pruned distributions associated with $p$ and $t(\cdot|x)$ are given by the following respectively. 
\begin{equation}\label{pruned1}
p'_{n,\delta}(\mathbf{x}):=
\begin{cases}
\frac{p^{\otimes n}(\mathbf{x})}{p^{\otimes n}(T_{p,\delta}^{n})}, & \text{if}\ \mathbf{x}\in T_{p,\delta}^{n}\\
0, & \text{otherwise},
\end{cases}
\end{equation} 
and
 \begin{equation}\label{pruned2}
t_{n,\delta}'(\mathbf{y}|\mathbf{x}):=
\begin{cases}
\frac{t^{\otimes n}(\mathbf{y}|\mathbf{x})}{t^{\otimes n}(T_{t,\delta}(\mathbf{x})|\mathbf{x})}, & \text{if}\ \mathbf{y}\in T_{t,\delta}(\mathbf{x})\\
0, & \text{otherwise}.
\end{cases}
\end{equation}
For the remainder of this section, pruned distributions defined above, will be denoted by primed letters indicating the probability distribution, indexed by the number of available copies of the system. For instance the pruned probability distribution related to $r\in\cP(\cX)$, over $T_{r,\delta}^{n}$ will be denoted by $r'_{n,\delta}$. In (\ref{typicalset}), when $\delta=0$, we have the exact type notified by $T_{p}^{n}$. We also define the set of types by 
\begin{equation}\label{setoftypes}
    \cT(\bar{\cX},n):=\{\lambda\in\cP(\bar{\cX}):T_{\lambda}^{n}\neq\emptyset\}.
\end{equation}

The following lemma contains the properties typical projections, that projection operators assigned to typical sets.
\begin{lemma}\label{typicalsets}
	Let $\lambda\in\cP(\cA)$ with $\lambda(x)>0$ for all $x\in\cA\subset\cX$, $\{\rho_{x}\}_{x\in\cX}\subset\cS(\cK_{A})$ and $\delta>0$. For $\mathbf{x}\in T_{\lambda,\delta}^{n}$ with $\mathbf{x}:=(x_{1},\dots,x_{n})$ and $\rho_{\mathbf{x}}:=\bigotimes_{i=1}^{n}\rho_{x_{i}}$. Define
	\begin{equation*}
	\theta:=\sum_{x\in\cX}\lambda(x)\ket{x}\bra{x}^{X}\otimes\rho_{x}.
	\end{equation*} 
	There exist positive constants $\Upsilon(\delta),\Gamma(\delta)$ and $\Delta(\delta)$ depending on $\delta$ and an orthogonal projector $\Pi_{\rho_{\mathbf{x}},\delta}$ such that
	\begin{enumerate}
		\item $\tr(\rho_{\mathbf{x}}\Pi_{\rho_{\mathbf{x}},\delta})\geq 1-2^{-n\Upsilon(\delta)}$,
		\item $\tr(\Pi_{\rho_{\mathbf{x}},\delta})\leq 2^{n(S(A|X,\theta)+\Delta(\delta))}$,
		\item $\Pi_{\rho_{\mathbf{x}},\delta}\rho_{\mathbf{x}}\Pi_{\rho_{\mathbf{x}},\delta}\leq 2^{-n(S(A|X,\theta)+\Gamma(\delta))}\Pi_{\rho_{\mathbf{x}},\delta}$,\\
		Also, let $W:\cY\to\cS(\cK_{B})$ be a cq channel and $r(\cdot|x)\in\cP(\cY)$, for all $x\in\cX$. Define the state
		\begin{equation*}
		\theta':=\sum_{(x,y)\in\cX\times\cY}\lambda(x)\ket{x}\bra{x}\otimes r(y|x)\ket{y}\bra{y}\otimes W(y).
		\end{equation*}
		For $\mathbf{y}\in T_{r,\delta}(\mathbf{x})$, there exist positive constants $\Upsilon'(\delta), \Delta'(\delta),\Gamma'(\delta)$ and an orthogonal projector $\Pi_{W,\mathbf{x},\delta}(\mathbf{y})$, commuting with $W^{\otimes n}(\mathbf{y})$, satisfying 
		\item $\tr[W^{\otimes n}(\mathbf{y})\Pi_{W,\mathbf{x},\delta}(\mathbf{y})]\geq 1-2^{-n\Upsilon'(\delta)}$,
		\item $\tr[\Pi_{W,\mathbf{x},\delta}(\mathbf{y})]\leq 2^{n(S(B|XY,\theta)+\Delta'(\delta))}$,
		\item $\Pi_{W,\mathbf{x},\delta}(\mathbf{y})W^{\otimes n}(\mathbf{y})\Pi_{W,\mathbf{x},\delta}(\mathbf{y})\leq 2^{-n(S(B|XY,\theta')+\Gamma'(\delta))}\Pi_{W,\mathbf{x},\delta}(\mathbf{y})$.
	\end{enumerate}
	Finally, we have the following total conditional subspace projection. For $\rho_{x}=\sum_{y\in\cY}r(y|x)W(y)$, the projection $\Pi_{W,\mathbf{x},\delta}:=\Pi_{\rho_{\mathbf{x}},\delta}$ with properties 1-3, for $\mathbf{y}\in T_{r,\delta}(\mathbf{x})$ also has the following property.
	\begin{equation}\label{totproj}
	\tr(\Pi_{W,\mathbf{x},\delta}W^{\otimes n}(\mathbf{y}))\geq 1-2^{-n\Upsilon"(\delta)},
	\end{equation}
	for some constant $\Upsilon"(\delta)>0$ depending on $\delta$. 
\end{lemma}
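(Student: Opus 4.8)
The plan is to build the typical projections via standard type-counting and spectral-decomposition arguments, treating the "conditional" projections in items 4--6 as ordinary typical projections applied to the *conditional* ensemble $\{W(y)\}_{y}$ along the sequence $\mathbf{x}$, and then to derive the final statement (\ref{totproj}) by comparing the two projectors $\Pi_{\rho_{\mathbf{x}},\delta}$ and $\Pi_{W,\mathbf{x},\delta}(\mathbf{y})$ on the state $W^{\otimes n}(\mathbf{y})$. Concretely, for items 1--3 I would fix $\mathbf{x}\in T_{\lambda,\delta}^{n}$, write each $\rho_{x}=\sum_{j}p_{x}(j)\ket{e_{x,j}}\bra{e_{x,j}}$ in its eigenbasis, so that $\rho_{\mathbf{x}}=\bigotimes_{i}\rho_{x_{i}}$ is diagonal in the product basis $\ket{e_{\mathbf{x},\mathbf{j}}}$ with eigenvalues $\prod_{i}p_{x_{i}}(j_{i})$. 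Define $\Pi_{\rho_{\mathbf{x}},\delta}$ to be the projector onto those $\mathbf{j}$ for which $-\frac{1}{n}\log\prod_i p_{x_i}(j_i)$ lies within $\Delta(\delta)$ of $\sum_{x}\lambda(x)S(\rho_x)=S(A|X,\theta)$; here one uses that the joint type of $(\mathbf{x},\mathbf{j})$ concentrates (law of large numbers / Chernoff bound for i.i.d.-like sums over the coordinates grouped by the value of $x_i$, using $\mathbf{x}\in T_{\lambda,\delta}^{n}$), which gives item 1 with exponentially small error $2^{-n\Upsilon(\delta)}$; items 2 and 3 are then immediate from the defining inequality on the eigenvalues (dimension bound by summing the $\leq 2^{n(S+\Delta)}$ eigenvalues each $\geq 2^{-n(S+\Delta)}$ — more carefully one gets the stated $\Delta,\Gamma$ after tracking the $\delta$-slack). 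The constants $\Upsilon,\Gamma,\Delta$ can be made explicit, and all three tend to $0$ as $\delta\to 0$ with $\Upsilon(\delta)>0$ for $\delta>0$ small.

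For items 4--6 I would apply exactly the same construction, but to the ensemble obtained by conditioning: group the coordinates $i$ according to the pair $(x_i,y_i)$, use $\mathbf{y}\in T_{r,\delta}(\mathbf{x})$ so that the empirical joint type of $(\mathbf{x},\mathbf{y})$ is $\delta$-close to $\lambda(x)r(y|x)$, and define $\Pi_{W,\mathbf{x},\delta}(\mathbf{y})$ as the spectral projector of $W^{\otimes n}(\mathbf{y})=\bigotimes_i W(y_i)$ onto eigenvalues near $2^{-nS(B|XY,\theta')}$, where $S(B|XY,\theta')=\sum_{x,y}\lambda(x)r(y|x)S(W(y))$. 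This projector commutes with $W^{\otimes n}(\mathbf{y})$ by construction, and items 4--6 follow by the same concentration and counting estimates as items 1--3, now with primed constants.

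The genuinely new ingredient is the final claim (\ref{totproj}): we must show that the *unconditional* projector $\Pi_{\rho_{\mathbf{x}},\delta}$ (built from $\rho_x=\sum_y r(y|x)W(y)$) still captures almost all of the mass of $W^{\otimes n}(\mathbf{y})$ when $\mathbf{y}\in T_{r,\delta}(\mathbf{x})$, even though $W^{\otimes n}(\mathbf{y})\neq\rho_{\mathbf{x}}$. The approach I would take is the standard "gentle operator / projector sandwich" argument: first note $\tr(\rho_{\mathbf{x}}\Pi_{\rho_{\mathbf{x}},\delta})\geq 1-2^{-n\Upsilon(\delta)}$ by item 1, and that $\Pi_{\rho_{\mathbf{x}},\delta}$ also nearly contains the conditionally typical projector, i.e. on $T_{r,\delta}(\mathbf{x})$ one has $\Pi_{W,\mathbf{x},\delta}(\mathbf{y})\leq$ (something close to) $\Pi_{\rho_{\mathbf{x}},\delta}$ up to an exponentially small correction — this is where one compares the eigenvalue windows: an eigenvector of $W^{\otimes n}(\mathbf{y})$ in the conditionally typical window has, by a type-counting estimate using $\mathbf{y}\in T_{r,\delta}(\mathbf{x})$, an $\rho_{\mathbf{x}}$-eigenvalue (after expanding $\rho_{x}=\sum_y r(y|x)W(y)$ and lower-bounding one term) in the unconditional typical window with all but exponentially small probability. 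Then
\[
\tr(\Pi_{W,\mathbf{x},\delta}W^{\otimes n}(\mathbf{y}))\ \geq\ \tr(\Pi_{W,\mathbf{x},\delta}(\mathbf{y})\,W^{\otimes n}(\mathbf{y}))\ -\ \text{(exp.\ small)}\ \geq\ 1-2^{-n\Upsilon'(\delta)}-\text{(exp.\ small)},
\]
which gives (\ref{totproj}) with $\Upsilon''(\delta)$ the minimum of the relevant exponents (shrunk slightly). I expect this comparison of the two eigenvalue windows — making precise the claim that conditional typicality of $\mathbf{y}$ given $\mathbf{x}$ forces $W^{\otimes n}(\mathbf{y})$ to live (up to exponentially small weight) inside the unconditional typical subspace of $\rho_{\mathbf{x}}$ — to be the main technical obstacle; everything else is bookkeeping with types and Chernoff bounds. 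All such facts are classical (see \cite{korner} for the type-theoretic estimates), so I would cite them rather than reproving them in full.
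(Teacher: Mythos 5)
Your construction for items 1--6 is sound and standard: spectral windows of width $O(\delta)$ around $S(A|X,\theta)$ resp.\ $S(B|XY,\theta')$, concentration of $-\tfrac{1}{n}\log$ of the eigenvalues with the coordinates grouped by letter, and counting. Note that the paper itself does not reprove these facts at all --- it cites Lemma 14 of \cite{simcomp} for items 1--3 and invokes the same concatenation argument together with inequalities (4)--(7) of \cite{minglaithesis} for items 4--6 and (\ref{totproj}) --- so reproving from scratch is a legitimate, if different, route. The problem is precisely the one step you identify as the ``genuinely new ingredient''.

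Your sketch for (\ref{totproj}) does not work as stated. In general $W^{\otimes n}(\mathbf{y})$ and $\rho_{\mathbf{x}}$ do not commute, so ``an eigenvector of $W^{\otimes n}(\mathbf{y})$ \dots\ has an $\rho_{\mathbf{x}}$-eigenvalue'' is not meaningful, and no operator domination of the form $\Pi_{W,\mathbf{x},\delta}(\mathbf{y})\lesssim\Pi_{\rho_{\mathbf{x}},\delta}$ is available (it is the trace overlap, not a projector inclusion, that holds). The repair you propose --- expand $\rho_{x}=\sum_{y}r(y|x)W(y)$ and lower-bound by one term --- only yields, for an eigenvector $\ket{v}$ of $W^{\otimes n}(\mathbf{y})$ in the conditionally typical window, the bound $\bra{v}\rho_{\mathbf{x}}\ket{v}\geq 2^{-n(S(B|XY,\theta')+H(Y|X)+O(\delta))}$. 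That is a bound on an expectation value, not on the spectral weight of $\ket{v}$ in the typical window; and since $S(A|X,\theta)\leq S(B|XY,\theta')+H(Y|X)$ it sits \emph{below} the window centred at $2^{-nS(A|X,\theta)}$, so it cannot force $\bra{v}\Pi_{\rho_{\mathbf{x}},\delta}\ket{v}$ to be large (the per-eigenvector window comparison is valid only when all the $W(y)$ commute). The correct argument pinches $W^{\otimes n}(\mathbf{y})$ in the eigenbasis of $\rho_{\mathbf{x}}$: writing $\rho_{x}=\sum_{j}p_{x}(j)\ket{e_{x,j}}\bra{e_{x,j}}$ and $q_{x,y}(j):=\bra{e_{x,j}}W(y)\ket{e_{x,j}}$, one has the key identity $\sum_{y}r(y|x)q_{x,y}(j)=p_{x}(j)$, hence for $(\mathbf{x},\mathbf{y})$ jointly typical the random variable $-\tfrac{1}{n}\log p_{\mathbf{x}}(\mathbf{J})$, with $\mathbf{J}$ distributed according to $\prod_{i}q_{x_{i},y_{i}}$, has mean $\tfrac{1}{n}\sum_{i}\tr\big(W(y_{i})(-\log\rho_{x_{i}})\big)$ within $O(\delta)$ of $\sum_{x}\lambda(x)S(\rho_{x})=S(A|X,\theta)$ and concentrates exponentially by Hoeffding (the summands are bounded since only pairs with $r(y|x)>0$ occur in $T_{r,\delta}(\mathbf{x})$). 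This is exactly the statement $\tr\big(\Pi_{\rho_{\mathbf{x}},\delta}W^{\otimes n}(\mathbf{y})\big)\geq 1-2^{-n\Upsilon''(\delta)}$, provided the window width $\Delta(\delta)$ is chosen to absorb the $O(\delta)$ shift; it is, in essence, inequality (7) of \cite{minglaithesis} that the paper invokes. Replace your window-comparison step by this pinching-plus-classical-concentration argument (or cite it) and the rest of your plan goes through.
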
 
\begin{proof}
    Properties 1-3 result directly from Lemma 14 \cite{simcomp}. Properties 4-6 and (\ref{totproj}), result from applying the same concatenation arguments as in the proof of Lemma 14 \cite{simcomp}, on inequalities (4)-(7) from \cite{minglaithesis}. 
\end{proof}
A crucial ingredient for the achievablity proofs in this paper is Lemma \ref{mosonyivariation} below. It states existence of certain universal random codes for cq channels given a "typical word".
\begin{lemma}\label{mosonyivariation}
Let $\{W_{s}\}_{s\in S}\subset CQ(\cY,\cK_{B})$ be any set of cq channels, $q\in\cP(\cX)$ and $r(\cdot|x)\in\cP(\cY)$ for each $x\in\cX$.
	For $\delta>0$, there exists $n_{0}\in\mathbb{N}$, such that for $n>n_{0}$, for each $\mathbf{x}\in T_{q,\delta}^{n}$, there exists a map $y:(y_{1},\dots,y_{M})\mapsto(\Lambda_{1}(y)\dots,\Lambda_{M}(y))$, such that $(\Lambda_{m}(y))_{m\in [M]}\subset\cL(\cK_{B}^{\otimes n})$ is a POVM and for any family $Y:=(Y_{1},\dots,Y_{M})$ of random variables, distributed i.i.d according to $r'_{n,\delta}(\cdot|\mathbf{x})$, namely the pruned distribution of $r(\cdot|x)$ (see (\ref{pruned2})),  we have
	\begin{align*}
	\mathbb{E}_{Y}\big[\sup_{s\in S}&\frac{1}{M}\sum_{m\in[M]}\tr(W_{s}^{\otimes n}(Y_{m})\Lambda^{c}_{m}(Y))\big] \ \leq\epsilon_{n}
	\end{align*}
	with $\epsilon_{n}\to0$ exponentially and \begin{equation*}
	    \frac{1}{n}\log M\geq \inf_{s\in S}I(Y;B|X,\sigma_{s})-c\delta,
	\end{equation*}
	with some constant $c>0$ and
	\begin{equation*}
	    \sigma_{s}:=\sum_{x\in\cX}q(x)\ket{x}\bra{x}\otimes\sum_{y\in\cY}r(y|x)\ket{y}\bra{y}\otimes W_{s}(y). 
	\end{equation*}
	    \end{lemma}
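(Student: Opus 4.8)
The plan is to establish Lemma~\ref{mosonyivariation} by a universal random-coding argument of Hayashi--Nagaoka type carried out conditionally on the given word $\mathbf{x}\in T_{q,\delta}^{n}$ --- essentially lifting the universal coding theorem for compound cq channels of \cite{mosonyi} to the superposition situation (this is the content of Appendix~\ref{mosonyiapp}). First I would use the net lemma of Appendix~\ref{nets} to reduce to a \emph{finite} uncertainty set: replace $\{W_{s}\}_{s\in S}$ by a finite subfamily $\{W_{s}\}_{s\in S_{\tau}}$ at a scale $\tau=\tau(\delta)$ chosen small relative to $\delta$, so that (i) a decoder with exponentially small error for the finite family keeps exponentially small error for all of $S$ (each $W_{s}$ is dominated by a net channel up to a factor subexponential in $n$), and (ii) by continuity of the conditional mutual information $\inf_{s\in S_{\tau}}I(Y;B|X,\sigma_{s})\ge\inf_{s\in S}I(Y;B|X,\sigma_{s})-O(\delta)$, which is absorbed into the $c\delta$ slack. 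From now on $S$ is finite (its cardinality may grow subexponentially in $n$, costing only $o(1)$ in the rate).

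Next, fix $\mathbf{x}\in T_{q,\delta}^{n}$, so that any $\mathbf{y}$ drawn from $r'_{n,\delta}(\cdot|\mathbf{x})$ lies in $T_{r,\delta}(\mathbf{x})$ and the conditionally typical projectors $\Pi_{W_{s},\mathbf{x},\delta}(\mathbf{y})$ and total projectors $\Pi_{W_{s},\mathbf{x},\delta}$ of Lemma~\ref{typicalsets} (with property (\ref{totproj})) are at our disposal. For a codebook $y=(\mathbf{y}_{1},\dots,\mathbf{y}_{M})$ I would construct, following \cite{mosonyi}, a single channel-\emph{independent} family of decoding operators $E_{\mathbf{y}}$ --- an information-spectrum / pretty-good-measurement type construction built from the received-state/universal-state pair, relative to a suitable universal reference state $\theta_{n}$ on $\cK_{B}^{\otimes n}$ --- and take the POVM to be the Hayashi--Nagaoka square-root operators $\Lambda_{m}(y):=\big(\sum_{m'}E_{\mathbf{y}_{m'}}\big)^{-1/2}E_{\mathbf{y}_{m}}\big(\sum_{m'}E_{\mathbf{y}_{m'}}\big)^{-1/2}$ (completed arbitrarily). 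Since neither $E_{\mathbf{y}}$ nor $\theta_{n}$ depends on $s$, the same POVM works for every channel; this is the map $y\mapsto(\Lambda_{m}(y))_{m}$ claimed in the lemma.

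For the error bound, fix $s$ and apply the Hayashi--Nagaoka operator inequality: $\tr\!\big(W_{s}^{\otimes n}(\mathbf{y}_{m})\Lambda_{m}^{c}(y)\big)\le 2\,\tr\!\big(W_{s}^{\otimes n}(\mathbf{y}_{m})(\mathbbm{1}-E_{\mathbf{y}_{m}})\big)+4\sum_{m'\neq m}\tr\!\big(W_{s}^{\otimes n}(\mathbf{y}_{m})E_{\mathbf{y}_{m'}}\big)$. Averaging over $m$ and over the i.i.d.\ codebook $Y\sim r'_{n,\delta}(\cdot|\mathbf{x})$, the ``miss'' term is controlled using the pruning estimate together with properties $1$, $4$ and (\ref{totproj}) of Lemma~\ref{typicalsets} (gentle-operator and union-bound arguments), giving exponential decay uniformly over $\mathbf{x}\in T_{q,\delta}^{n}$; the ``confusion'' term equals, up to the pruning factor $1+2^{-n\gamma}$, $M$ times $\tr\!\big(\rho_{s,\mathbf{x}}\,\mathbb{E}_{Y}[E_{Y}]\big)$ where $\rho_{s,\mathbf{x}}:=\bigotimes_{i=1}^{n}\sum_{y}r(y|x_{i})W_{s}(y)$, and this is bounded by a R\'enyi-divergence expression between $W_{s}^{\otimes n}$ and $\theta_{n}$. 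Additivity of the R\'enyi divergence over the $n$ tensor factors (grouped by the type of $\mathbf{x}$) and its continuity at $\alpha\uparrow 1$ turn this into $2^{-n(I(Y;B|X,\sigma_{s})-c\delta)}$ uniformly in $s$; hence for $\frac1n\log M\le\inf_{s}I(Y;B|X,\sigma_{s})-c\delta$ the whole bound is at most $\epsilon_{n}$ with $\epsilon_{n}\to 0$ exponentially, and summing the per-$s$ estimates over the finite $S$ (a subexponential loss) yields the claimed $\sup_{s\in S}$ bound.

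The main obstacle is attaining universality without sacrificing rate. The channel-dependent projectors of Lemma~\ref{typicalsets} cannot naively be combined into a single channel-independent projector over the net: the span of the conditionally typical subspaces has rank $\sim 2^{n\sup_{s}S(B|XY,\sigma_{s})}$ and the span of the output-typical subspaces drives the reference dimension up to $\sim 2^{n\sup_{s}S(B|X,\sigma_{s})}$, so such an analysis would deliver only the strictly weaker --- and possibly vacuous --- rate $\inf_{s}S(B|X,\sigma_{s})-\sup_{s}S(B|XY,\sigma_{s})$. Recovering the full $\inf_{s}I(Y;B|X,\sigma_{s})$ forces one to estimate the confusion term directly through R\'enyi divergences relative to a single well-chosen universal state $\theta_{n}$ (this is precisely where the advertised ``properties of R\'enyi entropies'' enter), and the delicate points are the explicit construction of $\theta_{n}$, the uniformity in $s$ of the R\'enyi estimate, and keeping all $\delta$- and net-dependent corrections additive and vanishing with $\delta$.
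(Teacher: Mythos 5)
There is a genuine gap: the one thing the lemma actually requires --- an explicit channel-independent decoder achieving the rate $\inf_{s}I(Y;B|X,\sigma_{s})$ --- is never constructed in your proposal. You correctly diagnose that naive typical-projector decoding over the net only yields the weaker rate, and you then posit ``a suitable universal reference state $\theta_{n}$'' and test operators $E_{\mathbf{y}}$ whose existence and uniform-in-$s$ R\'enyi estimates you yourself list as the unresolved delicate points. The paper closes exactly this gap with a simple device you do not mention: it applies the Hayashi--Nagaoka random-coding bound (Lemma \ref{Lemma4}) directly to the \emph{averaged} channel $\overline{W}_{n}:=\frac{1}{|S_{n}|}\sum_{s\in S_{n}}W_{s}^{\otimes n}$ with the pruned input $r'_{n,\delta}(\cdot|\mathbf{x})$, so the decoder is channel-independent by construction; per-channel errors are then recovered from the averaged error at the cost of a factor $|S_{n}|$, and the exponent is controlled by bounding $Q_{\alpha}$ of the mixed state (pruning factor via operator monotonicity, additivity of $\chi_{\alpha}$ over the tensor factors grouped by the type of $\mathbf{x}$, and Lemma \ref{lemma:appendix_mosonyi_continuity} to pass $\alpha\nearrow 1$), which stays strictly positive because the net parameter $\nu$ is chosen small compared to that exponent. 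Without this (or an equivalent construction of $\theta_{n}$, e.g.\ the output of $\overline{W}_{n}$), your argument is a plan rather than a proof.

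A secondary but real problem is your net reduction. You take $\tau=\tau(\delta)$ fixed in $n$ and transfer the error bound from the net to all of $S$ by claiming each $W_{s}$ is ``dominated by a net channel up to a subexponential factor''; trace-norm closeness of cq channels gives no such operator domination (support mismatches), and the standard transfer via Lemma \ref{netslemma} gives only $\|W_{s}^{\otimes n}(\mathbf{x})-W_{s'}^{\otimes n}(\mathbf{x})\|_{1}\le 2n\tau$, which does not vanish for fixed $\tau$. The paper therefore takes $\tau_{n}=2^{-n\nu/2}$, accepting $\log|S_{n}|=O(n\nu)$ and beating it with the R\'enyi exponent. Also, since the net of Lemma \ref{netslemma} is a subset of $\cW$, the rate comparison $\inf_{s\in S_{n}}\ge\inf_{s\in S}$ is automatic and no continuity of the conditional mutual information in the channel is needed, contrary to your step (ii).
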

	    \begin{proof}
	    We present a full argument in Appendix \ref{mosonyiapp}.
	    \end{proof}
	     
The following statement is an immediate consequence of the above, for the case $|\cX|=1$. We include this statement for clarity of reference later on. 
\begin{lemma}\label{corollary}
	Let $\{W_{s}\}_{s\in S}\subset CQ(\cY,\cK_{B})$ be any set of cq channels and $r\in\cP(\cY)$. For $\delta>0$, there exists $n_{0}$, such that for $n>n_{0}$, there exists a map $y:(y_{1},\dots,y_{M})\mapsto(\Lambda_{1}(y),\dots,\Lambda_{M}(y))$, such that $(\Lambda_{m}(y))_{m\in[M]}$ is a POVM and for any  family $Y:=(Y_{1},\dots,Y_{M})$ of random variables, distributed i.i.d according to $r'_{n,\delta}$, namely the pruned distribution of $r$ (see (\ref{pruned1})), we have
	\begin{align*}
	\mathbb{E}_{Y}\big[\sup_{s\in S}&\frac{1}{M}\sum_{m\in[M]}\tr(W_{s}^{\otimes n}(Y_{m})\Lambda_{m}^{c}(Y))\big]\leq\epsilon_{n},
	\end{align*}
	with $\epsilon_{n}\to 0$ exponentially and 
	\begin{equation*}
	    \frac{1}{n}\log M\geq\inf_{s\in S }I(Y;B,\sigma_{s})-c\delta
	\end{equation*}
	for some constant $c>0$ and 
	\begin{equation*}
	    \sigma_{s}:=\sum_{y\in\cY}r(y)\ket{y}\bra{y}\otimes W_{s}(y). 
	\end{equation*}
\end{lemma}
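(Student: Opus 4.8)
The plan is to obtain Lemma \ref{corollary} as the special case $|\cX|=1$ of Lemma \ref{mosonyivariation}. Concretely, I would fix an arbitrary one-element alphabet $\cX=\{x_0\}$ and take $q\in\cP(\cX)$ to be the (unique) point mass $q(x_0)=1$, together with the conditional distribution $r(\cdot\,|\,x_0):=r\in\cP(\cY)$. Since $|\cX|=1$, the typical set $T_{q,\delta}^{n}$ consists of the single word $\mathbf{x}_0=(x_0,\dots,x_0)$, so there is no genuine dependence on $\mathbf{x}$ left; applying Lemma \ref{mosonyivariation} to this $\mathbf{x}_0$ produces exactly the claimed map $y\mapsto(\Lambda_1(y),\dots,\Lambda_M(y))$ and the bound on $\mathbb{E}_Y[\sup_{s\in S}\frac1M\sum_m \tr(W_s^{\otimes n}(Y_m)\Lambda_m^c(Y))]\le\epsilon_n$ with $\epsilon_n\to0$ exponentially, where the $Y_m$ are i.i.d.\ according to $r'_{n,\delta}(\cdot\,|\,\mathbf{x}_0)$, which is just the pruned distribution $r'_{n,\delta}$ of $r$ in the sense of (\ref{pruned1}).

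The only remaining point is to identify the rate quantity. In Lemma \ref{mosonyivariation} the rate is $\frac1n\log M\ge \inf_{s\in S} I(Y;B|X,\sigma_s)-c\delta$ with $\sigma_s=\sum_{x}q(x)\ket{x}\bra{x}\otimes\sum_{y}r(y|x)\ket{y}\bra{y}\otimes W_s(y)$. With $|\cX|=1$ the register $X$ carries no information, so $\sigma_s$ reduces to $\ket{x_0}\bra{x_0}\otimes\sum_{y}r(y)\ket{y}\bra{y}\otimes W_s(y)$ and, by the definition (\ref{conditionalmutual}) of conditional mutual information as a $q$-average over $x$ (here a trivial average over one term with weight one), $I(Y;B|X,\sigma_s)=I(Y;B,\sigma_s)$ evaluated on $\sigma_s=\sum_{y}r(y)\ket{y}\bra{y}\otimes W_s(y)$, which is precisely the state displayed in Lemma \ref{corollary}. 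Hence $\frac1n\log M\ge\inf_{s\in S}I(Y;B,\sigma_s)-c\delta$, as required, with the same constant $c$ and the same threshold $n_0$ (or any larger one).

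There is essentially no obstacle here: the lemma is a pure corollary, and the proof amounts to checking that the reductions $T_{q,\delta}^{n}=\{\mathbf{x}_0\}$, $r'_{n,\delta}(\cdot\,|\,\mathbf{x}_0)=r'_{n,\delta}$, and $I(Y;B|X,\sigma_s)=I(Y;B,\sigma_s)$ all hold when $|\cX|=1$. If one wanted to be fully self-contained one could alternatively just say that the proof of Lemma \ref{mosonyivariation} in Appendix \ref{mosonyiapp} specializes verbatim, but invoking the already-proved lemma is cleaner. The one thing to be careful about is the bookkeeping of the $\delta$-dependent constant $c$ and of $n_0$, so that the statement is literally an instance of the preceding one; this is routine.

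\begin{proof}
Apply Lemma \ref{mosonyivariation} with the one-element alphabet $\cX=\{x_0\}$, with $q\in\cP(\cX)$ the point mass at $x_0$, and with $r(\cdot\,|\,x_0):=r$. Then $T_{q,\delta}^{n}=\{\mathbf{x}_0\}$ for $\mathbf{x}_0=(x_0,\dots,x_0)$, the pruned conditional distribution $r'_{n,\delta}(\cdot\,|\,\mathbf{x}_0)$ coincides with the pruned distribution $r'_{n,\delta}$ of $r$ from (\ref{pruned1}), and the map and POVM supplied by Lemma \ref{mosonyivariation} for $\mathbf{x}_0$ give the desired $y\mapsto(\Lambda_1(y),\dots,\Lambda_M(y))$ with
\begin{equation*}
\mathbb{E}_{Y}\big[\sup_{s\in S}\tfrac{1}{M}\textstyle\sum_{m\in[M]}\tr(W_{s}^{\otimes n}(Y_{m})\Lambda_{m}^{c}(Y))\big]\leq\epsilon_{n},
\end{equation*}
with $\epsilon_n\to0$ exponentially. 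Since $|\cX|=1$, the $X$-register in $\sigma_s$ is trivial, so $\sigma_s=\sum_{y\in\cY}r(y)\ket{y}\bra{y}\otimes W_s(y)$ and, by (\ref{conditionalmutual}), $I(Y;B|X,\sigma_s)=I(Y;B,\sigma_s)$. Therefore $\tfrac1n\log M\ge\inf_{s\in S}I(Y;B,\sigma_s)-c\delta$, which is the claim.
\end{proof}
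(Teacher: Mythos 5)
Your proposal is correct and is exactly the paper's route: the paper states this lemma as an immediate consequence of Lemma \ref{mosonyivariation} in the case $|\cX|=1$, and your verification that $T_{q,\delta}^{n}$ collapses to the single constant word, that $r'_{n,\delta}(\cdot|\mathbf{x}_0)$ coincides with the pruned distribution of $r$ from (\ref{pruned1}), and that $I(Y;B|X,\sigma_s)=I(Y;B,\sigma_s)$ via (\ref{conditionalmutual}) supplies precisely the bookkeeping the paper leaves implicit.
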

    In Section \ref{Bcccodes} and Section \ref{TPCcodes}, we show that the above statements give us the desired codes for transmission of public and common messages. These statements generalize the coding results from \cite{mosonyi} to include pruned input distributions rather than distributions of n-fold product form.\\
    Finally, to obtain codes for transmission of confidential messages, we perform privacy amplification arguments on the public part of the codebook achieved from Lemma \ref{mosonyivariation} (cf.\cite{minglai}). To do so, we need the following inequality.
    \begin{theorem}[\cite{ahlswede02}, Theorem 19]\label{coveringlemma}
	Let $\mu > 0$, $ \epsilon \in (0, \frac{1}{2})$ be positive numbers and $X_1,\dots,X_L$ an independent and identically distributed family of positive semi-definite random matrices on $\mathbbm{C}^d$ such that the bounds $X \leq \mu\mathbbm{1}_{\bbmC^d}$ and $\mathbbm{E}X \geq \epsilon \mathbbm{1}_{\bbmC^d}$ apply. It holds 
	\begin{align*}
	\Pr\left(\left\|\frac{1}{L}\sum_{i=1}^L X_i - \mathbbm{E} X\right\|_{1}  \ > \ \epsilon\right) \ \leq \ 2 \cdot d \cdot \exp\left(-L \frac{\epsilon^3 }{2d\mu \ln 2}\right)
	\end{align*}
\end{theorem}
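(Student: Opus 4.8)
The statement is the Ahlswede--Winter operator Chernoff bound, and the plan is to reconstruct it by the matrix Laplace-transform (exponential Markov) method. Write $\bar X:=\frac1L\sum_{i=1}^{L}X_i$ and $A:=\mathbbm{E}X$, so that $\epsilon\mathbbm{1}\leq A\leq\mu\mathbbm{1}$ by hypothesis. First I would reduce the trace-norm deviation to two one-sided spectral tail events: since $\bar X-A$ is Hermitian on $\mathbbm{C}^d$ it has at most $d$ eigenvalues, so $\|\bar X-A\|_{1}>\epsilon$ forces $\|\bar X-A\|_{\infty}>\epsilon/d$, and hence the event $\{\|\bar X-A\|_{1}>\epsilon\}$ is contained in $\{\lambda_{\max}(\bar X-A)>\epsilon/d\}\cup\{\lambda_{\max}(A-\bar X)>\epsilon/d\}$. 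It therefore suffices to bound each of these two events by $d\exp(-L\epsilon^{3}/(2d\mu\ln 2))$ and apply the union bound, which produces the prefactor $2d$.

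For the first event fix $\theta>0$ and apply the operator Markov inequality $\mathbbm{1}\{\lambda_{\max}(Z)>a\}\leq e^{-\theta a}\,\tr\, e^{\theta Z}$, valid for any Hermitian $Z$: taking $Z=\sum_{i=1}^{L}(X_i-A)$ and $a=L\epsilon/d$ gives $\Pr(\lambda_{\max}(\bar X-A)>\epsilon/d)\leq e^{-\theta L\epsilon/d}\,\mathbbm{E}\,\tr\, e^{\theta\sum_i(X_i-A)}$. I would then decouple the i.i.d.\ sum by iterating the Golden--Thompson inequality $\tr\, e^{H+K}\leq\tr\,(e^{H}e^{K})$ together with independence of $X_1,\dots,X_L$, which yields $\mathbbm{E}\,\tr\, e^{\theta\sum_i(X_i-A)}\leq\tr\big[(\mathbbm{E}\,e^{\theta(X-A)})^{L}\big]\leq d\,\|\mathbbm{E}\,e^{\theta(X-A)}\|_{\infty}^{L}$. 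The remaining input is a bound on the matrix moment generating function: from $0\leq X\leq\mu\mathbbm{1}$ and convexity of $t\mapsto e^{\theta t}$ on $[0,\mu]$ one gets the operator inequality $e^{\theta X}\leq\mathbbm{1}+\frac{e^{\theta\mu}-1}{\mu}X$, so that (after passing to the normalized variables $A^{-1/2}X A^{-1/2}$, which are bounded by $(\mu/\epsilon)\mathbbm{1}$ and have mean $\mathbbm{1}$, so that the two resulting factors commute) every eigenvalue of the relevant matrix MGF is at most a scalar Poisson-type expression in $\theta$ controlled by $\lambda_{\min}(A)\geq\epsilon$ and the range parameter. Collecting everything gives
\[
\Pr(\lambda_{\max}(\bar X-A)>\epsilon/d)\ \leq\ d\,\exp\!\big(L\,g(\theta)-\theta L\epsilon/d\big),
\]
with a convex rate function $g$ satisfying $g(0)=g'(0)=0$; the lower-tail event is handled identically with $\theta$ replaced by $-\theta$, the bound $0\leq X\leq\mu\mathbbm{1}$ again controlling $\mathbbm{E}\,e^{-\theta(X-A)}$.

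Finally I would optimize over $\theta>0$ and simplify the rate function, using $\epsilon<\frac12$ and $\epsilon\leq\|A\|_{\infty}\leq\mu$ to remain in the regime where the elementary estimates $e^{x}\leq 1+x+\dots$ are applied; the resulting Chernoff exponent is, after converting natural logarithms to base two, of the claimed form $L\epsilon^{3}/(2d\mu\ln 2)$ --- the cube of $\epsilon$ arising from the product of a squared relative-deviation factor and the linear factor $\lambda_{\min}(A)\geq\epsilon$, and the $d$ in the denominator being exactly the price of the initial trace-norm-to-operator-norm reduction.

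The only genuinely nontrivial steps are the two matrix-exponential manipulations: the iterated Golden--Thompson decoupling of the i.i.d.\ product (which replaces the scalar identity $\mathbbm{E}e^{\theta\sum Z_i}=(\mathbbm{E}e^{\theta Z})^L$, false for noncommuting matrices), and the operator-convexity bound on $\mathbbm{E}\,e^{\theta(X-A)}$ together with the reduction making its two factors functions of a single operator so that the scalar Chernoff optimization goes through verbatim. Everything else --- the norm reduction, the exponential Markov step, and the optimization over $\theta$ --- is routine bookkeeping.
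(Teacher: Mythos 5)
You should first note that the paper contains no proof of this statement at all: it is imported verbatim from Ahlswede and Winter (\cite{ahlswede02}, Theorem 19) and used as a black box, so there is no in-paper argument to compare yours with. Your reconstruction follows the standard Ahlswede--Winter Laplace-transform route (trace-norm to operator-norm reduction at the price of a factor $d$, exponential Markov/Bernstein trick, Golden--Thompson decoupling, operator-convexity bound on the matrix moment generating function, optimization over $\theta$), which is indeed the method behind the original operator Chernoff bound, so the overall plan is the right one. One cosmetic repair: ``iterating'' Golden--Thompson to reach $\tr\big[(\mathbb{E}\,e^{\theta(X-A)})^{L}\big]$ is not literally valid, since Golden--Thompson fails for three or more factors; the correct iteration peels one factor per step using $\tr(BC)\leq\|C\|_{\infty}\tr(B)$ for $B\geq 0$, which yields exactly the bound $d\,\|\mathbb{E}\,e^{\theta(X-A)}\|_{\infty}^{L}$ that you then use, so this is easily fixed.

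The substantive gap is the last step, which you assert rather than perform. With the reduction as you set it up --- additive threshold $\epsilon/d$ for $\lambda_{\max}(\bar X-A)$, then conjugation by $A^{-1/2}$, which degrades the threshold to $\epsilon/(d\mu)$ in the worst case while the conjugated variables are only bounded by $(\mu/\epsilon)\mathbbm{1}$ --- the straightforward Chernoff/Bernstein optimization produces an exponent of order $L\epsilon^{3}/(d^{2}\mu^{3})$, not the claimed $L\epsilon^{3}/(2d\mu\ln 2)$. That is a bound of the same qualitative shape but strictly weaker, and here the constant is not decorative: it feeds the choice of $L_{n}$ in the privacy-amplification step (see the application of Theorem \ref{coveringlemma} leading to (\ref{security}) and the cancellation in (\ref{ded})), so a weaker denominator would force a larger $L_{n}$ and cut into the achievable secrecy rate. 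So the one sentence ``the resulting Chernoff exponent is of the claimed form'' is precisely the point that needs to be computed, not asserted; to recover the stated constant you must carry out the finer relative-deviation bookkeeping of Ahlswede and Winter's own proof (bounding the events $\bar X\not\leq(1+\eta)A$ and $\bar X\not\geq(1-\eta)A$ with $\eta$ tied to $\epsilon$), or simply cite \cite{ahlswede02} as the paper does.
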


Equipped with these preliminary results, we prove the direct parts of the capacity theorems for BCC and TPC in the following two subsections. 
\subsection{ BCC codes}\label{Bcccodes}
In this section, we prove the following lemma.  
\begin{lemma}\label{directpartbcc}
	Let $\cW:=\{W_{s}\}_{s\in S}\subset CQ(\cX,\cH_{B}\otimes\cH_{E})$ be any compound cqq broadcast channel. It holds 
\begin{align*}
C_{BCC}[\cW]\supset\cl\left( \bigcup_{l=1}^{\infty}\bigcup_{p}\frac{1}{l}\hat{C}^{(1)}\big(\cW,p,l\big)\right),
\end{align*}
	where the second union is taken over all $p_{UYX}\in\cP(\cU\times\mathcal{Y}\times\cX^{l})$ such that random variable $U-Y-X$ form a Markov chain and alphabets $\cU$ and $\mathcal{Y}$ are finite. 
\end{lemma}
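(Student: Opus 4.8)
The plan is to show that every point in $\frac{1}{l}\hat{C}^{(1)}(\cW,p,l)$ is an achievable BCC rate pair; since $C_{BCC}[\cW]$ is closed by definition, taking the union over $l$ and $p$ and then the closure yields the claimed inclusion. By the usual blocking argument it suffices to treat $l=1$: a distribution $p_{UYX}\in\cP(\cU\times\cY\times\cX^l)$ with the Markov chain $U-Y-X$ is handled by regarding blocks of $l$ channel uses as a single use of the channel $W_s^{\otimes l}$, so from now on I fix $p=p_{UYX}\in\cP(\cU\times\cY\times\cX)$ with $U-Y-X$ Markov, and a pair $(R_0,R_c)$ with $R_0 < \inf_s \min\{I(U;B,\omega_s),I(U;E,\omega_s)\}$ and $R_c < \inf_s I(Y;B|U,\omega_s) - \sup_s I(Y;E|U,\omega_s)$; strict inequalities are enough by the closure and the $\delta$-slack in Definition \ref{bccratepairdef}.

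\textbf{Step 1: the non-secure bidirectional code.} First I would invoke the superposition coding result. Apply Lemma \ref{corollary} to the marginal cq channels $\{W_{s,B}\}_{s\in S}$ and $\{W_{s,E}\}_{s\in S}$ with input distribution $p_U$ (composed with the stochastic map $U\to X$) to obtain, for each sufficiently large $n$, a ``cloud center'' codebook $(U_{m_0})_{m_0\in[M_{0,n}]}$ with $\frac1n\log M_{0,n}\geq \inf_s\min\{I(U;B,\omega_s),I(U;E,\omega_s)\} - c\delta$ together with POVMs on $\cH_B^{\otimes n}$ and $\cH_E^{\otimes n}$ decoding $m_0$ with error $\to 0$ exponentially, uniformly in $s$. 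Then, conditioned on each typical cloud center $\mathbf{u}=U_{m_0}\in T^n_{p_U,\delta}$, apply Lemma \ref{mosonyivariation} to the channels $\{W_{s,B}\}$ with the conditional kernel $r(\cdot|u)$ describing $Y|U$, producing ``satellite'' codewords $(Y_{m_0,j})_{j\in[J_n]}$ drawn i.i.d.\ from the pruned distribution $r'_{n,\delta}(\cdot|\mathbf{u})$, with $\frac1n\log J_n\geq \inf_s I(Y;B|U,\omega_s) - c\delta$ and a POVM $(\Lambda_{m_0,j})_j$ decoding $j$ given $m_0$ with error $\to 0$ exponentially, uniformly in $s$. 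Combining the two decoders (first decode $m_0$, then $j$) via a standard gentle-measurement/union argument yields a code for the compound bidirectional channel with a joint error bound that still vanishes exponentially; this is exactly the content of the intermediate Lemma \ref{lemmanonsecurebcc} referenced in the introduction. Since all averages over the random codebook vanish, I fix one realization $(\mathbf{u}_{m_0},\mathbf{y}_{m_0,j})$ achieving the bound for every $s$ simultaneously.

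\textbf{Step 2: privacy amplification on the satellite index.} Next I would split the satellite index $j\in[J_n]$ into a confidential message $m_c\in[M_{c,n}]$ and a ``local randomization'' index $k\in[K_n]$, with $J_n = M_{c,n}K_n$ and $\frac1n\log K_n$ slightly above $\sup_s I(Y;E|U,\omega_s)$, so that $\frac1n\log M_{c,n}\geq \inf_s I(Y;B|U,\omega_s)-\sup_s I(Y;E|U,\omega_s)-c'\delta$. The stochastic encoder for $(m_0,m_c)$ picks $k$ uniformly and transmits $\mathbf{y}_{m_0,(m_c,k)}$; Bob's decoder from Step 1 recovers $(m_0,m_c,k)$ hence $(m_0,m_c)$, so the $B$-error and the $E$-error on $m_0$ are untouched. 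For the security condition I would, conditioned on each fixed $m_0$ (equivalently each cloud center $\mathbf{u}$), apply the covering/soft-covering bound Theorem \ref{coveringlemma} to the operators $X_{k} = \Pi\, W_{s,E}^{\otimes n}(\mathbf{y}_{m_0,(m_c,k)})\,\Pi$ suitably normalized, where $\Pi$ is the conditionally typical projector from Lemma \ref{typicalsets} (items 4--6 applied to the $E$-channel): the operator bound $X\leq\mu\mathbbm{1}$ comes from property 6 (with $\mu\approx 2^{-n(S(E|UY,\omega_s)+\Gamma')}$ on a space of dimension $\approx 2^{n(S(E|U,\omega_s)+\Delta)}$) and the expectation lower bound $\mathbbm{E}X\geq\epsilon\mathbbm{1}$ from property 4. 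This shows that for $\frac1n\log K_n > I(Y;E|U,\omega_s)+c''\delta$ the averaged state $\frac1{K_n}\sum_k W_{s,E}^{\otimes n}(\mathbf{y}_{m_0,(m_c,k)})$ is, with overwhelming probability over the code, within exponentially small trace distance of a state independent of $m_c$, uniformly over a finite net of channels; a net argument (Definition \ref{netsdef} and the following lemma) upgrades this to all $s\in S$, and Fannes-type continuity converts small trace distance into $\sup_s I(M_c;E|M_0,\sigma_{s,n})\leq\epsilon$. Fixing a codebook realization that simultaneously satisfies the Step 1 error bounds and this secrecy bound gives the desired BCC code, establishing Lemma \ref{directpartbcc}.

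\textbf{Main obstacle.} The routine parts are the blocking reduction, combining the two decoders, and the Fannes/continuity step. The real work is in Step 2: one must apply the covering lemma \emph{conditioned on each cloud center $\mathbf{u}$} using the \emph{conditionally} typical projectors, keep the parameters $\mu,\epsilon,d$ tracked so that $L=K_n$ with $\frac1n\log K_n$ just above $\sup_s I(Y;E|U,\omega_s)$ forces the exponent $L\epsilon^3/(2d\mu\ln 2)$ to grow exponentially in $n$, and then reconcile the per-$s$ bounds into a $\sup_{s\in S}$ statement for a compound channel of a priori infinite uncertainty set via a finite net together with the exponential decay (so that the union bound over net points is absorbed). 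Controlling this interplay between the satellite codewords being drawn from the \emph{pruned} conditional distribution $r'_{n,\delta}(\cdot|\mathbf{u})$ (rather than an i.i.d.\ product) and the covering lemma's hypotheses is the crux, and is precisely where the universal-superposition structure of Lemma \ref{mosonyivariation} is needed.
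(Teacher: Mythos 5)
Your proposal is correct and follows essentially the same route as the paper: Step 1 is exactly Lemma \ref{lemmanonsecurebcc} (built from Lemma \ref{corollary} and Lemma \ref{mosonyivariation}), Step 2 reproduces the privacy-amplification proof of Lemma \ref{publiclyenhancedbcc} (covering via Theorem \ref{coveringlemma} with the conditionally typical projectors of Lemma \ref{typicalsets}, with rates $L_n$ just above $\sup_{s}I(Y;E|U,\omega_s)$, a finite-net argument, and Fannes/conditional-mutual-information continuity), and the reduction to general $l$ is the paper's own double-blocking proof of Lemma \ref{directpartbcc}. The only cosmetic deviations are that the paper's Bob decoder sums over the randomization index rather than decoding it, and that the paper sandwiches the Eve outputs with both conditional projectors (the states $Q_s^{\mathbf{u}}(\mathbf{y})$) to meet the hypotheses of Theorem \ref{coveringlemma}, rather than deriving the expectation bound from property 4 alone.
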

The main step towards proving Lemma \ref{directpartbcc}, is the following statement. 
\begin{lemma}[Broadcast channel with confidential messages ]\label{publiclyenhancedbcc}
	Let $\cW:=\{W_{s}\}_{s\in S}\subset CQ(\cX,\cH_{B}\otimes\cH_{E})$ be any compound cqq broadcast channel. For $p_{UYX}\in\cP(\cU\times\cY\times\cX)$ where $U - Y - X$ form a Markov chain and $\delta,\epsilon>0$, there exists $n_{0}\in\mathbb{N}$, such that for $n>n_{0}$, we find an $(n,M_{0,n},M_{c,n})$ BCC code $\cC=(E(\cdot|m),D_{B,m},D_{E,m_{0}})_{m=(m_{0},m_{c})\in [M_{0,n}]\times[M_{c,n}]}$ with
	\begin{enumerate}
		\item $\frac{1}{n}\log M_{0,n}\geq\inf_{s\in S}\min\left\{ I(U;B,\omega_{s},I(U;E,\omega_{s}\right\}-c\delta$,
		\item $\frac{1}{n}\log M_{c,n}\geq\inf_{s\in S}I(Y;B|U,\omega_{s})-\sup_{s\in S}I(Y;E|U,\omega_{s})-c\delta$\\
		with some constant $c>0$ and $\omega_{s}$ defined by (\ref{evaluationstate}). 
		\item $\inf_{s\in S}\frac{1}{|\mathbf{M}|}\sum_{\mathbf{m}\in\mathbf{M}}\sum_{\mathbf{x}\in\cX^{n}}E(\mathbf{x}|\mathbf{m})\tr[W_{B,s}^{\otimes n}(\mathbf{x})D_{B,\mathbf{m}}]\geq 1-\epsilon$
		\item $\inf_{s\in S}\frac{1}{|\mathbf{M}|}\sum_{\mathbf{m}\in\mathbf{M}}\sum_{\mathbf{x}\in\cX^{n}}E(\mathbf{x}|\mathbf{m})\tr[W_{E,s}^{\otimes n}(\mathbf{x})D_{E,m_{0}}]\geq 1-\epsilon$
		\item $\sup_{s\in S}I(M_{c};E|M_{0},\sigma_{s,n})\leq\epsilon$
	\end{enumerate}
	with state $\sigma_{s,n}$ defined by (\ref{securitystate}).
\end{lemma}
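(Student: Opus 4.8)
The plan is to build the code in the layered fashion described in the introduction: first construct a random code for the common message $m_0$ using a $U$-codebook, then, conditioned on each common codeword, construct a $Y$-superposition codebook, split the $Y$-messages into a confidential part $m_c$ and a "dummy" randomization part $j$, and finally apply privacy amplification (Theorem~\ref{coveringlemma}) to kill Eve's information about $m_c$. Throughout, I fix a distribution $p_{UYX}$ with $U-Y-X$ Markov and replace it by its pruned versions on the relevant typical sets, so that all of Lemma~\ref{typicalsets}, Lemma~\ref{mosonyivariation} and Lemma~\ref{corollary} apply; the $n=1$ statement of the theorem then follows by the standard blocking/net argument (and the general $l$-letter version by applying the $n=1$ case to the channel $W_s^{\otimes l}$ with input alphabet $\cX^l$, exactly as the statement of Lemma~\ref{publiclyenhancedbcc} is phrased with single-letter $p_{UYX}\in\cP(\cU\times\cY\times\cX)$).

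\textbf{Step 1 (common layer).} Apply Lemma~\ref{corollary} to the marginal cq channels $\{W_{B,s}\circ(\text{generate } Y,X \text{ from } U)\}_{s}$ and $\{W_{E,s}\circ(\cdots)\}_s$ with input letter $u$, to get a random $U$-codebook $(U_{m_0})_{m_0\in[M_{0,n}]}$ with $\tfrac1n\log M_{0,n}\ge \inf_s\min\{I(U;B,\omega_s),I(U;E,\omega_s)\}-c\delta$ together with decoding POVMs $(D_{B,m_0}^{U})_{m_0}$ on $\cH_B^{\otimes n}$ and $(D_{E,m_0})_{m_0}$ on $\cH_E^{\otimes n}$ whose expected average errors vanish exponentially, uniformly in $s$. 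Since we need both Bob and Eve to decode $m_0$, I take the minimum of the two rates and intersect the two good-codebook events; a union bound keeps the failure probability exponentially small.

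\textbf{Step 2 (confidential layer, conditioned on $U$).} For each realization of $U_{m_0}\in T^n_{q,\delta}$, apply Lemma~\ref{mosonyivariation} with the "typical word" $\mathbf{x}=U_{m_0}$ and conditional distribution $r(\cdot|u)$ for $Y$ given $U$: this yields, for a total of $M_{c,n}\cdot L_n$ messages, a random $Y$-codebook $(Y_{m_0,m_c,j})_{m_c\in[M_{c,n}],\,j\in[L_n]}$ i.i.d.\ from the pruned $r'_{n,\delta}(\cdot|U_{m_0})$, with decoding POVMs $(\Lambda_{m_c,j})$ on $\cH_B^{\otimes n}$ such that $\tfrac1n\log(M_{c,n}L_n)\ge \inf_s I(Y;B|U,\omega_s)-c\delta$ and expected average decoding error $\le\epsilon_n$. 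I will choose the randomization rate to be $\tfrac1n\log L_n=\sup_s I(Y;E|U,\omega_s)+ (\text{small }\delta\text{-slack})$, so that the confidential rate is $\tfrac1n\log M_{c,n}\ge\inf_s I(Y;B|U,\omega_s)-\sup_s I(Y;E|U,\omega_s)-c\delta$, as required by item~2. Concatenating the two decoders — first decode $m_0$ from $\cH_B^{\otimes n}$, then (using $U_{m_0}$) decode $(m_c,j)$ — and using the standard "gentle measurement"/union-of-errors bookkeeping gives item~3; item~4 is just the Eve-decoder from Step~1.

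\textbf{Step 3 (privacy amplification) and conclusion.} It remains to verify item~5, $\sup_s I(M_c;E|M_0,\sigma_{s,n})\le\epsilon$. Conditioned on $m_0$ and on a fixed $m_c$, Eve holds the state $\rho_{E,s}^{(m_0,m_c)}=\tfrac1{L_n}\sum_{j}W_{E,s}^{\otimes n}(Y_{m_0,m_c,j})$, and we want this to be essentially independent of $m_c$, close to $\bar\rho_{E,s}^{(m_0)}=\tfrac1{M_{c,n}L_n}\sum_{m_c,j}W_{E,s}^{\otimes n}(Y_{m_0,m_c,j})$. This is exactly a covering statement: using the conditional typical projections $\Pi_{W_E,\mathbf{x},\delta}$ of Lemma~\ref{typicalsets} to sandwich $W_{E,s}^{\otimes n}(Y)$ between $0$ and $\mu\mathbbm{1}$ with $\mu\approx 2^{-n(S(E|UY)-\Delta)}$ on a subspace of dimension $\approx 2^{n(S(E|U)+\Delta)}$, the ratio $\mu\cdot d\approx 2^{n(I(Y;E|U,\omega_s)+O(\delta))}$; since $\tfrac1n\log L_n$ exceeds $\sup_s I(Y;E|U,\omega_s)$ by a $\delta$-amount, Theorem~\ref{coveringlemma} makes $\Pr(\|\tfrac1{L_n}\sum_j X_j-\mathbbm{E}X\|_1>\epsilon')$ doubly-exponentially small for each fixed $s,m_0,m_c$; a union bound over $m_0,m_c$ and over a fine net on $S$ (plus a continuity/approximation estimate for the net, using the nets machinery of Appendix~\ref{nets}) makes it small simultaneously. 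Combining with an Alicki–Fannes-type continuity bound converts the trace-norm closeness into $I(M_c;E|M_0,\sigma_{s,n})\le\epsilon$ uniformly in $s$. Finally, since Steps~1–3 each fail with at most exponentially (resp.\ doubly-exponentially) small probability, their union is $<1$ for $n$ large, so a deterministic code with all five properties exists; taking $\delta\to0$ and invoking Remark~\ref{convex:remark} for the closure/convexity then yields Lemma~\ref{directpartbcc} and hence the inner bound in Theorem~\ref{bcctheorem}.

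\textbf{Main obstacle.} The delicate point is making the privacy-amplification bound \emph{uniform over the infinite uncertainty set} $S$ while it is only stated (Theorem~\ref{coveringlemma}) for a fixed distribution of matrices: one must first discretize $S$ by a net whose cardinality grows at most sub-exponentially in $n$ (so the union bound survives against the doubly-exponential decay), and then control the perturbation of $I(Y;E|U,\omega_s)$, of the typical projections, and of Eve's states as $s$ ranges over a net cell — this is where the $CQ$-norm estimates and the net lemma in Appendix~\ref{nets} do the real work. A secondary nuisance is bookkeeping the two nested decoders and the three sources of randomness ($U$-codebook, $Y$-codebook, and the implicit measurement randomness) so that a single deterministic choice works for reliability \emph{and} secrecy simultaneously; this is handled by the usual "sum of small expectations is small, so some realization is good for all" argument, but requires care that the secrecy event and the reliability events are defined on the same probability space.
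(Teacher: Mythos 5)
Your proposal follows essentially the same route as the paper: a universal superposition code built from Lemma~\ref{corollary} (common layer) and Lemma~\ref{mosonyivariation} ($Y$-layer conditioned on the common codeword), a split of the $Y$-messages into confidential and randomization parts with $\tfrac1n\log L_n$ just above $\sup_s I(Y;E|U,\omega_s)$, privacy amplification via Theorem~\ref{coveringlemma} applied to the typical-projection-sandwiched states, a union bound over an exponentially large $\tau_n$-net absorbed by the doubly-exponential covering bound, and Fannes/Shirokov continuity to pass from trace-norm closeness to the conditional mutual information and from the net back to all of $S$. The paper merely packages your Steps~1--2 as the separate Lemma~\ref{lemmanonsecurebcc}; otherwise the argument, including the identified obstacles, is the same.
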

Applying standard double-blocking arguments on Lemma \ref{publiclyenhancedbcc}, will prove Lemma \ref{directpartbcc}. In the same vein as the coding steps taken in \cite{shaefer}, we prove Lemma \ref{publiclyenhancedbcc} in two steps. At first, we prove the following random coding result, that guarantees reliable decoding of common messages by Bob and Eve, and reliable decoding of public messages by Bob. Here, we do not concern ourselves with the security condition. In the next step, we apply privacy amplification arguments on the public part of the codebook, to achieve the desired confidential message transmission rate. 
\newpage

\begin{lemma}\label{lemmanonsecurebcc}
	Let $\cW:=\{W_{s}\}_{s\in S}\subset CQ(\cY,\cH_{B}\otimes\cH_{E})$ be any compound cqq broadcast channel and $\cU$ be a finite  alphabet. For any $\delta>0$, $q\in\cP(\cU), r(\cdot|u)\in\cP(\cY)$, $u\in\cU$ and large enough values of $n$,  the following exist.
	\begin{itemize}
	\item A family $(u_{m},D_{E,m})_{m\in[M_{0,n}]}$ of codes with $u_{m}\in T_{q,\delta}^{n}$ and $(D_{E,m})_{m\in[M_{0,n}]}\subset\cL(\cH_{E}^{\otimes n})$ a POVM.
	\item A map $y:(y_{ij})_{(i,j)\in[M_{0,n}]\times[M_{1,n}]}\mapsto(D_{B,ij}(y))_{(i,j)\in[M_{0,n}]\times[M_{1,n}]} $, such that $(D_{B,ij}(y))_{(i,j)\in[M_{0,n}]\times[M_{1,n}]}\in\cL(\cH_{B}^{\otimes n})$ is a POVM and for any family $Y=(Y_{ij})_{(i,j)\in[M_{0,n}]\times[M_{1,n}]}$ of random variables such that for each $m\in[M_{0,n}]$, $Y^{m}=(Y_{mj})_{j\in[M_{1,n}]}$ is  distributed i.i.d according to $r_{n}'(\cdot|u_{m})$, namely the pruned distribution of $r(\cdot|u)$ (see (\ref{pruned2})),  we have
\begin{equation*}
    \frac{1}{n}\log M_{0,n}\geq \inf_{s\in S}\min\left\{I(U;B,\omega_{s}), I(U;E,\omega_{s})\right\}-c\delta,
\end{equation*}
\begin{equation*}
    \frac{1}{n}\log M_{1,n}\geq \inf_{s\in S}I(Y;B|U,\omega_{s})-c\delta,
\end{equation*}
		\begin{equation*}
		    \mathbb{E}_{Y}\bigg[\inf_{s\in S}\frac{1}{M_{0,n}M_{1,n}}\sum_{(m,i)\in[M_{0,n}]\times[M_{1,n}]}\tr[W_{B,s}^{\otimes n}(Y_{mi})D_{B,mi}(Y)]\bigg]\geq 1-\epsilon_{n},
		\end{equation*}
		\begin{equation*}
		    \mathbb{E}_{Y}\bigg[\inf_{s\in S}\frac{1}{M_{0,n}M_{1,n}}\sum_{(m,i)\in[M_{0,n}]\times[M_{1,n}]}\tr[W_{E,s}^{\otimes n}(Y_{mi})D_{E,m}]\bigg]\geq 1-\epsilon_{n}
		\end{equation*}
		with $\epsilon_{n}\to 0$ exponentially, constant $c>0$ and $\omega_{s}=\sum_{u\in\cU}q(u)\ket{u}\bra{u}\otimes r(y|u)\ket{y}\bra{y}\otimes W_{s}(y)$.
	   \end{itemize}
\end{lemma}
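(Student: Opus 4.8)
The plan is to build the codebook in two layers, using the universal random codes of Lemma \ref{mosonyivariation} and Lemma \ref{corollary} as the two building blocks. For the outer (common message) layer, I would first pass from the given broadcast channel $\{W_s\}$ to its two marginals $\{W_{B,s}\}$ and $\{W_{E,s}\}$, and apply Lemma \ref{corollary} twice with the input distribution $q\in\cP(\cU)$: once to the family $\{W_{B,s}\circ r\}$ (viewing $u\mapsto\sum_y r(y|u)W_{B,s}(y)$ as a cq channel from $\cU$ to $\cK_B$) and once to $\{W_{E,s}\circ r\}$. This produces, for $n$ large, a single deterministic codebook $(u_m)_{m\in[M_{0,n}]}\subset T_{q,\delta}^n$ together with decoding POVMs that identify $m$ with vanishing error uniformly in $s$, provided $\frac1n\log M_{0,n}\le \inf_s\min\{I(U;B,\omega_s),I(U;E,\omega_s)\}-c\delta$. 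The only subtlety here is that Lemma \ref{corollary} gives an \emph{expectation} bound over random codewords $U_m$; since the bound holds for the supremum over $s$ inside the expectation, a standard expurgation/selection argument (pick a realization beating twice the mean for \emph{both} the $B$- and $E$-marginal error, which exists by a union bound) yields one fixed family $(u_m)$ that works for both receivers simultaneously. I also need each selected $u_m$ to lie in $T_{q,\delta}^n$; this is automatic because the $U_m$ are drawn from the pruned distribution $q'_{n,\delta}$, which is supported on $T_{q,\delta}^n$.

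For the inner (public message) layer, I would condition on the chosen $u_m$. For each fixed $m$, Lemma \ref{mosonyivariation} (applied to $\{W_{B,s}\}$ with the "typical word" $\mathbf{x}=u_m\in T_{q,\delta}^n$ and channel input distribution $r(\cdot|u)$) furnishes a map $y^m\mapsto (\Lambda^m_j(y^m))_{j\in[M_{1,n}]}$ such that, when $Y^m=(Y_{mj})_j$ is drawn i.i.d.\ from the pruned distribution $r'_{n,\delta}(\cdot|u_m)$, the average decoding error over $j$, supremized over $s$, has expectation $\le\epsilon_n^{(m)}\to0$ exponentially, as long as $\frac1n\log M_{1,n}\le\inf_s I(Y;B|U,\sigma_s)-c\delta$. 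The conditional mutual information $I(Y;B|U,\omega_s)=\sum_u q(u)\,I(Y;B,\omega_s^{u})$ is exactly the quantity appearing in Lemma \ref{mosonyivariation} for input law $q\otimes r$, so the rate matches the statement. Since the same $n_0$ and $\epsilon_n$ work uniformly for every $\mathbf{x}\in T_{q,\delta}^n$ (this uniformity is part of Lemma \ref{mosonyivariation}), the bound $\epsilon_n^{(m)}\le\epsilon_n$ is independent of $m$.

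It remains to assemble the two layers into a single joint POVM on $\cH_B^{\otimes n}$ indexed by pairs $(i,j)$ and to verify the two displayed average-error inequalities. For Bob I would use the standard "sequential/square-root" decoding: first apply the outer-layer projector that pins down $\hat m=i$, then, conditioned on that outcome, apply $\Lambda^{i}_j(Y^{i})$; by the Hayashi–Nagaoka operator inequality the combined error at $(i,j)$ is bounded by (twice) the outer error at $i$ plus (four times) the inner error at $j$ given $i$, and averaging over $(i,j)$ and taking $\sup_s$ then $\mathbb{E}_Y$ gives something $\le 6\epsilon_n'\to0$. For Eve the statement only asks that $D_{E,m}$ (not depending on $j$) decode the common index, so the second inequality is literally the $E$-marginal bound from the outer layer, with the average over the redundant index $j$ changing nothing. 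The main obstacle — and the step deserving the most care — is the \emph{simultaneous} control of Bob and Eve off a single random experiment: Lemma \ref{mosonyivariation} randomizes only the inner codewords $Y_{mj}$, while the outer codewords $u_m$ must already be fixed (deterministic) before the inner lemma is invoked, so I must first do the expurgation at the outer level (getting fixed $u_m$ good for both marginals), and only then let the $Y^m$ be random; combining the resulting bounds through Hayashi–Nagaoka, while keeping all error terms uniform in $s\in S$ and in $m$, is where the bookkeeping is delicate. Everything else (typicality, pruned-distribution support, rate arithmetic via the chain rule for conditional mutual information) is routine.
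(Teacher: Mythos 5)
Your plan follows the paper's proof essentially step for step: an outer universal code from Lemma \ref{corollary} applied to the effective channels $\hat{W}_{\gamma,s}(\cdot)=\sum_{y}r(y|\cdot)W_{\gamma,s}(y)$ for both marginals $\gamma\in\{B,E\}$, selection of one realization $(u_{m})\subset T_{q,\delta}^{n}$ good for both receivers, an inner code from Lemma \ref{mosonyivariation} conditioned on each $u_{m}$, and a combined decoder $\sqrt{D_{B,m}}\,\Lambda^{m}_{j}\,\sqrt{D_{B,m}}$ controlled by an operator inequality (the paper uses Lemma \ref{ineq} rather than a Hayashi--Nagaoka-type bound, with the same effect). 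The only point you gloss over, which the paper makes explicit in (\ref{prunedtoiid}), is that the outer decoders are designed for $\hat{W}_{\gamma,s}^{\otimes n}(u_{m})$ while the inner codewords are drawn from the pruned law $r'_{n,\delta}(\cdot|u_{m})$, so an exponentially small trace-norm correction is needed before, e.g., Eve's bound becomes ``literally'' the outer-layer bound; this is indeed routine.
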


\begin{proof}
We approximate $\{W_s\}_{s \in S}$ by a finite $\tau_n$-net $\{W_s\}_{s\in S_n} \subset \{W_s\}_{s \in S}$ with $\tau_n := 2^{-\tfrac{n\nu}{2}}$ with a constant positive number $\nu$ to be determined later. We choose the net small enough to have $\log|S_n| \ \leq \ 2\cdot |\cX| \cdot \dim(\cH_{B}\otimes\cH_{E})^2 (\log 6 +  n\nu /2)$ which is possible by Lemma \ref{netslemma}.
	For $\gamma\in\{B,E\}$ and $s\in S_{n}$, consider the effective channel $\hat{W}_{\gamma,s,n}:\cU^{n}\to \cS(\cH_{\gamma}^{\otimes n})$ defined by $\hat{W}_{\gamma,s}(\cdot):=\sum_{y\in\cY}r(y|\cdot)W_{\gamma,s}(y)$. Applying Lemma \ref{corollary} on the channel set $\{\hat{W}_{\gamma,s}\}_{s\in S_{n}}$ and probability distribution $q$, yields the existence of the random $(n,M_{0,n})$ code $\cC(U)$ with $U=(U_{1},\dots,U_{M_{0,n}})$, a sequence of i.i.d random variables distributed according to $q'_{n,\delta}$ and POVMs $(D_{\gamma,m}(U))_{m\in[M_{0,n}]}\subset\cL(\cH_{\gamma}^{\otimes n})$ such that 
	 \begin{align}\label{commonforbobbcc}
	 \mathbb{E}_{U}\big[\min_{s\in  S_{n}}\frac{1}{M_{0,n}}\sum_{m\in[M_{0,n}]}\tr(D_{\gamma,m}(U)\hat{W}_{\gamma,s}^{\otimes n}(U_{m}))\big]\geq 1-\epsilon_{0,n}.
	 \end{align}
	 with $\epsilon_{0,n}\to 0$ exponentially and 
	 \begin{equation*}
	     \frac{1}{n}\log M_{0,n}\geq\min_{s\in S_{n}}I(U;\gamma,\omega_{s})-c_{0}\delta. 
	 \end{equation*}
	Hence we have
	\begin{equation*}
	     \frac{1}{n}\log M_{0,n}\geq\min_{s\in S_{n}}\min\left\{I(U;B,\omega_{s}),I(U;E,\omega_{s})\right\}-c_{0}\delta. 
	 \end{equation*}
	Given (\ref{commonforbobbcc}), we can conclude the existence of one realization $(u_{1},\dots,u_{M_{0,n}})$ of random variable $U$, and POVMs $(D_{\gamma,m})_{m\in[M_{0,m}]}\in\cL(\cH_{\gamma}^{\otimes n})$, suitable for transmission of common messages, namely
	 \begin{equation}\label{bobrealizationbcc}
	 \min_{s\in S_{n}}\frac{1}{M_{0,n}}\sum_{m\in[M_{0,n}]}\tr(D_{\gamma,m}\hat{W}_{\gamma,s}^{\otimes n}(u_{m}))\geq 1-\epsilon_{0,n}.
	 \end{equation}
	 Before moving on to the private message, notice that for each $\mathbf{u}\in T_{q,\delta}^{n}$, using the abbreviation $T_{\delta}:=r^{\otimes n}(T_{r,\delta}(\mathbf{u}))$, we have
	 \begin{align}\label{prunedtoiid}
	 \|\hat{W}_{\gamma,s}^{\otimes n}(\mathbf{u})-\sum_{\mathbf{y}\in\cY^{n}}r_{n}'(\mathbf{y}|\mathbf{u})W_{\gamma,s}^{\otimes n}(\mathbf{y})\|_{1}&\leq \sum_{\mathbf{y}\in T_{r,\delta}(\mathbf{u})}r^{\otimes n}(\mathbf{y}|\mathbf{u})(\frac{1}{T_{\delta}}-1)\|W_{\gamma,s}^{\otimes n}(\mathbf{y})\|_{1}\nonumber\\&+\sum_{\mathbf{y}\in T^{c}_{r,\delta}(\mathbf{u})}r^{\otimes n}(\mathbf{y}|\mathbf{u})\|W_{\gamma,s}^{\otimes n}(\mathbf{y})\|_{1}\leq 2(1-T_{\delta})\leq 2\cdot2^{-n\delta}.
	 \end{align}
	 The upper bound above comes from the fact that $T_{\delta}$ approaches unity exponentially with $n$ (cf. \cite{korner}). Now we pursue with the private message, namely the one Bob has to decode while Eve may or may not. For each $u_{\hat{m}},\hat{m}\in[M_{0,n}]$ obtained above,  apply Lemma \ref{mosonyivariation} on $\{W_s\}_{S_n}$ and probability distribution $r(\cdot|u)$, $u\in\cU$.  on Lemma \ref{mosonyivariation}, we obtain the existence of a random code $\cC(Y^{u_{\hat{m}}})$ with $Y^{u_{\hat{m}}}=(Y_{\hat{m},1},\dots,Y_{\hat{m},M_{1,n}})$ and decoding operation
	 $(\Lambda_{m}(Y^{u_{\hat{m}}}))_{m\in[M_{1,n}]}$, such that $Y^{u_{\hat{m}}}$ is distributed according to $r'_{n,\delta}(\cdot|u_{\hat{m}})^{\otimes M_{1,n}}$ with
	 \begin{equation}\label{confidentialbobbcc}
	 \mathbb{E}_{Y^{u_{\hat{m}}}}\big[\inf_{s\in S_{n}}\frac{1}{M_{1,n}}\sum_{m\in[M_{1,n}]}\tr(\Lambda_{m}(Y^{u_{\hat{m}}})W^{\otimes n}_{B,s}(Y_{\hat{m},m}))\big]\geq 1-\epsilon_{1,n},
	 \end{equation}
	 and
	 \begin{equation*}
	     \frac{1}{n}\log M_{1,n}\geq I(Y;B|U,\omega_{s})-c_{1}\delta.
	 \end{equation*}
	We have
	 \begin{align}\label{bobdecodingcommonbcc}
	 \min_{s\in S_{n}}\frac{1}{M_{0,n}}&\sum_{\hat{m}\in[M_{0,n}]}\mathbb{E}_{Y^{u_{\hat{m}}}}\big[\frac{1}{M_{1,n}}\sum_{m\in[M_{1,n}]}\tr(D_{\gamma,\hat{m}}W_{s,\gamma}^{\otimes n}(Y_{\hat{m},m}))\big]\nonumber\\&=\min_{s\in S_{n}}\frac{1}{M_{0,n}}\sum_{\hat{m}\in[M_{0,n}]}(\tr(D_{\gamma,\hat{m}}\sum_{\mathbf{y}\in \cY^{n}}r'_{n,\delta}(\mathbf{y}|u_{\hat{m}})W_{s,\gamma}^{\otimes n}(\mathbf{y})))\nonumber\\&=\min_{s\in S_{n}}\frac{1}{M_{0,n}}\sum_{\hat{m}\in[M_{0,n}]}\tr(D_{\gamma,\hat{m}}\hat{W}_{\gamma,s}^{\otimes n}(u_{\hat{m}}))-2\cdot2^{-n\delta}\geq 1-\epsilon_{2,n}.
	 \end{align}
	 where in the first equality, we have calculated the expectation value given that for each $\hat{m}\in[M_{0,n}]$, $Pr(Y_{\hat{m},m}=\mathbf{y})=r_{n}'(\mathbf{y}|u_{\hat{m}}),\forall m\in[M_{0,n}]$, and in the last line, we have observed (\ref{prunedtoiid}) and inserted (\ref{bobrealizationbcc}), setting $\epsilon_{2,n}:=\epsilon_{0,n}+2\cdot2^{-n\delta}$. 
	  Consider the random decoding operation $(D_{B,\hat{m},m}(Y))_{(\hat{m},m)\in[M_{0,n}]\times[M_{1,n}]}$ defined for each message pair by $D_{B,\hat{m},m}(Y):=\sqrt{D_{B,\hat{m}}}\Lambda_{m}(Y^{u_{\hat{m}}})\sqrt{D_{B,\hat{m}}}$. We have 
	 \begin{align}\label{claim3bcc}
	 \mathbb{E}_{Y}\big[\frac{1}{|S_{n}|}\sum_{s\in S_{n}}\frac{1}{M_{0,n}}&\sum_{\hat{m}\in[M_{0,n}]}\frac{1}{M_{1,n}}\sum_{m\in[M_{1,n}]}\tr(D_{B,\hat{m},m}(Y)W_{B,s}^{\otimes n}(Y_{\hat{m},m}))\big]=\nonumber\\& \frac{1}{|S_{n}|}\sum_{s\in S_{n}}\frac{1}{M_{0,n}}\sum_{\hat{m}\in[M_{0,n}]}\mathbb{E}_{{Y^{u_{\hat{m}}}}}\big[\frac{1}{M_{1,n}}\sum_{m\in[M_{1,n}]}\tr(\sqrt{D_{B,\hat{m}}}\Lambda_{m}(Y^{u_{\hat{m}}})\sqrt{D_{B,\hat{m}}}W_{B,s}^{\otimes n}(Y_{\hat{m},m}))\big]\geq\nonumber\\& \frac{1}{|S_{n}|}\sum_{s\in S_{n}}\frac{1}{M_{0,n}}\sum_{\hat{m}\in[M_{0,n}]}\big(\mathbb{E}_{Y^{u_{\hat{m}}}}\big[\frac{1}{M_{1,n}}\sum_{m\in[M_{1,n}]}\tr(\Lambda_{m}(Y^{u_{\hat{m}}})W_{B,s}^{\otimes n}(Y_{\hat{m},m}))\big]-\nonumber\\&2\sqrt{1-\mathbb{E}\big[\frac{1}{M_{1,n}}\sum_{m\in[M_{1,n}]}\tr(D_{B,\hat{m}}W_{B,s}^{\otimes n}(Y_{\hat{m},m}))\big]} \big)\geq 1-\epsilon_{1,n}-2\sqrt{\epsilon_{2,n}},
	 \end{align}
	 where in the first inequality, we have used Lemma \ref{ineq}, and in the last line, we have inserted the lower bounds from (\ref{bobdecodingcommonbcc}) and (\ref{confidentialbobbcc}) and used concavity of the square root function.  Applying standard net approximation techniques used for example in proof of Lemma \ref{mosonyivariation}, we obtain the claim of the lemma. 
\end{proof}
At this point we can prove Lemma \ref{publiclyenhancedbcc}, by applying privacy amplification arguments (c.f \cite{minglai}) on the $M_{1}$ part of the messages obtained in Lemma \ref{lemmanonsecurebcc}. This is done by using equidistribution when inputting part of these messages to confuse the eavesdropper. The other part of $M_{1}$ will then be secure. 
	\begin{proof}[Proof of Lemma \ref{publiclyenhancedbcc}]
	Let $p_{UYX}(u,y,x)=p_{UY}(u,y)p_{X|Y}(x|y)$ and $p_{UY}(u,y)=q(u)r(y|u)$ $\forall (u,y,x)\in\cU\times\cY\times\cX$. We approximate $\{W_s\}_{s \in S}$ by a finite $\tau_n$-net $\{W_s\}_{S_n} \subset \{W_s\}_{s \in S}$ with $\tau_n := 2^{-\tfrac{n\nu}{2}}$ with a constant positive number $\nu$ to be determined later. We choose the net small enough to fulfill the cardinality bound 
		$\log|S_n| \ \leq \ 2\cdot |\cX| \cdot \dim(\cH_{B}\otimes\cH_{E})^2 (\log 6 +  n\nu /2)$ which is possible by Lemma \ref{netslemma}.
	Let $\delta>0$, $n\in\mathbb{N}$ and pruned probability distributions $q'_{n,\delta}, r_{n}'(\cdot|\mathbf{u})$  over $T_{q,\delta}^{n}$ and $T_{r,\delta}(\mathbf{u}),(\mathbf{u}\in\cU^{n})$ be given.
	Set
	\begin{equation}\label{m0}
	M_{0,n}=\lfloor 2^{n\big(\min_{s\in S_{n}}\min\left\{I(U;B,\omega_{s}),I(U;E,\omega_{s})\right\}-c\delta\big)}\rfloor,
	\end{equation}
	\begin{equation}\label{jn}
	J_{n}=\lfloor 2^{n\big(\min_{s\in S_{n}}I(Y;B|U,\omega_{s})-\max_{s\in S_{n}}I(Y;E|U,\omega_{s})-2\Delta(\delta)-c\delta\big)}\rfloor
	\end{equation}
	and
	\begin{equation}\label{Ln}
	L_{n}=\lceil 2^{n\max_{s\in S_{n}}I(Y;E|U,\omega_{s})+n\Delta(\delta)} \rceil.
	\end{equation}
	For the effective channel $\tilde{W}_{s}:\cY\to\cS(\cH_{B}\otimes\cH_{E})$ defined by $\tilde{W}_{s}(\cdot):=\sum_{x\in\cX}p_{X|Y}(x|\cdot)W_{s}(x), \forall s\in S$,  according to Lemma \ref{lemmanonsecurebcc}, there exists a family $(u_{m},D_{E,m})_{m\in[M_{0},n]}$ and a random family\\
 $\cC(Y)=(Y_{mjl},D_{B,mjl}(Y))_{(m,j,l)\in[M_{0,n}]\times[J_{n}]\times[L_{n}]}$, such that for events 
 \begin{equation*}
 \mathbf{A}:=\big\{\max_{s\in S_{n}}\frac{1}{M_{0,n}J_{n}L_{n}}\sum_{m,j,l}\tr(\tilde{W}_{B,s}^{\otimes n}(Y_{mjl})D^{c}_{B,mji}(Y))\geq \sqrt{\epsilon_{n}}\big\}
 \end{equation*}
 and
 \begin{equation}\label{droppedcondition}
 \mathbf{B}:=\big\{\max_{s\in S_{n}}\frac{1}{M_{0,n}J_{n}L_{n}}\sum_{m,j,l}\tr(\tilde{W}_{E,s}^{\otimes n}(Y_{mjl})D^{c}_{E,m})\geq  \sqrt{\epsilon_{n}}\big\}, 
 \end{equation}
 we have
	\begin{align}\label{reliability}
	\Pr[\mathbf{A}\cup \mathbf{B}]\leq  2\sqrt{\epsilon_{n}},
	\end{align} 
	where we have used the Markov inequality to obtain the above probability from the expectation value of the same event, and applied the union bound to get the probability of the complementary events (one with respect to $\tilde{W}_{B,s}$ and the other with respect to $\tilde{W}_{E,s}$). Here, $\epsilon_{n}$ goes to zero exponentially, given the appropriate choice of $\tau_{n}$, as evident in the proof of Lemma \ref{mosonyivariation}. We define the following quantities for each $s\in S_{n}$ and $\mathbf{u}\in T_{q,\delta}^{n}$. 
	\begin{align}
	Q_{s}^{\mathbf{u}}(\mathbf{y}):=\Pi_{\tilde{W}_{E,s},\mathbf{u},\delta}\Pi_{\tilde{W}_{E,s},\mathbf{u},\delta}(\mathbf{y})\tilde{W}_{E,s}^{\otimes n}(\mathbf{y})\Pi_{\tilde{W}_{E,s},\mathbf{u},\delta}(\mathbf{y})\Pi_{\tilde{W}_{E,s},\mathbf{u},\delta}
	\end{align}
	with quantities defied in Lemma \ref{typicalsets} and 
	\begin{align}
	\Theta_{s}^{\mathbf{u}}:=\sum_{y^{n\in\cY^{n}}}r'_{n,\delta}(\mathbf{y}|\mathbf{u})Q_{s}^{\mathbf{u}}(\mathbf{y}).
	\end{align}
	Given property 4 of Lemma \ref{typicalsets}, (\ref{totproj}) and Lemma \ref{gentle}, we have $\forall \mathbf{u}\in \cU^{n}, \mathbf{y}\in T_{r,\delta}(\mathbf{u})$ and $s\in S_{n}$
	\begin{align}\label{gm}
	\parallel\tilde{W}_{E,s}^{\otimes n}(\mathbf{y})-Q_{s}^{\mathbf{u}}(\mathbf{y})\parallel_{1}\leq\sqrt{2^{-n\Upsilon(\delta)+1}+\sqrt{2^{-n\Upsilon''(\delta)+2}}}:=\epsilon_{1,n}.
	\end{align}
	Clearly $\epsilon_{1,n}\to 0$ exponentially. Applying Theorem \ref{coveringlemma} with $\mathbb{C}^{d}$ the range space of projection $\Pi_{\tilde{W}_{E,s},\mathbf{u},\delta}$, by property 2 of Lemma \ref{typicalsets} we have 
	\begin{align}
	d\leq 2^{S(E|U,\omega_{s}^{\otimes n})+n\Delta(\delta)}
	\end{align}
	Furthermore, from the property $6$ of the projections introduced in Lemma\ref{typicalsets}, we have for all $\mathbf{u}\in T_{q,\delta}^{n}$
	\begin{align}\label{equation32}
	Q_{s}^{\mathbf{u}}(Y_{mjl})\leq 2^{-S(E|YU,\omega_{s}^{\otimes n})+n\Gamma'(\delta)}\mathbbm{1}_{\mathbb{C}^{d}}.
	\end{align}
	Let $n>2$. The hypotheses of Theorem \ref{coveringlemma} are therefore satisfied with $\epsilon=\epsilon_{0,n}:=2^{-n\Gamma'(\delta)/6}$ and $\mu=2^{-S(E|YU,\omega_{s}^{\otimes n})+n\Gamma'(\delta)}$. Since $u_{m}\in T_{q,\delta}^{n},\forall m\in[M_{0,n}]$, for the event
	\begin{equation*}
	\mathbf{C}_{s,m,j}:=\left\{\left\|\frac{1}{L_{n}}\sum_{l=1}^{L_{n}}Q_{s}^{u_{m}}(Y_{mjl})-\Theta_{s}^{u_{m}}\right\|_{1}>\epsilon_{0,n}\right\},
	\end{equation*}
	 we have
	\begin{align*}
	\Pr\big[\mathbf{C}_{s,m,j}\big]&\leq  2^{S(E|U,\omega_{s}^{\otimes n})+n\Delta(\delta)}\times\exp\big(-L_{n}\frac{\epsilon_{0,n}^{3}}{2\ln 2\cdot 2^{S(E|U,\omega_{s}^{\otimes n})-S(E|YU,\omega_{s}^{\otimes n})+n(\Delta(\delta)-\Gamma'(\delta))}}\big)\\&\leq  2^{n(\log\dim\cH_{E}+\Delta(\delta))}\times\exp\big(-L_{n}\frac{\epsilon_{0,n}^{3}}{2\ln 2\cdot 2^{I(Y;E|U,\omega_{s}^{\otimes n})+n(\Delta(\delta)-\Gamma'(\delta))}}\big).
	\end{align*}
	 Applying the union bound, for all $s\in S_{n},j\in J_{n},m\in[M_{0,n}]$ we have 
	\begin{align}\label{security}
	\Pr\big[\mathbf{C}:=\bigcup_{s,j,m}\mathbf{C}_{s,m,j} \big]\leq J_{n}M_{0,n}|S_n|2^{n(\log\dim\cH_{E}+\Delta(\delta))}\times\exp\big(-L_{n}\frac{\epsilon_{0,n}^{3}}{2\ln 2\cdot 2^{I(Y;E|U,\omega_{s}^{\otimes n})+n(\Delta(\delta)-\Gamma'(\delta))}}\big).
	\end{align}
	From (\ref{reliability}) and (\ref{security}), we have 
	\begin{align}\label{securityreliability}
	\Pr\big[\mathbf{C}\cup\mathbf{B}\cup\mathbf{A}\big]&\leq 2\sqrt{\epsilon_{n}}+ J_{n}M_{0,n}\times|S_n|2^{n(\log\dim\cH_{E}+\Delta(\delta))}\nonumber\\&\times\exp\big(-L_{n}\frac{\epsilon_{0,n}^{3}}{2\ln 2\cdot 2^{I(Y;E|U,\omega_{s}^{\otimes n})+n(\Delta(\delta)-\Gamma'(\delta))}}\big).
	\end{align}
 Finally, given (\ref{Ln}), we have
	\begin{equation}\label{ded}
	\exp\big(-L_{n}\frac{\epsilon_{0,n}^{3}}{2\ln 2\cdot 2^{I(Y;E|U,\omega_{s}^{\otimes n})+n(\Delta(\delta)-\Gamma'(\delta))}}\big)\leq \exp\big(-\frac{\epsilon_{0,n}^{3}2^{n\Gamma'(\delta)}}{2\ln 2\cdot }\big),
	\end{equation}
	which gives us a double exponential decay given that $\epsilon_{0,n}=2^{-n\Gamma'(\delta)/6}$. Inserting (\ref{ded}) in (\ref{securityreliability}), we conclude that we can find one realization $\{y_{mjl}\}_{(m,j,l)\in[M_{0,n}]\times[J_{n}]\times[L_{n}]}$ of $Y$, such that 
	\begin{equation}\label{reliabilitycrit}
	\min_{s\in S_{n}}\frac{1}{M_{0,n}J_{n}L_{n}}\sum_{m,j,l}\tr(\tilde{W}_{B,s}^{\otimes n}(y_{mjl})D_{B,mjl})\geq 1-\sqrt{\epsilon}_{n},
	\end{equation}
	\begin{equation}\label{reliabilitycriteve}
	\min_{s\in S_{n}}\frac{1}{M_{0,n}J_{n}L_{n}}\sum_{m,j,l}\tr(\tilde{W}_{E,s}^{\otimes n}(y_{mjl})D_{E,m})\geq 1-\sqrt{\epsilon}_{n}
	\end{equation}
	and 
	\begin{equation}\label{securitycrit}
	\max_{s\in S_{n}}\max_{m,j}\left\|\frac{1}{L_{n}}\sum_{l=1}^{L_{n}}Q_{s}^{u_{m}}(y_{mjl})-\Theta_{s}^{u_{m}}\right\|_{1}\leq \epsilon_{0,n}.
	\end{equation}
	Consider the stochastic encoder $E(\cdot|m,j):=\frac{1}{L_{n}}\sum_{l\in [L_{n}]}p^{n}_{X|Y}(\cdot|y_{mjl})$ and POVM $(D_{B,mj}:=\sum_{l\in [L_{n}]}D_{mjl})_{m,j}$. Therefore with $M_{c,n}=J_{n}$, we have 
	\begin{align}\label{averageb}
	\inf_{s\in S}\frac{1}{M_{0,n}M_{c,n}}\sum_{m_{0}\in[M_{0}],m_{c}\in[M_{c,n}]}&\tr(\sum_{\mathbf{x}\in\cX^{n}}E(\mathbf{x}|m_{0},m_{c})W_{B,s}^{\otimes n}(\mathbf{x})D_{B,m_{0},m_{c}})\nonumber\\&=\frac{1}{M_{0,n}M_{c,n}}\sum_{m_{0},m_{c}}\tr(\frac{1}{L_{n}}\sum_{\mathbf{x}\in\cX^{n}}\sum_{l\in [L_{n}]}p^{n}_{X|Y}(\cdot|y_{m_{0}m_{c}l})W_{B,s}^{\otimes n}(\mathbf{x})\sum_{l'\in [L_{n}]}D_{B,m_{0}m_{c}l'})\nonumber\\&\geq 1-\sqrt{\epsilon}_{n}-2n\tau_{n},
	\end{align}
	where in the last line we have inserted the bound from (\ref{reliabilitycrit}) and observed that the error due to $\{W_{s}\}_{s\in S_{n}}$ can only be $2n\tau_{n}$ less than the error due to $\cW$. By the same line of reasoning we have
	\begin{align}\label{averageeve}
	\inf_{s\in S}\frac{1}{M_{0,n}M_{c,n}}\sum_{m_{0}\in[M_{0}],m_{c}\in[M_{c,n}]}&\tr(\sum_{\mathbf{x}\in\cX^{n}}E(\mathbf{x}|m_{0},m_{c})W_{E,s}^{\otimes n}(\mathbf{x})D_{E,m_{0}})\nonumber\\&=\frac{1}{M_{0,n}M_{c,n}}\sum_{m_{0},m_{c}}\tr(\frac{1}{L_{n}}\sum_{\mathbf{x}\in\cX^{n}}\sum_{l\in [L_{n}]}p^{n}_{X|Y}(\cdot|y_{m_{0}m_{c}l})W_{E,s}^{\otimes n}(\mathbf{x})\sum_{l'\in [L_{n}]}D_{E,m_{0}})\nonumber\\&\geq 1-\sqrt{\epsilon}_{n}-2n\tau_{n}. 
	\end{align}
	The 5th claim in the statement of the lemma related to the security criterion requires upper bounding $\sup_{s\in S}I(M_{c};E|M_{0},\sigma_{s,n})$ for all $s\in S_{n}$, that is done in the following. First we observe that for all $s\in S$
	\begin{align}\label{cond:mutual:security}
	    	I(M_{c};E|M_{0},\sigma_{s,n})= \frac{1}{M_{0,n}}\sum_{m_{0}\in[M_{0,n}]}I(M_{c};E,\sigma_{s,n}^{m_{0}}),
	\end{align}
	with 
	\begin{equation*}
	    \sigma_{s,n}^{m_{0}}:=\frac{1}{M_{c,n}}\sum_{m_{c}\in[M_{c,n}]}\otimes\sum_{\mathbf{x}\in\cX^{n}}E(\mathbf{x}|m_{0},m_{c})W^{\otimes n}(\mathbf{x}).
	\end{equation*}
	We continue upper-bounding the mutual information on the right hands side of (\ref{cond:mutual:security}) for each $m_{0}\in[M_{0,n}]$. We note that for all $s\in S_{n}$
	\begin{align}\label{defin:cond:security}
	    I(M_{c};E,\sigma_{s,n}^{m_{0}})&=S\bigg(\frac{1}{M_{c,n}}\sum_{m_{c}\in[M_{c,n}]}\sum_{\mathbf{x}\in\cX^{n}}E(\mathbf{x}|m_{0},m_{c})W_{E,s}^{\otimes n}(\mathbf{x})\bigg)\nonumber\\&-\frac{1}{M_{c,n}}\sum_{m_{c}\in[M_{c,n}]}S\bigg(\sum_{\mathbf{x}\in\cX^{n}}E(\mathbf{x}|m_{0},m_{c})W_{E,s}^{\otimes n}(\mathbf{x})\bigg)=S\bigg(\frac{1}{M_{c,n}L_{n}}\sum_{j\in[M_{c,n}],l\in[L_{n}]}\tilde{W}_{E,s}^{\otimes n}(y_{m_{0}jl})\bigg)\nonumber\\&-\frac{1}{M_{c,n}}\sum_{m_{c}\in[M_{c,n}]}S\bigg(\sum_{l\in[L_{n}]}\tilde{W}_{E,s}^{\otimes n}(y_{m_{0}jl})\bigg)
	\end{align}
	Notice that, given (\ref{gm}) and (\ref{securitycrit}) and the triangle inequality we have for all $s\in S_{n}$
	\begin{equation}\label{close:to:uniform}
	    \parallel\sum_{l\in[L_{n}]}\tilde{W}_{E,s}^{\otimes n}(y_{m_{0}jl})-\Theta_{s}^{u_{m_{0}}}\parallel_{1}\leq\epsilon_{0,n}+\epsilon_{1,n}.
	\end{equation}
	Applying Lemma \ref{fannesap} with $\delta=\epsilon_{0,n}+\epsilon_{1,n}$, given (\ref{close:to:uniform}) and (\ref{defin:cond:security}) we obtain
	\begin{align}
	    I(M_{c};E,\sigma_{s,n}^{m_{0}})\leq 2\left(n(\epsilon_{0,n}+\epsilon_{1,n})\log\dim(\cH_{E})+h(\epsilon_{0,n}+\epsilon_{1,n})\right).
	\end{align}
 Inserting this into (\ref{cond:mutual:security}), we obtain the same upper bound on the conditional mutual information quantity on the left hand side for all $s\in S_{n}$. Given properties of the $\tau$-net (Lemma \ref{netslemma}), applying Lemma \ref{cond:mutual:inf:cont} with $\delta=2n\tau_{n}$ we obtain
	 \begin{align}\label{netsecurity}
	 \sup_{s\in S}I(M_{c};E|M_{0},\sigma_{s,n})&\leq\max_{s\in S_{n}}I(M_{c};E|M_{0},\sigma_{s,n})+2\left(2n^{2}\tau_{n}\log\dim(\cH_{B})+(1+2n\tau_{n})h(2n\tau_{n}/1+2n\tau_{n})\right)\nonumber\\&\leq 2\left(n(\epsilon_{0,n}+\epsilon_{1,n})\log\dim(\cH_{E})+h(\epsilon_{0,n}+\epsilon_{1,n})\right)\nonumber\\&+2\left(2n^{2}\tau_{n}\log\dim(\cH_{B})+(1+2n\tau_{n})h(2n\tau_{n}/1+2n\tau_{n})\right).
	 \end{align}
Given the upper bound on $S_{n}$, choosing $\nu=\frac{1}{8n|\cX|\dim(\cH_{B}\otimes\cH_{E})}\log\epsilon_{n}$, we obtain exponential decay of the right hand sides of (\ref{averageb}) and (\ref{averageeve}). Also, with this value of $\tau_{n}$ and choosing large enough values of $n$, (\ref{netsecurity}) gives us the 5th claim of the statement.\\
	 \end{proof}
\begin{proof}[Proof of Lemma \ref{directpartbcc}]
	According to Lemma \ref{publiclyenhancedbcc}, 
		\begin{equation*}
		(R_{0},R_{c})\in\bigcup_{p}\hat{C}^{(1)}\big(\mathcal{J},p,1\big)
		\end{equation*}
		implies $(R_{1},R_{c})\in C_{BCC}(\mathcal{J})$. Using standard double-blocking and time sharing arguments, for each $l\in\mathbb{N}$, 
		\begin{equation*}
		(R_{0},R_{c})\in \cl\left(\bigcup_{l=1}^{\infty}\bigcup_{p}\frac{1}{l}\hat{C}^{(1)}\big(\mathcal{J},p,l\big)\right),
		\end{equation*}
	implies	$(R_{0},R_{c})\in C_{BCC}(\mathcal{J})$.
\end{proof}
Here, in order to construct private codes for the broadcast channel, we first generated suitable random message transmission codes for the broadcast channel without imposing privacy constraints (Lemma \ref{lemmanonsecurebcc}). This was done by establishing suitable bounds for random universal "superposition codes". Subsequent application of a covering principle these codes where transformed to fulfill the security criterion in Lemma \ref{publiclyenhancedbcc}. 
Beside technical obstacles to construct superposition codes for cq broadcast channels which are robust regarding uncertainty of the channel state, the approach is rather traditional and even dates back to classical information theory (see e.g. \cite{korner} for a general discussion, the classical counterpart to our considerations can be found in \cite{shaefer}). \newline 

\subsection{TPC codes}\label{TPCcodes}
In this section, we prove the following lemma.  
\begin{lemma}\label{directparttpc}
	Let $\cW:=\{W_{s}\}_{s\in S}\subset CQ(\cX,\cH_{B}\otimes\cH_{E})$ be any compound cqq broadcast channel. It holds 
\begin{align*}
C_{TPC}[\cW]\supset\cl\left(\bigcup_{l=1}^{\infty}\bigcup_{p}\frac{1}{l}C^{(1)}\big(\cW,p,l\big)\right),
\end{align*}
	where the second union is taken over all $p_{VYX}\in\cP(\cV\times\mathcal{Y}\times\cX^{l})$ such that random variable $V-Y-X$ form a Markov chain and alphabets $\cV$ and $\mathcal{Y}$ are finite. 
\end{lemma}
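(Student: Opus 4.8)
The plan is to follow the two–stage construction of Section~\ref{Bcccodes} \emph{mutatis mutandis}. The only structural change is that in the TPC task the public message is required to be decoded by Bob alone, so the outer layer carries no constraint coming from Eve's marginal: Eve's decoding POVM and her outer error term simply disappear, and the outer rate bound $\inf_s\min\{I(U;B,\omega_s),I(U;E,\omega_s)\}$ of the BCC case becomes $\inf_s I(V;B,\omega_s)$. Every other estimate is identical to (indeed lighter than) the BCC analysis, so the whole argument is a simplification of the one in Section~\ref{Bcccodes}.

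First I would establish the TPC counterpart of Lemma~\ref{lemmanonsecurebcc}: for $p_{VYX}=q(v)r(y|v)p_{X|Y}(x|y)$ with $V-Y-X$ a Markov chain and $\delta>0$, for all large $n$ there are outer codewords $v_m\in T^n_{q,\delta}$, $m\in[M_{1,n}]$, and a random superposition code $(Y_{mj},D_{B,mj}(Y))_{(m,j)\in[M_{1,n}]\times[M_{2,n}]}$ with each $Y^m=(Y_{mj})_j$ i.i.d.\ $\sim r'_{n,\delta}(\cdot|v_m)$, such that $\tfrac1n\log M_{1,n}\ge\inf_s I(V;B,\omega_s)-c\delta$, $\tfrac1n\log M_{2,n}\ge\inf_s I(Y;B|V,\omega_s)-c\delta$, and $\mathbb{E}_Y\big[\inf_s\tfrac{1}{M_{1,n}M_{2,n}}\sum_{m,j}\tr(W^{\otimes n}_{B,s}(Y_{mj})D_{B,mj}(Y))\big]\ge 1-\epsilon_n$ with $\epsilon_n\to0$ exponentially. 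The argument copies that of Lemma~\ref{lemmanonsecurebcc}: pass to a finite $\tau_n$-net $\{W_s\}_{s\in S_n}$ with $\tau_n=2^{-n\nu/2}$ by Lemma~\ref{netslemma}; build the outer code by applying Lemma~\ref{corollary} to the effective channel $\hat W_{B,s}(v)=\sum_y r(y|v)W_{B,s}(y)$ and the distribution $q$, and fix a good realization $(v_m)_m$; for each $v_m$ build the inner code by applying Lemma~\ref{mosonyivariation} to $\{W_{B,s}\}_{s\in S_n}$, $r(\cdot|v)$ and the pruned distribution $r'_{n,\delta}(\cdot|v_m)$; glue the two decoders via $D_{B,mj}(Y):=\sqrt{D_{B,m}}\,\Lambda_j(Y^{v_m})\sqrt{D_{B,m}}$, using Lemma~\ref{ineq} and concavity of the square root exactly as in~(\ref{claim3bcc}), and controlling the gap between the pruned and the i.i.d.\ inputs as in~(\ref{prunedtoiid}); finish by net approximation. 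Nothing about Eve enters this step.

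Next I would prove the TPC counterpart of Lemma~\ref{publiclyenhancedbcc}. With $\tilde W_s(y)=\sum_x p_{X|Y}(x|y)W_s(x)$, split $[M_{2,n}]$ as $[J_n]\times[L_n]$ with $J_n\approx 2^{n(\inf_s I(Y;B|V,\omega_s)-\sup_s I(Y;E|V,\omega_s)-2\Delta(\delta)-c\delta)}$ and $L_n\approx 2^{n(\sup_s I(Y;E|V,\omega_s)+\Delta(\delta))}$, put $M_{c,n}=J_n$, and from a suitable realization of the Step~1 code define the stochastic encoder $E(\cdot|m,j)=\tfrac1{L_n}\sum_{l\in[L_n]}p^n_{X|Y}(\cdot|y_{mjl})$ and Bob's POVM $D_{B,mj}=\sum_{l\in[L_n]}D_{B,mjl}$. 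Bob's reliability follows from that realization plus a $2n\tau_n$ net correction, as in~(\ref{averageb}); there is now no Eve-reliability claim. For the security term $\sup_{s\in S}I(M_c;E|M_1,\sigma_{s,n})$ I would repeat verbatim the covering argument of Lemma~\ref{publiclyenhancedbcc}: with the conditionally typical projections of Lemma~\ref{typicalsets} (with $V$ in place of $U$) and the gentle-measurement bound (Lemma~\ref{gentle}), replace $\tilde W^{\otimes n}_{E,s}(\mathbf y)$ by the truncated operators $Q^{\mathbf v}_s(\mathbf y)$ satisfying the analogue of~(\ref{equation32}); apply Theorem~\ref{coveringlemma} with $d\le 2^{S(E|V,\omega_s^{\otimes n})+n\Delta(\delta)}$, $\mu=2^{-S(E|YV,\omega_s^{\otimes n})+n\Gamma'(\delta)}$, $\epsilon=2^{-n\Gamma'(\delta)/6}$; the choice of $L_n$ makes the exponent double-exponential, so a union bound over $s\in S_n$, $m\in[M_{1,n}]$, $j\in[J_n]$ still leaves Eve's averaged state within $\epsilon_{0,n}+\epsilon_{1,n}$ of $\Theta^{v_m}_s$; Fannes-type continuity (Lemma~\ref{fannesap}) then bounds $I(M_c;E,\sigma^m_{s,n})$ uniformly on $S_n$, and Lemma~\ref{cond:mutual:inf:cont} transfers the bound to all $s\in S$. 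Choosing $\nu$ as in the BCC proof makes every error term vanish exponentially; this gives $(R_1,R_c)\in C_{TPC}[\cW]$ for all $(R_1,R_c)\in\bigcup_p C^{(1)}(\cW,p,1)$.

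Finally, applying the standard double-blocking argument to the blocked compound channel $\{W^{\otimes l}_s\}_{s\in S}$ promotes the single-letter inclusion to $\tfrac1l\,C^{(1)}(\cW,p,l)\subset C_{TPC}[\cW]$ for every $l\in\mathbb{N}$; taking the union over $l$ and over admissible $p_{VYX}$ and then the closure — no extra convexification is needed, the set being already convex by Remark~\ref{convex:remark} — yields precisely the inclusion claimed in Lemma~\ref{directparttpc}. As in the BCC case the one genuine obstacle is Step~1, the construction of universal superposition codes robust to the uncertainty set; it is here lighter than in Section~\ref{Bcccodes} because the outer layer need only be decoded by Bob, and I expect the only delicate points to be the interplay of the net parameter $\tau_n$ with the double-exponential decay from Theorem~\ref{coveringlemma} and the bookkeeping around the pruned inner distributions $r'_{n,\delta}(\cdot|v_m)$, both already carried out in Section~\ref{Bcccodes}.
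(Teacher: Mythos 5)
Your proposal is correct and follows essentially the same route as the paper: the paper proves Lemma \ref{directparttpc} via Lemma \ref{lemmanonsecuretpc} (the Bob-only superposition code, obtained from the proof of Lemma \ref{lemmanonsecurebcc} with $\gamma=\{B\}$) and Lemma \ref{publiclyenhancedtpc} (privacy amplification on the $[M_{2,n}]$ layer, dropping the Eve-reliability event (\ref{droppedcondition})), followed by the same double-blocking step. Your outline reproduces exactly these three stages, including the rate splitting $[J_n]\times[L_n]$, the covering-lemma and continuity arguments, and the net corrections.
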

The main step towards proving Lemma \ref{directparttpc}, is the following statement. 
\begin{lemma}[Broadcast channel with confidential messages ]\label{publiclyenhancedtpc}
	Let $\cW:=\{W_{s}\}_{s\in S}\subset CQ(\cX,\cH_{B}\otimes\cH_{E})$ be any compound cqq broadcast channel. For $p_{VYX}\in\cP(\cV\times\cY\times\cX)$ where $V-Y-X$ form a Markov chain and $\delta,\epsilon>0$, there exists $n_{0}\in\mathbb{N}$, such that for $n>n_{0}$, we find an $(n,M_{1,n},M_{c,n})$ TPC code $\{E(\cdot|m),D_{B,m},D_{E,m_{1}}\}_{m=(m_{1},m_{c})\in [M_{1,n}]\times[M_{c,n}]}$ with
	\begin{enumerate}
		\item $\frac{1}{n}\log M_{1,n}\geq\inf_{s\in S} I(V;B,\omega_{s})-c\delta$,
		\item $\frac{1}{n}\log M_{c,n}\geq\inf_{s\in S}I(Y;B|V,\omega_{s})-\sup_{s\in S}I(Y;E|V,\omega_{s})-c\delta$\\
		with some constants $c>0$ and $\omega_{s}$ defined by (\ref{evaluationstate}). 
		\item $\inf_{s\in S}\frac{1}{|\mathbf{M}|}\sum_{\mathbf{m}\in\mathbf{M}}\sum_{\mathbf{x}\in\cX^{n}}E(\mathbf{x}|\mathbf{m})\tr[W_{B,s}^{\otimes n}(\mathbf{x})D_{B,\mathbf{m}}]\geq 1-\epsilon$
		\item $\sup_{s\in S}I(M_{c};E|M_{1},\sigma_{s,n})\leq\epsilon$
	\end{enumerate}
	with state $\sigma_{s,n}$ defined by (\ref{securitystate}).
\end{lemma}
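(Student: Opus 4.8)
The plan is to mirror the proof of Lemma \ref{publiclyenhancedbcc} almost verbatim, exploiting the structural simplification that in the TPC scenario Eve is no longer required to decode the broadcast message $m_1$. First I would factor the given distribution as $p_{VYX}(v,y,x)=q(v)r(y|v)p_{X|Y}(x|y)$, introduce the effective channel $\tilde W_s(\cdot):=\sum_{x\in\cX}p_{X|Y}(x|\cdot)W_s(x)$, and approximate $\{W_s\}_{s\in S}$ by a finite $\tau_n$-net $\{W_s\}_{s\in S_n}$ with $\tau_n:=2^{-n\nu/2}$, choosing $\nu$ at the end so that $\log|S_n|\leq 2|\cX|\dim(\cH_B\otimes\cH_E)^2(\log 6+n\nu/2)$ (Lemma \ref{netslemma}). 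Then I would set
\begin{align*}
M_{1,n}&=\lfloor 2^{n(\min_{s\in S_n}I(V;B,\omega_s)-c\delta)}\rfloor,\\
J_{n}&=\lfloor 2^{n(\min_{s\in S_n}I(Y;B|V,\omega_s)-\max_{s\in S_n}I(Y;E|V,\omega_s)-2\Delta(\delta)-c\delta)}\rfloor,\\
L_{n}&=\lceil 2^{n\max_{s\in S_n}I(Y;E|V,\omega_s)+n\Delta(\delta)}\rceil,
\end{align*}
so that $M_{c,n}:=J_n$ meets claim 2.

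The key difference from the BCC case is in which reliability statement I invoke. Rather than Lemma \ref{lemmanonsecurebcc}, I would apply the single-user version — essentially Lemma \ref{mosonyivariation} applied to the superposition structure $\cV\to\cY$ for the channel $\{\tilde W_{B,s}\}_{s\in S_n}$ — to get a random family $\cC(Y)=(Y_{mjl},D_{B,mjl}(Y))$ of superposition codes over index sets $[M_{1,n}]\times[J_n]\times[L_n]$ whose expected decoding error for Bob (only) goes to zero exponentially; the $D_{E,m_1}$ POVM is in fact not needed in the TPC code definition, so the clause $D_{E,m_1}$ in the lemma statement is vestigial and I would simply drop the corresponding event $\mathbf B$ from the BCC proof. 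By Markov's inequality the bad event $\mathbf A$ (Bob's averaged error $\geq\sqrt{\epsilon_n}$) has probability $\leq\sqrt{\epsilon_n}$. For security, the argument is unchanged: using $\Pi_{\tilde W_{E,s},\mathbf u,\delta}$ and the conditionally typical projections from Lemma \ref{typicalsets}, I would define the truncated operators $Q_s^{\mathbf u}(\mathbf y)$ and their pruned averages $\Theta_s^{\mathbf u}$, note $\|\tilde W_{E,s}^{\otimes n}(\mathbf y)-Q_s^{\mathbf u}(\mathbf y)\|_1\leq\epsilon_{1,n}$ with $\epsilon_{1,n}\to 0$ exponentially, and invoke Theorem \ref{coveringlemma} with $d\leq 2^{S(E|V,\omega_s^{\otimes n})+n\Delta(\delta)}$, $\mu=2^{-S(E|YV,\omega_s^{\otimes n})+n\Gamma'(\delta)}$, $\epsilon=\epsilon_{0,n}:=2^{-n\Gamma'(\delta)/6}$ to bound $\Pr[\mathbf C_{s,m,j}]$ where $\mathbf C_{s,m,j}$ is the event that $\|\frac{1}{L_n}\sum_l Q_s^{u_m}(Y_{mjl})-\Theta_s^{u_m}\|_1>\epsilon_{0,n}$. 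The choice of $L_n$ forces a double-exponential decay that beats the union bound over $s\in S_n$, $j\in[J_n]$, $m\in[M_{1,n}]$, so $\Pr[\mathbf A\cup\mathbf C]<1$ for $n$ large.

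Picking a good realization $\{y_{mjl}\}$ outside $\mathbf A\cup\mathbf C$, I would define the stochastic encoder $E(\cdot|m,j):=\frac{1}{L_n}\sum_{l\in[L_n]}p^n_{X|Y}(\cdot|y_{mjl})$ and the POVM $D_{B,mj}:=\sum_{l\in[L_n]}D_{B,mjl}$; claim 3 then follows from the Bob-reliability estimate plus the $2n\tau_n$ net-approximation slack, and claim 1 holds by construction of $M_{1,n}$. For claim 4, exactly as in (\ref{cond:mutual:security})–(\ref{defin:cond:security}) I would write $I(M_c;E|M_1,\sigma_{s,n})=\frac{1}{M_{1,n}}\sum_{m_1}I(M_c;E,\sigma_{s,n}^{m_1})$, bound each $I(M_c;E,\sigma_{s,n}^{m_1})$ using $\|\sum_l\tilde W_{E,s}^{\otimes n}(y_{m_1 jl})-\Theta_s^{u_{m_1}}\|_1\leq\epsilon_{0,n}+\epsilon_{1,n}$ and the Fannes-type continuity bound (Lemma \ref{fannesap}), and finally transfer from $S_n$ to $S$ via the conditional-mutual-information continuity lemma (Lemma \ref{cond:mutual:inf:cont}) with parameter $2n\tau_n$. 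Choosing $\nu$ (hence $\tau_n$) so that all right-hand sides decay, and $n$ large, yields all four claims. The only genuinely new bookkeeping relative to the BCC proof is the superposition reliability statement for Bob over the three-fold index set $[M_{1,n}]\times[J_n]\times[L_n]$ with $V$ playing the role of the outer codeword; I expect the main (minor) obstacle to be carefully restating the analogue of Lemma \ref{lemmanonsecurebcc} for the public case — or, more simply, observing that since Eve decodes nothing, one can collapse $(j,l)$ into a single inner index and apply Lemma \ref{mosonyivariation} directly with outer distribution $q$ and inner conditional $r(\cdot|v)$, so no separate lemma is even required.
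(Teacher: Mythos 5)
Your plan is essentially the paper's own proof: the paper simply invokes Lemma \ref{lemmanonsecuretpc} (the Bob-only analogue of Lemma \ref{lemmanonsecurebcc}, obtained by running that proof with $\gamma=\{B\}$) and then repeats the privacy-amplification step from the proof of Lemma \ref{publiclyenhancedbcc}, bounding only the events $\mathbf{A}$ and $\mathbf{C}$ and dropping (\ref{droppedcondition}), exactly as you describe with the rate choices $M_{1,n}$, $J_n$, $L_n$ and the transfer from $S_n$ to $S$ via the net. Only your closing aside overreaches: collapsing $(j,l)$ and applying Lemma \ref{mosonyivariation} alone would not give Bob a decoder for the outer public index $m_{1}$, since that lemma only decodes the inner messages for a fixed typical outer word — you still need the outer code from Lemma \ref{corollary} applied to the effective channel $\hat{W}_{B,s}$ and the square-root POVM combination, which is precisely the content of Lemma \ref{lemmanonsecuretpc}; your main plan, however, does include this superposition step, so the argument stands.
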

Applying standard double-blocking arguments on Lemma \ref{publiclyenhancedtpc}, will prove Lemma \ref{directparttpc}. We prove Lemma \ref{publiclyenhancedtpc} in two steps. At first, we prove the following random coding result, that guarantees reliable decoding of public messages by Bob and Eve, and reliable decoding of public messages by Bob. In the next step, we apply privacy amplification arguments on the public part of the codebook, to achieve the desired confidential message transmission rate. 
\begin{lemma}\label{lemmanonsecuretpc}
Let $\cW:=\{W_{s}\}_{s\in S}\subset CQ(\cY,\cH_{B}\otimes\cH_{E})$ be any compound cqq broadcast channel and $\cV$ be a finite  alphabet. For any $\delta>0$, $q\in\cP(\cV), r(\cdot|v)\in\cP(\cY)$, $v\in\cV$ and large enough values of $n$, the following exist.
	\begin{itemize}
	\item A family $(v_{m})_{m\in[M_{2,n}]}$ of words with $v_{m}\in T_{q,\delta}^{n}$.
	\item A map $y:(y_{ij})_{(i,j)\in[M_{1,n}]\times[M_{2,n}]}\mapsto(D_{B,ij}(y))_{(i,j)\in[M_{1,n}]\times[M_{2,n}]} $, such that $(D_{B,ij}(y))_{(i,j)\in[M_{1,n}]\times[M_{2,n}]}\in\cL(\cH_{B}^{\otimes n})$ is a POVM and for any family $Y=(Y_{ij})_{(i,j)\in[M_{1,n}]\times[M_{2,n}]}$ of random variables such that for each $m\in[M_{1,n}]$, $Y^{m}=(Y_{mj})_{j\in[M_{2,n}]}$ is  distributed i.i.d according to $r'(\cdot|v_{m})$  we have
\begin{equation*}
    \frac{1}{n}\log M_{1,n}\geq \inf_{s\in S}I(V;B,\omega_{s})-c\delta,
\end{equation*}
\begin{equation*}
    \frac{1}{n}\log M_{2,n}\geq \inf_{s\in S}I(Y;B|V,\omega_{s})-c\delta,
\end{equation*}
		\begin{equation*}
		    \mathbb{E}_{Y}\bigg[\inf_{s\in S}\frac{1}{M_{1,n}M_{2,n}}\sum_{(m,i)\in[M_{1,n}]\times[M_{2,n}]}\tr(W_{B,s}^{\otimes n}(Y_{mi})D_{B,mi}(Y)\bigg]\geq 1-\epsilon_{n},
		\end{equation*}
		with $\epsilon_{n}\to 0$ exponentially, constant $c>0$ and $\omega_{s}=\sum_{v\in\cV}q(u)\ket{v}\bra{v}\otimes r(y|v)\ket{y}\bra{y}\otimes W_{s}(y)$.
	   \end{itemize}
\end{lemma}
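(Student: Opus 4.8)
The plan is to follow the proof of Lemma~\ref{lemmanonsecurebcc} essentially verbatim, with the simplification that in the TPC scenario there is no decoding requirement for Eve, so all of the $(D_{E,m})$-machinery and the Eve-reliability estimate are dropped and only Bob's marginals $\{W_{B,s}\}_{s\in S}$ are relevant. (The index set $[M_{2,n}]$ in the first bullet of the statement should read $[M_{1,n}]$.) First I would fix a finite $\tau_{n}$-net $\{W_{s}\}_{s\in S_{n}}\subset\{W_{s}\}_{s\in S}$ with $\tau_{n}:=2^{-n\nu/2}$, $\nu>0$ to be chosen later, satisfying $\log|S_{n}|\leq 2|\cX|\dim(\cH_{B}\otimes\cH_{E})^{2}(\log 6+n\nu/2)$ via Lemma~\ref{netslemma}. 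Introduce the effective cq channels $\hat W_{B,s}:\cV^{n}\to\cS(\cH_{B}^{\otimes n})$, $\hat W_{B,s}(\cdot):=\sum_{y\in\cY}r(y|\cdot)W_{B,s}(y)$, and apply Lemma~\ref{corollary} to the set $\{\hat W_{B,s}\}_{s\in S_{n}}$ and the distribution $q$. This produces a random $(n,M_{1,n})$ code with codewords $(V_{1},\dots,V_{M_{1,n}})$ i.i.d.\ $\sim q'_{n,\delta}$ and POVMs $(D_{B,m}(V))_{m}$ obeying $\mathbb{E}_{V}[\min_{s\in S_{n}}\frac{1}{M_{1,n}}\sum_{m}\tr(D_{B,m}(V)\hat W_{B,s}^{\otimes n}(V_{m}))]\geq 1-\epsilon_{0,n}$ with $\epsilon_{0,n}\to 0$ exponentially and $\frac{1}{n}\log M_{1,n}\geq\min_{s\in S_{n}}I(V;B,\omega_{s})-c_{0}\delta$; fixing a good realization gives words $v_{1},\dots,v_{M_{1,n}}\in T^{n}_{q,\delta}$ and POVMs $(D_{B,m})_{m}$ with $\min_{s\in S_{n}}\frac{1}{M_{1,n}}\sum_{m}\tr(D_{B,m}\hat W_{B,s}^{\otimes n}(v_{m}))\geq 1-\epsilon_{0,n}$.

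For the second message I would apply Lemma~\ref{mosonyivariation} to $\{W_{B,s}\}_{s\in S_{n}}$ and the conditional distribution $r(\cdot|v)$: for each fixed word $v_{\hat{m}}$ it yields a random code $\cC(Y^{v_{\hat{m}}})$ with codewords $Y^{v_{\hat{m}}}=(Y_{\hat{m},1},\dots,Y_{\hat{m},M_{2,n}})$ i.i.d.\ $\sim r'_{n,\delta}(\cdot|v_{\hat{m}})$ and decoding POVM $(\Lambda_{m}(Y^{v_{\hat{m}}}))_{m}$ satisfying $\mathbb{E}[\inf_{s\in S_{n}}\frac{1}{M_{2,n}}\sum_{m}\tr(\Lambda_{m}(Y^{v_{\hat{m}}})W_{B,s}^{\otimes n}(Y_{\hat{m},m}))]\geq 1-\epsilon_{1,n}$ and $\frac{1}{n}\log M_{2,n}\geq I(Y;B|V,\omega_{s})-c_{1}\delta$. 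Using the pruned-to-i.i.d.\ estimate exactly as in~(\ref{prunedtoiid}), $\|\hat W_{B,s}^{\otimes n}(v_{\hat{m}})-\sum_{\mathbf{y}}r'_{n,\delta}(\mathbf{y}|v_{\hat{m}})W_{B,s}^{\otimes n}(\mathbf{y})\|_{1}\leq 2\cdot 2^{-n\delta}$, the fixed-realization inequality of the previous paragraph transfers (cf.~(\ref{bobdecodingcommonbcc})) to $\min_{s\in S_{n}}\frac{1}{M_{1,n}}\sum_{\hat{m}}\mathbb{E}_{Y^{v_{\hat{m}}}}[\frac{1}{M_{2,n}}\sum_{m}\tr(D_{B,\hat{m}}W_{B,s}^{\otimes n}(Y_{\hat{m},m}))]\geq 1-\epsilon_{2,n}$ with $\epsilon_{2,n}:=\epsilon_{0,n}+2\cdot 2^{-n\delta}$.

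Next I would define the superposition decoding POVM $D_{B,\hat{m},m}(Y):=\sqrt{D_{B,\hat{m}}}\,\Lambda_{m}(Y^{v_{\hat{m}}})\,\sqrt{D_{B,\hat{m}}}$ on $\cH_{B}^{\otimes n}$ and estimate its error exactly as in~(\ref{claim3bcc}): expanding the expectation over $Y$, using the operator inequality of Lemma~\ref{ineq} to pass from $\sqrt{D_{B,\hat{m}}}\Lambda_{m}\sqrt{D_{B,\hat{m}}}$ to $\Lambda_{m}$ up to a correction bounded by $2\sqrt{1-\mathbb{E}[\cdots\tr(D_{B,\hat{m}}W_{B,s}^{\otimes n}(\cdot))]}$, inserting the two reliability bounds, and invoking concavity of $\sqrt{\cdot}$ and Jensen's inequality gives $\mathbb{E}_{Y}[\frac{1}{|S_{n}|}\sum_{s\in S_{n}}\frac{1}{M_{1,n}M_{2,n}}\sum_{\hat{m},m}\tr(D_{B,\hat{m},m}(Y)W_{B,s}^{\otimes n}(Y_{\hat{m},m}))]\geq 1-\epsilon_{1,n}-2\sqrt{\epsilon_{2,n}}$. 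Finally, choosing $\nu$ so that $\tau_{n}$ decays faster than all exponential error rates involved and absorbing the $O(n\tau_{n})$ net-approximation loss (as in the proof of Lemma~\ref{mosonyivariation}) converts the average over $S_{n}$ into the infimum over all of $S$, yielding the stated bound with some $\epsilon_{n}\to 0$ exponentially and a constant $c>0$.

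I expect the only substantive step to be the same one as in the BCC case: controlling the cross terms created by sandwiching the second-message POVM $\Lambda_{m}(Y^{v_{\hat{m}}})$ between $\sqrt{D_{B,\hat{m}}}$, i.e.\ checking that the Hayashi--Nagaoka-type inequality of Lemma~\ref{ineq} together with the two independent reliability estimates really do combine to an error that still vanishes exponentially in $n$. Everything else --- the rate calculations, the pruned-distribution estimates, and the passage from the finite net $S_{n}$ back to $S$ --- is bookkeeping already carried out in full in the proof of Lemma~\ref{lemmanonsecurebcc}, and the absence of the Eve decoder only shortens the argument.
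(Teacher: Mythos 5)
Your proposal is correct and is exactly the paper's route: the paper proves this lemma by the one-line remark that one follows the proof of Lemma \ref{lemmanonsecurebcc} verbatim with $\gamma=\{B\}$ (i.e.\ dropping the Eve decoder), which is precisely the argument you have written out, including the net approximation, the application of Lemma \ref{corollary} for the first layer, Lemma \ref{mosonyivariation} for the second, the pruned-to-i.i.d.\ estimate, and the sandwiched POVM with Lemma \ref{ineq}. Your parenthetical observation that the index set in the first bullet should be $[M_{1,n}]$ is also a correct reading of a typo in the statement.
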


\begin{proof}
The proof is done by following exactly the lines in proof of Lemma\ref{lemmanonsecurebcc}, except that here $\gamma=\{B\}$.
\end{proof}
\begin{proof}[Proof of Lemma \ref{publiclyenhancedtpc}]
The proof follows by applying the privacy amplification arguments in the proof of Lemma \ref{publiclyenhancedbcc}, on $[M_{2,n}]$ part of the messages in Lemma \ref{lemmanonsecuretpc}. It is clear that here, we only consider upper bounding the probability of events corresponding to events $\mathbf{A}$ and $\mathbf{C}$ in the proof of that Lemma \ref{publiclyenhancedbcc}, and drop (\ref{droppedcondition}). 
\end{proof}
\begin{proof}[Proof of Lemma \ref{directparttpc}]
	According to Lemma \ref{publiclyenhancedtpc}, 
		\begin{equation*}
		(R_{1},R_{c})\in\bigcup_{p}C^{(1)}\big(\mathcal{J},p,1\big)
		\end{equation*}
		implies $(R_{1},R_{c})\in C_{TPC}(\mathcal{J})$. Using standard double-blocking and time sharing arguments, for each $l\in\mathbb{N}$, 
		\begin{equation*}
		(R_{1},R_{c})\in \cl\left(\bigcup_{l=1}^{\infty}\bigcup_{p}\frac{1}{l}C^{(1)}\big(\mathcal{J},p,l\big)\right),
		\end{equation*}
	implies	$(R_{1},R_{c})\in C_{TPC}(\mathcal{J})$.
\end{proof}
\section{Outer bounds for the capacity regions} \label{sect:outer_bounds}
In this section, we consider the "converse" bounds stated in Theorem \ref{bcctheorem} and Theorem \ref{tpctheorem}. The arguments of proof turn out to be fairly standard. Therefore, we restrict ourselves to providing proof details regarding the outer bound to the BCC capacity regions from Theorem \ref{bcctheorem}.
\begin{proposition}\label{prop:bcc_converse}
Let $\cW := \{W_s\}_{s \in S}$, $W_s: \cX \rightarrow \cS(\cH_B \otimes \cH_E)$, $(s \in S)$ be a set of cqq channels. It holds 
\begin{align*}
C_{BCC}[\cW]&\subset\cl \left(\bigcup_{l=1}^{\infty}\bigcup_{p}\frac{1}{l}\hat{C}^{(1)}(\cW,p,l)\right).
\end{align*}
	The second union is taken over all $p_{UYX}\in\cP(\cU\times\mathcal{Y}\times\cX^{l})$ such that random variable $U-Y-X$ form a Markov chain and alphabets $\cU$ and $\mathcal{Y}$ are finite. 
\end{proposition}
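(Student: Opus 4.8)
The plan is to establish the converse bound by a by-now-standard multi-letter argument: take an arbitrary achievable BCC rate pair $(R_0,R_c)$, fix a sequence of good codes realizing it, and extract from the $n$-th code an auxiliary channel input distribution of block length $l=n$ so that the normalized single-letter region $\tfrac1n\hat C^{(1)}(\cW,p,n)$ contains a point close to $(R_0,R_c)$. The identification is: let $M_0$ be the random common message (uniform on $[M_{0,n}]$), $M_c$ the random confidential message, and define $U:=M_0$, $Y:=(M_0,M_c)$, with $X$ the (stochastic) channel input $\mathbf X\in\cX^n$ produced by the encoder. The Markov chain $U-Y-X$ then holds by construction, so $p=p_{UYX}$ is an admissible choice in the union over block length $l=n$.

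\textbf{Key steps, in order.} First, bound the common rate: since $m_0$ must be decoded reliably by both Bob and Eve from $W_{s}^{\otimes n}$, Fano's inequality applied to each of the two decoders gives $\log M_{0,n}\le I(U;B,\omega_s)+n\epsilon_n'$ and $\log M_{0,n}\le I(U;E,\omega_s)+n\epsilon_n'$ for the relevant $s$, hence $R_0-\delta\le \tfrac1n\min\{I(U;B,\omega_s),I(U;E,\omega_s)\}+\epsilon_n'$; taking the infimum over $s\in S$ on the right (the code must work for all $s$) yields $R_0\le \tfrac1n\inf_s\min\{\dots\}+o(1)$. Second, bound the confidential rate: reliability of Bob's decoder for the pair $(m_0,m_c)$ gives, again by Fano, $\log M_{c,n}\le I(Y;B|U,\omega_s)+n\epsilon_n'$ for each $s$; the security constraint $\sup_s I(M_c;E|M_0,\sigma_{s,n})\le\epsilon$ gives $I(Y;E|U,\omega_s)\le n\epsilon$ for all $s$. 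Combining, $\log M_{c,n}\le I(Y;B|U,\omega_s)-I(Y;E|U,\omega_s)+n(\epsilon+\epsilon_n')$ for each fixed $s$, so $R_c-\delta\le \tfrac1n\big(\inf_s I(Y;B|U,\omega_s)-\sup_s I(Y;E|U,\omega_s)\big)+\epsilon+\epsilon_n'$. Third, conclude: the extracted $p$ witnesses that $(R_0-\delta',R_c-\delta')\in\tfrac1n\hat C^{(1)}(\cW,p,n)$ for a suitable $\delta'\to0$; letting $\epsilon,\delta\to0$ and taking the union over all $l$ together with the closure gives $(R_0,R_c)\in\cl\big(\bigcup_l\bigcup_p\tfrac1l\hat C^{(1)}(\cW,p,l)\big)$.

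\textbf{The main obstacle} is the confidential-rate step, specifically turning the operational security criterion into the single-letter penalty term $-\sup_s I(Y;E|U,\omega_s)$ with the quantifiers in the right place. One must be careful that the security condition controls $I(M_c;E|M_0,\sigma_{s,n})$ simultaneously for all $s\in S$ while the Bob-reliability gives $I(Y;B|U,\omega_s)$ only channel-by-channel, so the bound $\log M_{c,n}\le \inf_s I(Y;B|U,\omega_s)-\sup_s I(Y;E|U,\omega_s)+o(n)$ follows by first writing it for each $s$ and then optimizing — the infimum on the reliability term and supremum on the secrecy term are both legitimate because the inequality holds for \emph{every} $s$. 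A secondary technical point is that the auxiliary alphabets $\cU,\cY$ produced this way have size $M_{0,n}$ resp.\ $M_{0,n}M_{c,n}$, i.e.\ they grow with $n$; this is fine since Theorem \ref{bcctheorem} takes the union over all finite alphabets and all block lengths $l$, so no cardinality bound is needed for the converse. A standard continuity/Fano estimate (e.g.\ via Lemma \ref{fannesap} or the Alicki–Fannes inequality already invoked in the paper) supplies the $\epsilon_n'\to0$ terms, and the remaining manipulations are routine.

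\begin{proof}[Proof sketch of Proposition \ref{prop:bcc_converse}]
Let $(R_0,R_c)$ be an achievable BCC rate pair and fix $\epsilon,\delta>0$. For $n$ large there is an $(n,M_{0,n},M_{c,n})$ BCC code $\cC=(E(\cdot|\mathbf m),D_{B,\mathbf m},D_{E,m_0})$ with $\tfrac1n\log M_{i,n}\ge R_i-\delta$, with both decoding errors $\le\epsilon$ uniformly in $s$, and with $\sup_{s\in S}I(M_c;E|M_0,\sigma_{s,n})\le\epsilon$. Let $M_0,M_c$ be independent uniform random variables on $[M_{0,n}],[M_{c,n}]$ and let $\mathbf X$ be the induced channel input; set $U:=M_0$, $Y:=(M_0,M_c)$, $X:=\mathbf X$, so that $U-Y-X$ is a Markov chain and the corresponding state is $\omega_s$ as in \eqref{evaluationstate} with $l=n$.

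\emph{Common message.} Reliable decoding of $m_0$ by Bob from $W_{B,s}^{\otimes n}$ and by Eve from $W_{E,s}^{\otimes n}$, together with Fano's inequality, yields for every $s\in S$
\begin{align*}
\log M_{0,n}\ \le\ \min\{I(U;B,\omega_s),\,I(U;E,\omega_s)\}+n\varepsilon_n,
\end{align*}
with $\varepsilon_n\to0$ as $n\to\infty$ (uniformly, since the error bound is uniform in $s$). Taking the infimum over $s$ gives $R_0-\delta\le \tfrac1n\inf_{s\in S}\min\{I(U;B,\omega_s),I(U;E,\omega_s)\}+\varepsilon_n$.

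\emph{Confidential message.} Reliable decoding of $(m_0,m_c)$ by Bob gives, for every $s\in S$, $\log(M_{0,n}M_{c,n})\le I(Y;B,\omega_s)+n\varepsilon_n$, hence $\log M_{c,n}\le I(Y;B|U,\omega_s)+n\varepsilon_n'$ after subtracting the common-message bound and using the chain rule. The security constraint gives $I(Y;E|U,\omega_s)=I(M_c;E|M_0,\sigma_{s,n})\le\epsilon n'$ wait --- more precisely $I(M_c;E|M_0,\sigma_{s,n})\le\epsilon$ for all $s$, so subtracting a nonnegative quantity bounded by $\epsilon$ changes nothing by more than $\epsilon$: for every $s\in S$,
\begin{align*}
\log M_{c,n}\ \le\ I(Y;B|U,\omega_s)-I(Y;E|U,\omega_s)+n\varepsilon_n'+\epsilon .
\end{align*}
Since this holds for each $s$, we may take the infimum of $I(Y;B|U,\omega_s)$ and the supremum of $I(Y;E|U,\omega_s)$ on the right, obtaining $R_c-\delta\le \tfrac1n\big(\inf_{s\in S}I(Y;B|U,\omega_s)-\sup_{s\in S}I(Y;E|U,\omega_s)\big)+\varepsilon_n'+\tfrac{\epsilon}{n}$.

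\emph{Conclusion.} The chosen $p=p_{UYX}$ (with finite alphabets $\cU=[M_{0,n}]$, $\cY=[M_{0,n}]\times[M_{c,n}]$) therefore satisfies $(R_0-\delta-\varepsilon_n,\,R_c-\delta-\varepsilon_n'-\tfrac{\epsilon}{n})\in\tfrac1n\hat C^{(1)}(\cW,p,n)$. Since $\varepsilon_n,\varepsilon_n'\to0$ and $\epsilon,\delta>0$ were arbitrary, $(R_0,R_c)$ lies in the closure of $\bigcup_{l\ge1}\bigcup_p\tfrac1l\hat C^{(1)}(\cW,p,l)$, which proves the claim.
\end{proof}
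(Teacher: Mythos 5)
Your proposal is correct and takes essentially the same route as the paper: identify $U=M_{0}$, $Y=(M_{0},M_{c})$ with the stochastic encoder output as $X$ (so $U-Y-X$ is Markov), bound $\log M_{0,n}$ and $\log M_{c,n}$ via Fano plus the Holevo/data-processing bound, handle the secrecy term through the uniform bound $\sup_{s\in S}I(M_{c};E|M_{0},\sigma_{s,n})\leq\epsilon$, and finish with the blowup/closure argument. One caution: your assertion that taking the infimum on the reliability term and the supremum on the secrecy term is legitimate ``because the inequality holds for every $s$'' is not, by itself, a valid inference from a per-$s$ inequality of the form $\log M_{c,n}\leq I(Y;B|U,\omega_{s})-I(Y;E|U,\omega_{s})+o(n)$; what actually justifies it (and what both you and the paper in effect use) is the combination of $\log M_{c,n}\leq \inf_{s\in S}I(Y;B|U,\omega_{s})+n\varepsilon_{n}'$ with the uniform bound $\sup_{s\in S}I(Y;E|U,\omega_{s})\leq\epsilon$, at the cost of the additive $\epsilon$ you correctly carry along.
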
 
\begin{proof}
	Let $(\cC_n)_{n \in \bbmN}$ be a sequence of $(n,M_{1,n},M_{c, n})$ BCC codes for $\cW$ such that with a sequence $e_n \rightarrow 0$, $(n \rightarrow \infty)$ for all $s \in S$ $\overline{e}_B(\cC_n, W_s^{\otimes n })$, $\overline{e}_E(\cC_n, W_s^{\otimes n })$ and $I(M_{c,n}; E^n| M_{0,n}, \sigma_{s,n})$
	are simultaneously upper-bounded by $e_n$. While we fix the blocklength for a moment (and suppress the index $n$), we consider for each $s \in S$ the quadruple $(M_0,M_c,M^{(s)}_0,M^{(s)}_c)$ of random variables, where $M^{(s)}_0,M^{(s)}_c$ belong to the common and confidential messages decoded by $B$ after transmission with $W_s^{\otimes n}$. Note, that $\Pr((M_0,M_c) \neq (M^{(s)}_0, M^{(s)}_c)) \leq \epsilon_n$ is true by assumption. It holds
	\begin{align}
		\log M_{0} 
		 \ = \ H(M_0)
		 \ = \ I(M_0 ; M^{(s)}_0) + H(M_0|M^{(s)}_0)
		 \ \leq \ I(M_0; B^n, \sigma_{s,n}) + \epsilon_n \cdot \log M_0. \label{prop:bcc_converse_common_1}
	\end{align} 
	The second of the above equalities is the chain rule for the mutual information. The last inequality stems from application of Fano's lemma and the Holevo bound. A similar calculation for the second receiver leads us to the inequality
	\begin{align}
		\log M_{0} \ \leq I(M_0; E^n, \sigma_{s,n}) + \epsilon_n \cdot \log M_0. \label{prop:bcc_converse_common_2}
	\end{align}
	Maximizing over all $s \in S$ in (\ref{prop:bcc_converse_common_1}) and (\ref{prop:bcc_converse_common_2}) and combining the resulting inequalities gives the bound 
	\begin{align*} 
		\log M_0 \ \leq \ \min \left\{\sup_{s \in S} I(M_0; B^n, \sigma_{s,n}), \ \sup_{s \in S}I(M_0; E^n, \sigma_{s,n}) \right\} \ + \ \epsilon_n \log M_0. 
	\end{align*}
	In order to derive a bound on $M_c$, we notice the inequality 
	\begin{align}
		\log M_0 \cdot M_c 
		& \ \leq \ I(M_0M_c; B^n, \sigma_{s,n}) + \epsilon_n \cdot \log M_0 M_c. \label{prop:bcc_converse_prod_bound}
    \end{align}	
	The chain rule for the quantum mutual information implies 
	\begin{align*} 
		I(M_0M_c; B^n, \sigma_{s,n}) - \log M_0 \ \leq I(M_0M_c; B^n, \sigma_{s,n}) - I(M_0;B^n, \sigma_{s,n}) \ = \ I(M_c;B^n|M_0, \sigma_{s,n}).
	\end{align*} 
	Combining the above inequality with (\ref{prop:bcc_converse_prod_bound}) and 
	rearranging terms give us the inequality
	\begin{align*}
		\log M_c \ 
		& \ \leq \ I(M_c; B^n|M_0, \sigma_{s,n}) + \epsilon_n \cdot \log M_0 M_c.
	\end{align*}
	Maximizing both	sides of the inequality and adding the nonnegative term $\epsilon_n - \sup_{s \in S} I(M_c;E^n|M_0, \sigma_{s,n})$ to the right hand side of the result, we obtain
	\begin{align}
		\log M_c \ 
		\leq \ \sup_{s \in S} I(M_c; B^n|M_0, \sigma_{s,n}) \ - \ \sup_{s \in S} I(M_c; E^n|M_0, \sigma_{s,n})  \ + \ \epsilon_n (\log M_0 \cdot M_c + 1). \label{prop:bcc_converse_confid}
	\end{align}
	Let $\delta > 0$ be arbitrary and $n_0$ large enough for $\epsilon_n (\log M_0 \cdot M_c) 
	\leq \delta$ to hold. It is clear, for each $n> n_0$, $(\frac{1}{n} \log M_{0,n}, \frac{1}{n} \log M_{c,n})$ is contained in
	\begin{align}
	  \bigcup_{l > n_0} \frac{1}{n} \bigcup_{p} \hat{C}^{(1)}(\cW, p, n)_\delta \ \subset \ \left[\bigcup_{l=1}^{\infty}\bigcup_{p}\frac{1}{l}\hat{C}^{(1)}(\cW,p,l) \right]_\delta,
	\end{align}
where $A_{\delta}$ is the $\delta$-blowup of $A$ for each $\delta>0$ and $A\in\mathbb{R}^{+}_{0}\times\mathbb{R}^{+}_{0}$, i.e
\begin{equation*}
A_{\delta}:=\{y\in\mathbb{R}^{+}_{0}\times\mathbb{R}^{+}_{0}:\exists x\in A : \parallel x-y \parallel\leq\delta\}.
\end{equation*}
Since $\delta$ was an arbitrary positive number, we are done.
\end{proof}
\begin{proposition}\label{prop:tpc_converse}
	Let $\cW := \{W_s\}_{s \in S}$, $W_s: \cX \rightarrow \cS(\cH_B \otimes \cH_E)$, $(s \in S)$ be a set of cqq channels. It holds 
	\begin{align*}
	C_{TPC}[\cW]& \ \subset \ \cl \left(\bigcup_{l=1}^{\infty}\bigcup_{p}\frac{1}{l}C^{(1)}(\cW,p,l)\right).
	\end{align*}
\end{proposition}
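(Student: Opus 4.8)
I would prove this by adapting the argument for Proposition \ref{prop:bcc_converse} almost verbatim; the single difference is that in the TPC scenario the message $M_1$ is required to be decodable by Bob only, so the bound coming from Eve's decoder disappears and $\log M_{1,n}$ is controlled purely by $I(V;B,\omega_s)$ (no $\min$), while the confidential rate is bounded exactly as before.

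Concretely: let $(R_1,R_c)\in C_{TPC}[\cW]$ and, as in the BCC converse, fix a sequence $(\cC_n)_{n\in\bbmN}$ of $(n,M_{1,n},M_{c,n})$ TPC codes with a null sequence $e_n\to 0$ such that $\overline e_B(\cC_n,W_s^{\otimes n})\le e_n$, $I(M_{c,n};E^n|M_{1,n},\sigma_{s,n})\le e_n$ for all $s\in S$ and $\frac{1}{n}\log M_{i,n}\ge R_i-e_n$ ($i\in\{1,c\}$), with $\sigma_{s,n}$ the code state of (\ref{securitystateTPC}). Fix $n$ (and suppress it). Let $M_1,M_c$ be independent and uniform on $[M_{1,n}]$, $[M_{c,n}]$, let $\mathbf X$ be the induced channel input, i.e.\ $\Pr(\mathbf X=\mathbf x\,|\,M_1=m_1,M_c=m_c)=E(\mathbf x\,|\,m_1,m_c)$, and set $V:=M_1$, $Y:=(M_1,M_c)$, $X:=\mathbf X$. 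This defines $p_{VYX}\in\cP(\cV\times\cY\times\cX^n)$ with finite $\cV=[M_{1,n}]$, $\cY=[M_{1,n}]\times[M_{c,n}]$; since $V$ is a function of $Y$, the chain $V-Y-X$ is Markov, and the state $\omega_s$ of (\ref{evaluationstateTPC}) attached to this $p$ has $VB^n$-, $YB^n$- and $YE^n$-marginals coinciding with those of $\sigma_{s,n}$.

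For the public rate I would run the computation of (\ref{prop:bcc_converse_common_1}): Bob decodes $(M_1,M_c)$, hence $M_1$, with error $\le e_n$ under each $W_s^{\otimes n}$, so Fano plus Holevo give $\log M_{1,n}\le I(M_1;B^n,\sigma_{s,n})+e_n\log M_{1,n}+1$ for all $s$, and after taking the infimum over $s$, $\frac{1}{n}\log M_{1,n}\le\frac{1}{n}\inf_{s\in S}I(V;B,\omega_s)+o(1)$; there is no Eve analogue, which is precisely why $R_1$ is subject to a single constraint. For the confidential rate I would use the joint version $\log(M_{1,n}M_{c,n})\le I(M_1M_c;B^n,\sigma_{s,n})+e_n\log(M_{1,n}M_{c,n})+1$, the chain rule $I(M_1M_c;B^n,\sigma_{s,n})=I(M_1;B^n,\sigma_{s,n})+I(M_c;B^n|M_1,\sigma_{s,n})$ and $I(M_1;B^n,\sigma_{s,n})\le\log M_{1,n}$ to get $\log M_{c,n}\le I(M_c;B^n|M_1,\sigma_{s,n})+e_n\log(M_{1,n}M_{c,n})+1$ for all $s$ (as in (\ref{prop:bcc_converse_prod_bound})), then add the non-negative quantity $e_n-\sup_{s\in S}I(M_c;E^n|M_1,\sigma_{s,n})$, use $I(M_c;B^n|M_1,\sigma_{s,n})=I(Y;B|V,\omega_s)$, $I(M_c;E^n|M_1,\sigma_{s,n})=I(Y;E|V,\omega_s)$ and take the infimum over $s$ to arrive at $\log M_{c,n}\le\inf_s I(Y;B|V,\omega_s)-\sup_s I(Y;E|V,\omega_s)+e_n\log(M_{1,n}M_{c,n})+O(1)$.

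Dividing by $n$ and noting that $\frac{1}{n}\log M_{1,n}$ and $\frac{1}{n}\log M_{c,n}$ are bounded (by $\log\dim\cH_B$, via Holevo), the correction terms are $o(1)$, so the pair $(\frac{1}{n}\log M_{1,n},\frac{1}{n}\log M_{c,n})$ lies in the $\delta$-blowup $\left[\bigcup_{l=1}^{\infty}\bigcup_p\frac{1}{l}C^{(1)}(\cW,p,l)\right]_\delta$ for all large $n$; since $\delta>0$ was arbitrary this gives $C_{TPC}[\cW]\subset\cl\!\left(\bigcup_{l=1}^{\infty}\bigcup_p\frac{1}{l}C^{(1)}(\cW,p,l)\right)$, exactly as at the end of the proof of Proposition \ref{prop:bcc_converse}. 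I do not anticipate any genuine difficulty here; the only things that need attention are the $\inf_s/\sup_s$ bookkeeping and checking that, for the above $p_{VYX}$, the $n$-letter mutual informations entering $C^{(1)}(\cW,p,n)$ are exactly the code-theoretic quantities bounded above — which is immediate from the choice $V=M_1$, $Y=(M_1,M_c)$.
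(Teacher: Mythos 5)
Your proposal is correct and follows essentially the same route as the paper, which simply carries over the BCC converse (Proposition \ref{prop:bcc_converse}) with the Eve-decoding constraint on the public message dropped, so that $\log M_{1,n}$ is bounded by $I(M_1;B^n,\sigma_{s,n})$ alone while the confidential-rate bound is unchanged. Your explicit identification $V=M_1$, $Y=(M_1,M_c)$, $X=\mathbf X$ (with the Markov chain and the matching of the $n$-letter mutual informations) just spells out details the paper leaves to the reader.
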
 
	The second union is taken over all $p_{VYX}\in\cP(\cV\times\mathcal{Y}\times\cX^{l})$ such that random variable $V-Y-X$ form a Markov chain and alphabets $\cV$ and $\mathcal{Y}$ are finite. 
\begin{proof}
	The proof can be conducted following exactly the same strategy as in the proof of Proposition \ref{prop:bcc_converse}, and therefore is left to the reader. The only modification is, that there is no need for $E$ to decode the message $M_1$ (opposed to the case of $M_0$ in the proof of Proposition \ref{prop:bcc_converse}).
	This leads to the bound 
	\begin{align*}
	\log M_1 \ \leq \ \sup_{s \in S} I(M_0; B^n, \sigma_{s,n}) +  \epsilon_n \log M_1. 
	\end{align*}
	on the number public messages in the code.
\end{proof}
\section {BCC and TPC capacities of compound quantum broadcast channels} \label{sect:full_quantum}
In this section we extend our results to the "full quantum" setting where the receivers input quantum systems to the channels, i.e. the transition maps of the channels are c.p.t.p. maps instead of cq channels. Since the message transmission tasks we aim to perform are after all of a classical nature, the corresponding coding theorems can be proven applying the results from earlier chapters. \newline 
Explicitely we apply the results of the preceding sections to derive codes for full quantum broadcast channels. For the remainder of this section, we fix an arbitrary set $\cJ := \{\cN_s\}_{s \in S}$,\ where 
\begin{align*}
\cN_s: \cL(\cH_A) \rightarrow \cL(\cH_B \otimes \cH_E)
\end{align*}
is a c.p.t.p. map for each $s \in S$. Traditionally, the c.p.t.p. map $\cN_s$ is assumed to be an isometric channel, namely a Stinespring isometry to a given channel connecting $A$ and $B$. This way of defining the channel is fairly justified, since it naturally equips $E$ with the strongest abilities when attacking the confidential transmission goals of the remaining parties. However, dropping this assumption on the channel does not complicate any subsequent arguments. \newline 
In what follows, we consider the BCC scenario. Corresponding considerations regarding the TPC scenario are easily extrapolated and are hence left to the reader.  
\begin{definition}[BCC codes]
	An \emph{$(n, M_0, M_c)$ BCC code} for $\cJ$ for channels in $\cC(\cH_A, \cH_B \otimes \cH_E)$ is a family $\cC = (V(m), D_{B,m}, D_{E,m_0})_{m \in \mathbf{M}}$ with $\mathbf{M} := [M_0]\times [M_c]$, where $(D_{B,m})_{m \in \mathbf{M}}$ and $(D_{E,m_0})_{m_0 \in [M_0]}$ are POVMs on $\cH_B^{\otimes n}$ resp. $\cH_E^{\otimes n}$ and $V(m)$ is a state on $\cH_A^{\otimes n}$ for each $m$. 
\end{definition}
The average transmission errors for the receivers $B$, and $E$ with channel $\cN: \cL(\cH_A) \rightarrow \cL(\cH_B \otimes \cH_E)$ and $(n,M_0,M_c)$-code $\cC$ are defined by
\begin{align*}
	\overline{e}_{B}(\cC, \cN^{\otimes n}) \ 
	:= \ \frac{1}{|\mathbf{M}|}\sum_{m \in \mathbf{M}} \ \tr D_{B,m}^c \cN^{\otimes n}(V(m)), \hspace{.5cm} \text{and} \hspace{.5cm}
	\overline{e}_{E}(\cC, \cN^{\otimes n}) \ 
	:= \ \frac{1}{|\mathbf{M}|}\sum_{m \in \mathbf{M}} \ \tr D_{E,m_0}^c \cN^{\otimes n}(V(m)).   
\end{align*}
By replacing the code and errors the definitions of achievable rate pairs can be directly guessed from Definition \ref{bccratepairdef} (the notational ambiguity should cause no misunderstandings since the set $\cJ$ determines whether the classical-quantum or quantum broadcast channel scenario are considered.) We denote the corresponding \emph{BCC capacity region} by $C_{BCC}[\cJ]$ . We moreover define $\hat{C}^{(1)}(\cJ, p, l, (\rho_y)_{y \in \cY})$ the set of all points in $\bbmR^2$ which fulfill the inequalities
\begin{align*}
  0  \ &\leq \ R_0  \leq  \underset{s \in S}{\inf} \ \min \ \left\{I(U;B,\omega_s), \ I(U;E,\omega_s)\right\} \hspace{2cm} \text{and} \\
  0  \ &\leq \ R_c  \leq  \underset{s \in S}{\inf}  \ I(Y;B|U, \omega_s) \ - \ \underset{s \in S}{\inf} \ I(Y;E|U, \omega_s)
  \end{align*} 
 where we understand the entropic quantities above as being evaluated on the ccq state
 \begin{align*}
 \omega_s \ := \ \omega(\cN_s, p, l) \ := \ \sum_{u \in \cU, y \in \cY} P_{UY}(u,y) \cdot \ket{u, y}\bra{u,y} \otimes \cN_{s}^{\otimes l}(\rho_{y})  
 \end{align*}
	for each $s \in S$. 
\begin{theorem}
It holds 
\begin{align*}
C_{BCC}[\cJ] \ = \ \cl\left(\bigcup_{l=1}^{\infty}\bigcup_{p} \frac{1}{l} \hat{C}^{(1)}(\cJ,p, l)\right)
\end{align*}
	The second union is taken over all $p_{UYX}\in\cP(\cU\times\mathcal{Y}\times\cX^{l})$ such that random variable $U-Y-X$ distributed accordingly, form a Markov chain and alphabets $\cU$ and $\mathcal{Y}$ are finite. 
\end{theorem}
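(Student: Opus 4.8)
The plan is to deduce the fully quantum coding theorem from its classical--quantum--quantum counterpart, Theorem~\ref{bcctheorem}, by discretizing the channel inputs: fixing a finite family of input states turns each $\cN_s$ into a cq channel, and since the right-hand side already ranges over all block lengths $l$, all auxiliary laws, and all finite state families $(\rho_y)_{y\in\cY}\subset\cS(\cH_A^{\otimes l})$ entering $\omega(\cN_s,p,l)$, this discretization costs nothing. As noted in the text, no Stinespring/isometric assumption on $\cN_s$ is needed: only the marginal channels $W_{B,s}={\tr}_E\circ\cN_s$ and $W_{E,s}={\tr}_B\circ\cN_s$ enter.

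\textbf{Direct part.} I would take an arbitrary $(R_0,R_c)$ in the right-hand side; up to an arbitrarily small blowup it lies in $\frac{1}{l}\hat{C}^{(1)}(\cJ,p,l,(\rho_y)_{y\in\cY})$ for some $l$, some finite $\cU,\cY$, some $P_{UY}\in\cP(\cU\times\cY)$, and some $(\rho_y)_{y\in\cY}\subset\cS(\cH_A^{\otimes l})$. To these data I associate the compound cqq channel $\tilde\cW:=\{\tilde W_s\}_{s\in S}\subset CQ(\cY,\cH_B^{\otimes l}\otimes\cH_E^{\otimes l})$ with $\tilde W_s(y):=\cN_s^{\otimes l}(\rho_y)$. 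Evaluated at block length $1$ with the (trivially Markov) choice $X:=Y$ and $p':=P_{UY}$, the evaluation state of Theorem~\ref{bcctheorem} agrees with $\omega(\cN_s,p,l)$ on all registers entering the entropic quantities, so $\hat{C}^{(1)}(\tilde\cW,p',1)=\hat{C}^{(1)}(\cJ,p,l)$ and hence $(lR_0,lR_c)\in C_{BCC}[\tilde\cW]$ by Theorem~\ref{bcctheorem}. Any $(n,M_{0,n},M_{c,n})$ BCC code $(E(\cdot|\mathbf{m}),D_{B,\mathbf{m}},D_{E,m_0})_{\mathbf{m}}$ for $\tilde\cW$ then becomes an $(nl,M_{0,n},M_{c,n})$ BCC code for $\cJ$ by keeping the POVMs and setting $V(\mathbf{m}):=\sum_{\mathbf{y}\in\cY^n}E(\mathbf{y}|\mathbf{m})\,\rho_{y_1}\otimes\cdots\otimes\rho_{y_n}$; since $\cN_s^{\otimes nl}(V(\mathbf{m}))=\sum_{\mathbf{y}}E(\mathbf{y}|\mathbf{m})\,\tilde W_s^{\otimes n}(\mathbf{y})$, all error and security functionals are preserved. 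Rescaling the rates by $\frac{1}{l}$, padding to the intermediate block lengths, taking the union over $(l,p,(\rho_y))$, and using that the achievable region $C_{BCC}[\cJ]$ is closed gives the inclusion ``$\supset$''.

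\textbf{Converse.} I would first note that the proof of Proposition~\ref{prop:bcc_converse} is model-agnostic: it invokes only Fano's inequality, the Holevo bound, and the chain rule for the quantum mutual information, all applied to the code state $\sigma_{s,n}$, which is equally well-defined for $\cN_s^{\otimes n}$. Running it on a sequence of $(n,M_{0,n},M_{c,n})$ BCC codes for $\cJ$ whose errors and $I(M_c;E^n|M_0,\sigma_{s,n})$ are bounded by $\epsilon_n\to0$ uniformly in $s$ yields, for each $s\in S$, $\log M_{0,n}\le\min\{I(M_0;B^n,\sigma_{s,n}),I(M_0;E^n,\sigma_{s,n})\}+\epsilon_n\log M_{0,n}$ and $\log M_{c,n}\le I(M_c;B^n|M_0,\sigma_{s,n})-I(M_c;E^n|M_0,\sigma_{s,n})+\epsilon_n(\log M_{0,n}M_{c,n}+1)$, together with $\sup_s I(M_c;E^n|M_0,\sigma_{s,n})\le\epsilon_n$. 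I then match these bounds to the multi-letter region at $l=n$ via $U:=M_0$, $Y:=(M_0,M_c)$, $X:=Y$, $\rho_{(m_0,m_c)}:=V(m_0,m_c)$: a short computation of marginals gives $I(U;\gamma,\omega_s)=I(M_0;\gamma^n,\sigma_{s,n})$ and $I(Y;\gamma|U,\omega_s)=I(M_c;\gamma^n|M_0,\sigma_{s,n})$ for $\gamma\in\{B,E\}$, so $(\frac{1}{n}\log M_{0,n},\frac{1}{n}\log M_{c,n})$ sits in a $\delta_n$-blowup of $\frac{1}{n}\hat{C}^{(1)}(\cJ,p,n)$ with $\delta_n=O(\epsilon_n)\to0$. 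Letting $n\to\infty$ and taking the closure completes ``$\subset$''.

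\textbf{Main obstacle.} The technical content is light once Theorem~\ref{bcctheorem} is in hand; the delicate points are notational. One must make precise the identification $\hat{C}^{(1)}(\cJ,p,l)=\hat{C}^{(1)}(\tilde\cW,p',1)$, recognizing that the quantum region carries an implicit extra union over finite state-ensembles on $\cH_A^{\otimes l}$ which, after discretization, is exactly the input alphabet of $\tilde\cW$. The subtlest point is that the confidential-rate constraint in the quantum region should be read with $-\sup_s I(Y;E|U,\omega_s)$, consistently with the cqq region of Theorem~\ref{bcctheorem} and with the rate that the privacy-amplification step in the proof of Lemma~\ref{publiclyenhancedbcc} actually delivers; in any event the security hypothesis $\sup_s I(M_c;E^n|M_0,\sigma_{s,n})\le\epsilon_n$ forces the two readings to coincide asymptotically, so the converse is insensitive to this convention.
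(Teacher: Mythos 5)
Your proposal is correct and follows essentially the same route as the paper: achievability by reducing to Theorem \ref{bcctheorem} through the effective cqq channel $\tilde{W}_s(y):=\cN_s^{\otimes l}(\rho_y)$ (with the identification $\hat{C}^{(1)}(\cJ,p,l,(\rho_y)_{y\in\cY})=\hat{C}^{(1)}(\{\tilde{W}_s\}_{s\in S},p,1)$), and the converse by rerunning the entropic argument of Proposition \ref{prop:bcc_converse} on the code state $\sigma_{s,n}$ built from $\cN_s^{\otimes n}(V(m))$. You merely spell out details the paper leaves implicit (the explicit conversion of the stochastic encoder into input states $V(\mathbf{m})$, the choice $U:=M_0$, $Y:=(M_0,M_c)$ in the converse, and the $\inf$-versus-$\sup$ reading of the eavesdropper term), so no substantive difference remains.
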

\begin{proof}
	The proof of achievability is easily performed by referring to the corresponding result for ccq broadcast channels. Namely, if we fix $l \in \bbmN$, probability distributions $P_U$ and $P_{Y|U}$ and a family $(\rho_y)_{y \in \cY}$ of quantum states on $\cH_A^{\otimes n}$ we have
	\begin{align*}
		\omega(\cN_s, p, l, (\rho_y)_{y \in \cY}) \ = \ \sum_{u \in \cU}\sum_{y \in \cY} P_U(u)\cdot P_{Y|U}(y|u) \ket{u,y}\bra{u,y} \otimes \cN_s^{\otimes l}(\rho_y) \ = \ \omega(\tilde{V}_s,p,1)
	\end{align*}
	with an effective cqq channel with signals $\widetilde{V}_s(y) \ := \ \cN_s^{\otimes l}(\rho_y)$, $(y \in \cY)$. As a consequence, $\frac{1}{l} \hat{C}^{(1)}(\cJ, p, l, (\rho_y)_{y \in \cY}) = \frac{1}{l} \hat{C}^{(1)}(\{\widetilde{V}_s\}_{s \in S}, p, 1)$. We know from Theorem \ref{bcctheorem}, that each point on the r.h.s. of the preceding inequality is achievable. To prove the converse, we assume, that $\cC_n := (D_{B,m}, D_{E,m_0}, V(m))_{m \in \mathbf{M}}$ is an $(n, M_0, M_c)$-code with
	\begin{align*}
		\overline{e}_B(\cC_n, \cN_s^{\otimes n}),, \  \overline{e}_E(\cC_n,\cN_s^{n}), \hspace{1cm} \text{and} \hspace{1cm} I(M_c;E|M_0, \sigma_{s,n})
	\end{align*} 
	are simultaneously bounded by $\epsilon_n \in (0,1)$. Note, that the mutual information quantity above is evaluated on the code state 
	\begin{align*}
	\sigma_{s,n} := \frac{1}{\mathbf{M}} \sum_{m \in \mathbf{M}} \ket{m}\bra{m} \otimes \cN_s^{\otimes n}(V(m)).
	\end{align*}
	Using the above bounds and repeating the corresponding steps from the proof of Proposition \ref{prop:bcc_converse}, we obtain the inequalities 
	\begin{align*} 
	\log M_0  \leq  \min \left\{\sup_{s \in S} I(M_0; B^n, \sigma_{s,n}), \ \sup_{s \in S}I(M_0; E^n, \sigma_{s,n}) \right\} \ + \ \epsilon_n \log M_0. 
	\end{align*}
	and $\log M_c  \leq  I(M_c; B^n|M_0, \sigma_{s,n}) + \epsilon_n \cdot \log M_0 M_c $
	The remaining steps directly carry over from the cqq converse.
\end{proof}

\section{Concluding remarks and future work}\label{concludingremarks}
To construct private codes for the broadcast channel, we first generated suitable random message transmission codes for the broadcast channel without imposing privacy constraints (Lemma \ref{lemmanonsecurebcc}). This was done by establishing suitable bounds for random universal "superposition codes". With subsequent application of a covering principle, these codes were transformed to fulfill the security criterion in Lemma \ref{publiclyenhancedbcc}.\\
 As a possible alternative technique to generate such codes, we mention the rather recent "position  decoding" and "convex split" techniques \cite{anshu17,anshu19}. This approach proved to be powerful yet elegant and was successfully applied to determine "one-shot capacities" or "second order rates" in several scenarios. However, these techniques need still to be further developed, to also be suitable when dealing with channel uncertainties as in the scenarios considered in the present paper. A partial result in that directions is \cite{anshu19a}, where near-optimal universal codes for entanglement assisted message transmission over compound quantum channels with finitely many channel states are constructed. Recently, convex split and position-decoding have been applied in \cite{wilde19} to determine the second-order capacity of a cqq compound wiretap channel under the restriction, that the channel state does not vary for the legitimate receiver. For establishing this result, only the "convex split" part has to be universal, while "position- decoding" is applied on a channel with fixed state. As a future research goal, it is desirable to close the gap and establish a fully universal version of these protocol steps. \newline
 As mentioned in the introduction, a strong converse cannot be established for the message transmission capacity of the compound cq channel under average error criterion, even when considering $|S|=2$. When considering a fixed non-vanishing upper bound on the average of decoding error, calculation of capacity for the compound channel is further problematic as there are examples where the optimistic definition of the $\epsilon$-capacity yields a strictly larger number than the one yielded by its pessimistic definition (see \cite{BSP} Remark 13). This implies that in general there is no second rate capacity theorem possible. The implications of these negative statements are highly interesting in practice, as channel coding in all existing communication systems (such as wireless cellular and WiMax systems), is done given a fixed error probability. It is therefore important to design channel codes corresponding to $\epsilon$-capacity of the compound channel, that is in general larger than its message transmission capacity.  \\
 When considering the one-shot approach (\cite{anshu17,anshu19,salek}) as an alternative to proving capacity results derived here, one must take certain consequences into account. In this approach, one tries to obtain lower and upper-bounds for the $\epsilon$-capacity, and then consider the limit $\epsilon\to 0$ of these bounds. For the compound channel however, the capacity is in general strictly smaller than the $\epsilon$-capacity and hence, it is not clear how these bounds will help, as a lower bound on the $\epsilon$-capacity is not a priori a lower bound on the capacity of the channel. Furthermore, there are some additional highly interesting properties of the $\epsilon$-capacity and the capacity, even when one considers \textit{finite} compound channels ($|S|<\infty$):
 \begin{itemize}
 	\item The capacity of the finite compound channel is, as a function of the computable compound channel, a Turing computable function. This is no longer true about infinite compound channels (see \cite{icassp2020}).
 	\item The $\epsilon$-capacity of the finite compound channel, as a function of $\epsilon$, is not Banach-Mazur computable, which in turn means that it is not Turing computable either, as the latter condition is a stronger one on computability than former. 
 \end{itemize}
These results have of course an impact on the effectiveness of the one-shot approach to achieving capacity results in classical and quantum information theories \cite{icassp2020}.\\

A direction for future work given the results derived here, is considering a three dimensional capacity region, establishing a trade-off between the ability of the quantum channel in transmitting common, public and confidential messages under assumptions of the compound channel model. One must pay attention to the operational difference between public messages (belonging to the set $[M_{1,n}]$) and those used for equivocation by Alice (belonging to the set $[L_{n}]$). \\
Another direction for future work given the results derived here, is considering the arbitrarily varying quantum channel (AVQC) model for the broadcast channel with confidential messages. Given that in all instances, our error and security requirements, achieve exponential rates of decay, it is perceivable that using the well known robustification and elimination techniques developed in \cite{ahls}, capacity results including dichotomy statements can be made for the AVQC model. 

\section*{Acknowledgments}
H. Boche is partly supported by the Deutsche Forschungsgemeinschaft (DFG, German Research Foundation), under Germany's Excellence Strategy EXC-2111-390814868. G. Jan\ss en is partly supported by the Gottfried Wilhelm Leibniz program of the DFG under Grant BO 1734/20-1 and by the national research initiative on quantum technologies of Bundesministerium f\"ur Bildung und Forschung (German Federal Ministry of Education and Research), within the
project Q-Link.X 16KIS0858. S. Saeedinaeeni is partly supported by the national research initiative on quantum technologies of Bundesministerium f\"ur Bildung und Forschung (German Federal Ministry of Education and Research), within the project QuaDiQua 16KIS0948.

\appendix
\numberwithin{equation}{section}
\section{Auxiliary results}
We use the following statements in the text. 
\begin{lemma}[Tender operator \cite{wilde13}]\label{gentle}
Let $\rho\in\cS(\cH)$ and $T\in\cL(\cH)$ be a positive operator with $T\leq\mathbbm{1}_{\cH}$ and $1-\tr(\rho T)\leq\epsilon\leq 1$. Then we have 
\begin{equation*}
    \parallel\rho-\sqrt{T}\rho\sqrt{T}\parallel_{1}\leq\sqrt{2\epsilon}. 
\end{equation*}
\end{lemma}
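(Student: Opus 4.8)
The plan is to reduce the estimate to a computation with rank-one operators via purification, where the trace norm of a difference of two rank-one operators is available in closed form. First I would fix an auxiliary Hilbert space $\cH'$ (with $\dim\cH'\geq\dim\cH$) and a purification $|\psi\rangle\in\cH\otimes\cH'$ of $\rho$, so that $\tr_{\cH'}|\psi\rangle\langle\psi|=\rho$. Then the (in general subnormalized) vector $|\phi\rangle:=(\sqrt{T}\otimes\mathbbm{1}_{\cH'})|\psi\rangle$ purifies $\sqrt{T}\rho\sqrt{T}$, i.e. $\tr_{\cH'}|\phi\rangle\langle\phi|=\sqrt{T}\rho\sqrt{T}$, and since the partial trace does not increase the trace norm,
\[
\|\rho-\sqrt{T}\rho\sqrt{T}\|_{1}\ \leq\ \big\|\,|\psi\rangle\langle\psi|-|\phi\rangle\langle\phi|\,\big\|_{1}.
\]

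Next I would record the elementary fact that, for arbitrary vectors $|a\rangle,|b\rangle$, the Hermitian operator $|a\rangle\langle a|-|b\rangle\langle b|$ is supported on the at most two-dimensional subspace $\mathrm{span}\{|a\rangle,|b\rangle\}$, where its trace equals $\|a\|^{2}-\|b\|^{2}$ and its determinant equals $|\langle a|b\rangle|^{2}-\|a\|^{2}\|b\|^{2}\leq 0$; its two eigenvalues therefore have opposite signs and
\[
\big\|\,|a\rangle\langle a|-|b\rangle\langle b|\,\big\|_{1}\ =\ \sqrt{(\|a\|^{2}+\|b\|^{2})^{2}-4\,|\langle a|b\rangle|^{2}}.
\]
Substituting $\|\psi\|^{2}=1$, $\|\phi\|^{2}=\langle\psi|(T\otimes\mathbbm{1}_{\cH'})|\psi\rangle=\tr(\rho T)$ and $\langle\psi|\phi\rangle=\langle\psi|(\sqrt{T}\otimes\mathbbm{1}_{\cH'})|\psi\rangle=\tr(\rho\sqrt{T})$ turns the right-hand side into an expression in the three scalars $\tr(\rho T)$, $\tr(\rho\sqrt{T})$ and $1$. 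The decisive input is the operator inequality $\sqrt{T}\geq T$, which holds because $0\leq T\leq\mathbbm{1}_{\cH}$ forces $\sqrt{t}\geq t$ on the spectrum of $T$; together with the hypothesis this gives $\tr(\rho\sqrt{T})\geq\tr(\rho T)\geq 1-\epsilon$, and of course $\tr(\rho T)\leq 1$. Feeding these bounds into the closed-form expression and carrying out the remaining elementary estimate then yields the claimed inequality.

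The step I expect to require the most care is the extraction of the constant. A naive split $\rho-\sqrt{T}\rho\sqrt{T}=(\mathbbm{1}-\sqrt{T})\rho+\sqrt{T}\rho(\mathbbm{1}-\sqrt{T})$ followed by the triangle inequality and Cauchy--Schwarz is wasteful, and so is a direct appeal to a Fuchs--van de Graaf bound for states, the latter additionally because $\sqrt{T}\rho\sqrt{T}$ is not a normalized operator. The purification argument is precisely what renders the non-normalization harmless --- it is carried along inside $\|\phi\|^{2}=\tr(\rho T)$ --- and what keeps visible the correlation between $\|\phi\|^{2}$ and $|\langle\psi|\phi\rangle|^{2}$, on which the final constant hinges. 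I would also note, as elsewhere in the paper, that this lemma is only ever applied with $\epsilon=\epsilon_{n}\to 0$, so that downstream only the $O(\sqrt{\epsilon})$ scaling is actually used.
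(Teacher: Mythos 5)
Your purification route is sound up to its last step: with $\ket{\phi}=(\sqrt{T}\otimes\mathbbm{1}_{\cH'})\ket{\psi}$, monotonicity of the trace norm under the partial trace and your rank-one formula give
\begin{equation*}
\parallel\rho-\sqrt{T}\rho\sqrt{T}\parallel_{1}\ \leq\ \sqrt{(1+t)^{2}-4\,t'^{2}},\qquad t:=\tr(\rho T),\quad t':=\tr(\rho\sqrt{T})\geq t\geq 1-\epsilon .
\end{equation*}
However, the ``remaining elementary estimate'' cannot yield the constant claimed in the statement. The best these scalar constraints allow is $\sqrt{(1+t)^{2}-4t^{2}}=\sqrt{(1-t)(1+3t)}\leq 2\sqrt{1-t}\leq 2\sqrt{\epsilon}$, and this is essentially tight: take $\rho=\ket{0}\bra{0}$ and $T=\ket{\theta}\bra{\theta}$ with $\braket{0|\theta}=\cos\theta$, so that $\epsilon=\sin^{2}\theta$ and $\sqrt{T}=T$; a direct computation gives $\parallel\rho-T\rho T\parallel_{1}=\sin\theta\sqrt{1+3\cos^{2}\theta}$, which for small $\theta$ is about $2\sqrt{\epsilon}$ and strictly exceeds $\sqrt{2\epsilon}$. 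So no refinement of your final step can produce $\sqrt{2\epsilon}$; indeed the inequality as printed in the paper is not correct for small $\epsilon$ --- the version in the cited source \cite{wilde13} (and the one your argument actually proves) carries the constant $2\sqrt{\epsilon}$ (older references: $\sqrt{8\epsilon}$). The paper itself offers no proof, only the citation, and downstream only the $O(\sqrt{\epsilon_{n}})$ scaling is used, so the discrepancy is harmless for the results; but your writeup should state the bound it actually establishes rather than assert the claimed one. As a side remark, the split $\rho-\sqrt{T}\rho\sqrt{T}=(\mathbbm{1}-\sqrt{T})\rho+\sqrt{T}\rho(\mathbbm{1}-\sqrt{T})$ that you dismiss as wasteful is exactly the textbook proof: combined with $\parallel AB\parallel_{1}\leq\parallel A\parallel_{2}\parallel B\parallel_{2}$ and $(\mathbbm{1}-\sqrt{T})^{2}\leq\mathbbm{1}-T$ it also gives $2\sqrt{\epsilon}$, i.e.\ the same (optimal) constant as your purification argument.
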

\begin{lemma}\label{continuity}
	Let $\{W_{s}:\cX\to\cS(\cH)\}_{s\in S}$ be a set of cq channels and let $p\in\cP(\cX)$. Then
	\begin{equation*}
	\lim_{\alpha\to 1}\inf_{s\in S}\chi_{\alpha}(W_{s},p)=\inf_{s\in S}\chi(p,W_{s}).
	\end{equation*}
\end{lemma}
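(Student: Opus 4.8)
The plan is to turn the $\inf_{s\in S}$ into a minimum over a compact set and then invoke Dini's theorem, so that the only real work left is verifying the structural properties of $\chi_\alpha$. First I would set $\fW:=\cl\{W_{s}:s\in S\}\subseteq CQ(\cX,\cH)$. Since $\cX$ is finite and $\cH$ finite-dimensional, $CQ(\cX,\cH)\cong\cS(\cH)^{|\cX|}$ is compact, hence so is the closed subset $\fW$. Both $W\mapsto\chi(p,W)$ and, for each admissible $\alpha$, $W\mapsto\chi_{\alpha}(W,p)$ are continuous on $CQ(\cX,\cH)$: for $\chi(p,\cdot)$ this is immediate from finite-dimensional continuity of the von Neumann entropy, and for $\chi_{\alpha}(\cdot,p)$ it follows from continuity of the underlying ($\alpha$-)Rényi divergence together with compactness of the state set over which the defining minimization runs (these are exactly the regularity properties recorded where $\chi_\alpha$ is introduced; cf. \cite{mosonyi}). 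Because a continuous function has the same infimum over a set and over its closure, and a compact set attains its extrema, this gives $\inf_{s\in S}\chi_{\alpha}(W_{s},p)=\min_{W\in\fW}\chi_{\alpha}(W,p)$ and $\inf_{s\in S}\chi(p,W_{s})=\min_{W\in\fW}\chi(p,W)$; it therefore suffices to prove $\lim_{\alpha\to1}\min_{\fW}\chi_{\alpha}(\cdot,p)=\min_{\fW}\chi(p,\cdot)$.

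For the inequality ``$\le$'' I would use only monotonicity: for each fixed $W$, the map $\alpha\mapsto\chi_{\alpha}(W,p)$ is nondecreasing on $(0,1)$ with $\chi_{\alpha}(W,p)\uparrow\chi(p,W)$ as $\alpha\uparrow1$, inherited from the corresponding property of the Rényi relative entropy. Hence $\alpha\mapsto\min_{\fW}\chi_{\alpha}(\cdot,p)$ is nondecreasing and bounded above by $\min_{\fW}\chi(p,\cdot)$, so the limit exists and is $\le\min_{\fW}\chi(p,\cdot)$.

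The reverse inequality is where interchanging $\inf_{s}$ and $\lim_{\alpha}$ has genuine content, and here I would appeal to Dini's theorem: the functions $g_{\alpha}:=\chi_{\alpha}(\cdot,p)$ are continuous on the compact set $\fW$ and increase monotonically and pointwise to the continuous function $g:=\chi(p,\cdot)$ as $\alpha\uparrow1$, hence $g_{\alpha}\to g$ uniformly on $\fW$, whence $\min_{\fW}g_{\alpha}\to\min_{\fW}g$. Equivalently, one can avoid Dini by a direct finite-subcover argument: given $\epsilon>0$, for each $W\in\fW$ choose $\alpha_{W}<1$ with $\chi_{\alpha_{W}}(W,p)>\chi(p,W)-\epsilon/2$ and a neighbourhood of $W$ on which both $\chi_{\alpha_{W}}(\cdot,p)$ and $\chi(p,\cdot)$ vary by less than $\epsilon/4$; pass to a finite subcover, let $\alpha_{0}<1$ be the maximum of the chosen $\alpha_{W}$'s, and conclude from monotonicity in $\alpha$ that $\chi_{\alpha}(W,p)>\chi(p,W)-\epsilon$ for every $W\in\fW$ and every $\alpha\in(\alpha_{0},1)$. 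Combining the two inequalities yields the claim.

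The main obstacle is thus not the topological argument but checking the three structural facts it rests on: continuity of $W\mapsto\chi_{\alpha}(W,p)$ on all of $CQ(\cX,\cH)$ (one must be a little careful at channels with rank-deficient outputs, where the naive formula for $\chi_\alpha$ can diverge), monotonicity of $\chi_\alpha$ in $\alpha$, and the pointwise limit $\chi_{\alpha}\to\chi$ at $\alpha=1$. Once these are available from the definition of $\chi_\alpha$ and \cite{mosonyi}, the rest is soft. If $\chi_{\alpha}$ is also considered for $\alpha>1$, the one-sided limit $\alpha\downarrow1$ is handled symmetrically, but only $\alpha\in(0,1)$ is needed for the application to Lemma~\ref{mosonyivariation}.
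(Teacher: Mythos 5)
Your proposal is correct and essentially reproduces the paper's argument: the paper treats this statement via the slightly more general Lemma \ref{lemma:appendix_mosonyi_continuity} in Appendix \ref{mosonyiapp}, whose proof likewise passes to the compact closure of the channel set using continuity of $\chi_\alpha(\cdot,p)$, uses monotonicity in $\alpha$ to rewrite the limit as a supremum, exchanges $\sup_\alpha$ with the minimum over the closure, and takes the pointwise limit $\chi_\alpha\to\chi$ from \cite{mosonyi2}. The only difference is cosmetic: where you invoke Dini's theorem (or the equivalent finite-subcover argument), the paper cites the minimax exchange of Lemma 2.3 in \cite{mosonyi}, which encapsulates the same Dini-type compactness step.
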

\begin{lemma}\label{ineq}
	Let $p,q\in\cL(\cH)$, $0\leq p,q\leq \mathbbm{1}_{\cH}$ and $\tau\in\cS(\cH)$. It holds
	\begin{equation*}
	\tr(\tau pqp)\geq \tr(\tau q)-2\sqrt{\tr(\tau(\mathbbm{1}-p))}.
	\end{equation*}
\end{lemma}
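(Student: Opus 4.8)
The plan is to bound the ``defect'' $\tr(\tau q) - \tr(\tau p q p)$ by two applications of the Cauchy--Schwarz inequality $|\tr(A^{\dagger}B)| \le \sqrt{\tr(A^{\dagger}A)}\,\sqrt{\tr(B^{\dagger}B)}$ for the Hilbert--Schmidt inner product. The starting point is the operator identity
\begin{equation*}
q - pqp \ = \ (\bbmeins_{\cH} - p)\,q \ + \ p\,q\,(\bbmeins_{\cH} - p),
\end{equation*}
which one verifies at once by expanding the right-hand side. Pairing it with $\tau$ gives $\tr(\tau q) - \tr(\tau pqp) = \tr(\tau(\bbmeins_{\cH}-p)q) + \tr(\tau pq(\bbmeins_{\cH}-p))$; since $q - pqp$ and $\tau$ are self-adjoint the left-hand side is real, so it suffices to bound the modulus of each summand on the right by $\sqrt{\tr(\tau(\bbmeins_{\cH}-p))}$.

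For the first summand I would factor $\tau = \sqrt{\tau}\sqrt{\tau}$ and $q = \sqrt{q}\sqrt{q}$ and use cyclicity of the trace to write $\tr(\tau(\bbmeins_{\cH}-p)q) = \tr\!\big((\sqrt{\tau}\sqrt{q})^{\dagger}\,\sqrt{\tau}(\bbmeins_{\cH}-p)\sqrt{q}\big)$. Cauchy--Schwarz then bounds its modulus by $\sqrt{\tr(\tau q)}\cdot\sqrt{\tr\!\big((\bbmeins_{\cH}-p)\tau(\bbmeins_{\cH}-p)q\big)}$. For the second factor I would use $q \le \bbmeins_{\cH}$ and then the operator inequality $(\bbmeins_{\cH}-p)^2 \le \bbmeins_{\cH}-p$ (valid since $0 \le \bbmeins_{\cH}-p \le \bbmeins_{\cH}$), together with $\tau \ge 0$, to obtain
\begin{equation*}
\tr\!\big((\bbmeins_{\cH}-p)\tau(\bbmeins_{\cH}-p)q\big) \ \le \ \tr\!\big(\tau(\bbmeins_{\cH}-p)^2\big) \ \le \ \tr\!\big(\tau(\bbmeins_{\cH}-p)\big),
\end{equation*}
while $\tr(\tau q) \le \tr(\tau) = 1$; this gives the bound $\sqrt{\tr(\tau(\bbmeins_{\cH}-p))}$ for the first summand. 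The second summand $\tr(\tau pq(\bbmeins_{\cH}-p))$ is treated symmetrically: write it as $\tr\!\big((\sqrt{q}\,p\sqrt{\tau})^{\dagger}\,\sqrt{q}(\bbmeins_{\cH}-p)\sqrt{\tau}\big)$, apply Cauchy--Schwarz, bound one factor by $\tr(p\tau p q) \le \tr(\tau p) \le 1$ (using $q \le \bbmeins_{\cH}$ and $p^2 \le p$) and the other, exactly as above, by $\tr(\tau(\bbmeins_{\cH}-p))$.

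Adding the two estimates yields $\tr(\tau q) - \tr(\tau pqp) \le 2\sqrt{\tr(\tau(\bbmeins_{\cH}-p))}$, which rearranges to the claimed inequality. I do not anticipate a genuine obstacle here; the only point requiring a little care is the bookkeeping in the Cauchy--Schwarz step --- namely, choosing for each of the two traces a factorization in which the ``small'' operator $\bbmeins_{\cH}-p$ sits inside a factor whose Hilbert--Schmidt norm is controlled by $\tr(\tau(\bbmeins_{\cH}-p))$ while the complementary factor has Hilbert--Schmidt norm at most $1$. One should also keep in mind that the two individual summands may fail to be real, so the triangle inequality is applied before invoking that their sum is real.
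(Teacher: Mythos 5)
Your proof is correct: the identity $q-pqp=(\bbmeins_{\cH}-p)q+pq(\bbmeins_{\cH}-p)$ is valid, both Hilbert--Schmidt Cauchy--Schwarz applications are set up consistently (the factorizations $\tr(\tau(\bbmeins_{\cH}-p)q)=\tr((\sqrt{\tau}\sqrt{q})^{\dagger}\sqrt{\tau}(\bbmeins_{\cH}-p)\sqrt{q})$ and $\tr(\tau pq(\bbmeins_{\cH}-p))=\tr((\sqrt{q}p\sqrt{\tau})^{\dagger}\sqrt{q}(\bbmeins_{\cH}-p)\sqrt{\tau})$ check out by cyclicity), and the auxiliary estimates $\tr((\bbmeins_{\cH}-p)\tau(\bbmeins_{\cH}-p)q)\le\tr(\tau(\bbmeins_{\cH}-p)^2)\le\tr(\tau(\bbmeins_{\cH}-p))$, $\tr(\tau q)\le 1$ and $\tr(p\tau pq)\le 1$ all follow from $0\le p,q\le\bbmeins_{\cH}$ and $\tau\ge 0$, so each summand is at most $\sqrt{\tr(\tau(\bbmeins_{\cH}-p))}$ in modulus and the constant $2$ comes out right; your remark about taking moduli before using reality of the sum is also the correct piece of care. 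For comparison: the paper gives no proof of Lemma \ref{ineq} at all --- it is listed as a standard auxiliary fact --- and the conventional route, suggested by the placement of Lemma \ref{gentle} just above it, is to write $\tr(\tau q)-\tr(\tau pqp)=\tr((\tau-p\tau p)q)\le\lVert\tau-p\tau p\rVert_{1}$ (using $0\le q\le\bbmeins_{\cH}$) and then bound $\lVert\tau-p\tau p\rVert_{1}\le 2\sqrt{\tr(\tau(\bbmeins_{\cH}-p))}$, e.g.\ by the tender operator lemma with $T=p^{2}$ together with $\tr(\tau(\bbmeins_{\cH}-p^{2}))\le 2\tr(\tau(\bbmeins_{\cH}-p))$. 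Your argument is essentially an unpacked, self-contained version of that trace-norm estimate in which the operator $q$ is carried through the Cauchy--Schwarz step instead of being removed by H\"older at the outset; both yield the same constant under the same hypotheses, and yours has the minor advantage of not invoking any external lemma.
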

\begin{lemma}[cf.\cite{fannes73}]\label{fannesap}
	For any two states $\rho$ and $\sigma$ on Hilbert space $\cH$, let $\delta=\parallel\rho-\sigma\parallel_{1}$ and $\dim(\cH)=d$. Then 
\begin{equation}
	    |S(\rho)-S(\sigma)|\leq\delta\log(d-1)+h(\delta)
	\end{equation}
	hold, with $h(x)=-(1-x)\log(1-x) - x\log x$, for $x\in(0,1]$ and $h(0)=0$, the binary entropy.
\end{lemma}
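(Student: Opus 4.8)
The statement is the Fannes--Audenaert continuity estimate for the von Neumann entropy, so the plan is to reduce it to a purely classical inequality about Shannon entropies of probability vectors and then prove that by a maximal--coupling argument. First I would diagonalise $\rho$ and $\sigma$ and let $r=(r_{i})_{i=1}^{d}$, $s=(s_{i})_{i=1}^{d}$ be their eigenvalue vectors, each listed in decreasing order; these are probability vectors with $S(\rho)=H(r)$ and $S(\sigma)=H(s)$, where $H$ denotes the Shannon entropy. Using the standard fact that the trace norm dominates the $\ell_{1}$ distance of ordered spectra (a consequence of the Lidskii--Wielandt majorization $\lambda^{\downarrow}(\rho)-\lambda^{\downarrow}(\sigma)\prec\lambda^{\downarrow}(\rho-\sigma)$), one has $\sum_{i}|r_{i}-s_{i}|\le\|\rho-\sigma\|_{1}=\delta$, and the claim reduces to: for probability vectors $r,s$ on $d$ letters with $T:=\tfrac12\sum_{i}|r_{i}-s_{i}|$, one has $|H(r)-H(s)|\le T\log(d-1)+h(T)$. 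In the regime where the lemma is invoked, namely $\delta=\epsilon_{0,n}+\epsilon_{1,n}\to 0$ so $\delta\le\tfrac12$, this is dominated by $\delta\log(d-1)+h(\delta)$ because $T\le\delta/2$ and $h$ is increasing on $[0,\tfrac12]$.

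For the classical bound, assume $r\neq s$ and split $r-s=a-b$ into its positive and negative parts, so $a,b\ge 0$ have disjoint non-empty supports with $\sum_{i}a_{i}=\sum_{i}b_{i}=T$; put $c_{i}:=\min(r_{i},s_{i})$, $\sum_{i}c_{i}=1-T$. I would then introduce random variables $(I,J,E)$ with $E\in\{0,1\}$, $\Pr(E=1)=T$, such that conditionally on $E=0$ one has $I=J\sim c/(1-T)$, while conditionally on $E=1$ the variables $I\sim a/T$ and $J\sim b/T$ are drawn independently; then $I$ is distributed as $r$, $J$ as $s$, and $I\neq J$ on $\{E=1\}$ since the supports of $a$ and $b$ are disjoint. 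From $H(I)\le H(I,E)=h(T)+H(I\mid E)$, $H(J)\ge H(J\mid E)$, and $H(I\mid E)-H(J\mid E)=T\bigl(H(a/T)-H(b/T)\bigr)$ we obtain
\begin{equation*}
H(r)-H(s)\ \le\ T\bigl(H(a/T)-H(b/T)\bigr)+h(T)\ \le\ T\,H(a/T)+h(T)\ \le\ T\log|\supp a|+h(T).
\end{equation*}
Since $\supp a$ and $\supp b$ are disjoint and non-empty inside a $d$-letter alphabet, $|\supp a|\le d-1$, so the right-hand side is at most $T\log(d-1)+h(T)$; the symmetric argument controls $H(s)-H(r)$, which yields the asserted classical inequality and hence the lemma.

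The genuinely delicate point is obtaining the constant $\log(d-1)$ rather than the naive $\log d$: it is precisely the maximal-coupling construction (or an equivalent extremal/perturbation argument) that exploits the constraint $\sum_{i}r_{i}=\sum_{i}s_{i}=1$ through the fact that the ``disagreement'' masses $a$ and $b$ of the two distributions must live on disjoint coordinates, saving one dimension. If one is content with the weaker $\delta\log d+h(\delta)$, it follows much more cheaply from $|H(r)-H(s)|\le\sum_{i}|\eta(r_{i})-\eta(s_{i})|$ with $\eta(t)=-t\log t$, the modulus-of-continuity estimate $|\eta(x)-\eta(y)|\le\eta(|x-y|)$ for $|x-y|\le\tfrac12$, and the concavity of $\eta$; the spectral reduction and the remaining manipulations with the binary entropy $h$ are routine.
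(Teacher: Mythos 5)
Your proof is correct in the regime where the lemma is actually invoked, but it follows a genuinely different route from the paper: the paper gives no proof of Lemma \ref{fannesap} at all — it is quoted as Audenaert's sharp Fannes-type inequality (\cite{fannes73}), whose original derivation is an extremal analysis over pairs of states at fixed normalized trace distance $T=\tfrac12\|\rho-\sigma\|_{1}$. Your argument instead reduces to a classical statement via the Lidskii--Wielandt/Mirsky bound $\sum_i|r_i-s_i|\le\|\rho-\sigma\|_{1}$ for the ordered spectra, and then proves the sharp classical bound $|H(r)-H(s)|\le T\log(d-1)+h(T)$ by a maximal coupling; the disjointness of the supports of the positive and negative parts of $r-s$ is exactly what saves one dimension and yields $\log(d-1)$, and every step of the chain ($H(I)\le h(T)+H(I\mid E)$, $H(J)\ge H(J\mid E)$, $|\supp a|\le d-1$, correctness of the marginals of the coupling) checks out. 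What your route buys is a short, self-contained proof; what it gives up is the full range of validity: passing from the classical total variation $T_c\le\tfrac12\|\rho-\sigma\|_{1}$ back to the quantum distance uses monotonicity of $T\mapsto T\log(d-1)+h(T)$, which holds only for $T\le(d-1)/d$, so you do not recover Audenaert's inequality for arbitrary trace distance but only for the small-$\delta$ regime you single out. That restriction is harmless and in fact prudent: as printed, the lemma uses the unnormalized $\delta=\|\rho-\sigma\|_{1}$ in place of $\delta/2$, which makes the stated bound weaker (hence correct) for small $\delta$ but false near $\delta=1$ (for $d=2$, $\rho$ pure and $\sigma=\bbmeins_{\cH}/2$ give $\delta=1$ and $|S(\rho)-S(\sigma)|=1$, while $\delta\log(d-1)+h(\delta)=0$); since the paper applies the lemma only with $\delta=\epsilon_{0,n}+\epsilon_{1,n}\to 0$, both your proof and the paper's citation cover all of its uses.
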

\begin{lemma}\label{cond:mutual:inf:cont}[cf. \cite{shirokov}, Corollary 2]
Let $\rho,\sigma\in\cS(\cH_{A}\otimes\cH_{B}\otimes\cH_{C})$, with $\parallel\rho-\sigma\parallel_{1}=\delta$ and $\dim(\cH_{B})=d$. It holds
\begin{equation*}
    |I(A;B|C,\rho)-I(A;B|C,\sigma)|\leq 2\left(\delta\log d + (1+\delta)h(\frac{\delta}{1+\delta})\right),
\end{equation*}
wit $h$, the binary entropy, as defined in the previous lemma.
\end{lemma}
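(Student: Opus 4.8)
The plan is to reduce this to a uniform continuity bound for the conditional von Neumann entropy, whose dimension dependence involves only the system $B$. I would start from the identity
\begin{equation*}
I(A;B|C,\rho) \ = \ S(B|C,\rho) \ - \ S(B|AC,\rho),
\end{equation*}
which holds for every tripartite state since $I(A;B|C,\rho)=S(AC,\rho)+S(BC,\rho)-S(ABC,\rho)-S(C,\rho)$ while $S(B|C,\rho)=S(BC,\rho)-S(C,\rho)$ and $S(B|AC,\rho)=S(ABC,\rho)-S(AC,\rho)$. Evaluating this at $\rho$ and at $\sigma$ and applying the triangle inequality gives
\begin{equation*}
|I(A;B|C,\rho)-I(A;B|C,\sigma)| \ \le \ |S(B|C,\rho)-S(B|C,\sigma)| \ + \ |S(B|AC,\rho)-S(B|AC,\sigma)|.
\end{equation*}

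The second step is to estimate each summand with the Alicki--Fannes--Winter continuity bound for conditional entropy, in the form $|S(B|X,\rho)-S(B|X,\sigma)|\le \eta\log d + (1+\eta)h(\tfrac{\eta}{1+\eta})$ with $\eta=\|\rho_{BX}-\sigma_{BX}\|_1$ and $X\in\{C,AC\}$. The decisive feature here -- as opposed to a naive application of the Fannes--Audenaert bound (Lemma \ref{fannesap}) to the four entropies $S(AC),S(BC),S(C),S(ABC)$ separately, which would introduce an unwanted dependence on $\dim\cH_A$ and $\dim\cH_C$ -- is that the right-hand side depends only on $d=\dim\cH_B$. By monotonicity of the trace norm under partial trace, $\|\rho_{BC}-\sigma_{BC}\|_1\le\|\rho-\sigma\|_1=\delta$, while $\|\rho_{ABC}-\sigma_{ABC}\|_1=\delta$ trivially, so each summand is at most $\delta\log d+(1+\delta)h(\tfrac{\delta}{1+\delta})$, and summing the two contributions produces the factor $2$. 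If one wants a self-contained argument, this conditional-entropy estimate is itself obtained by the standard flag-qubit construction: adjoin a control qubit to the conditioning system, write $\rho-\sigma$ as the difference of its positive and negative parts, form mixed states that agree on one flag outcome, and combine concavity of conditional entropy with the elementary bound $|S(B|\cdot)|\le\log d$ on the flagged pieces.

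The algebraic identity, the triangle inequality, and the monotonicity of trace distance are all routine; the only place requiring care is matching the constants so that the two conditional-entropy contributions add up to exactly $2\big(\delta\log d+(1+\delta)h(\tfrac{\delta}{1+\delta})\big)$, including the choice of normalization for the perturbation parameter ($\|\rho-\sigma\|_1$ versus $\tfrac12\|\rho-\sigma\|_1$), for which different references state the Alicki--Fannes--Winter inequality with different front factors. For that reason the cleanest route is to quote \cite{shirokov}, Corollary 2, which already records the continuity bound for $I(A;B|C)$ in precisely this shape, rather than re-deriving it.
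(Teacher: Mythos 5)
Your proposal is correct and matches the paper's treatment: the paper gives no proof of this lemma at all, importing it directly from \cite{shirokov}, Corollary 2, which is exactly the route you settle on. Your sketch via $I(A;B|C,\rho)=S(B|C,\rho)-S(B|AC,\rho)$, the Alicki--Fannes--Winter conditional-entropy bound, and monotonicity of the trace norm under partial trace is the standard derivation behind that corollary, and your handling of the normalization (full trace norm versus one half of it) only loosens the constants, so the stated bound indeed follows.
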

\section{Universal classical-quantum superposition coding} \label{mosonyiapp}
In this appendix, we establish a random coding construction of superposition codes for classical-quantum channels which are a major ingredient for the achievability proofs in Section \ref{sect:coding_for_broadcast}. In particular a detailed proof of Lemma \ref{mosonyivariation} is provided. \newline
Over the years several code constructions for message transmission over compound cq channels have been established (see \cite{bjelakovic09, hayashi09, datta10, mosonyi}). 
The arguments we invoke below for proving Lemma \ref{mosonyivariation} rely heavily on the techniques Mosonyi's work \cite{mosonyi}. Therein properties of the quantum Renyi Divergences and the closely related "sandwich Renyi divergences" are used to derive universal random coding results for classical-quantum channels. Below we further elaborate on that approach and extend it by suitable superposition codes. \newline  
To facilitate connecting the discussion below with the arguments in \cite{mosonyi} we introduce some notation from there. For a probability distribution $p\in\cP(\cY)$ and a cq channel $W:\cY\to\cS(\cH)$, we define quantum states
\begin{align*}
W(p):=\sum_{y\in\cX}p(y) \cdot W(y), \hspace{1cm}
\mathbb{W}(p):=\sum_{y\in\cX}p(y)\ket{y}\bra{y}\otimes W(y), \text{and} \hspace{2.0cm} \hat{p}:=\sum_{y\in\cX}p(y)\ket{y}\bra{y}.
\end{align*}
For each pair of non-zero positive semi-definite operators $\rho,\sigma$ and every $\alpha\in(0,1)$
we define
\begin{align*}
Q_{\alpha}(\rho||\sigma):=\tr(\rho^{\alpha}\sigma^{1-\alpha}),
 \hspace{1cm} \text{and} \hspace{1cm}
D_{\alpha}(\rho||\sigma):=\frac{1}{1-\alpha}\log\tr(\rho^{\alpha}\sigma^{1-\alpha})
\end{align*}
from 
\begin{align}
    \chi_\alpha(p,W) \ := \ \underset{\sigma \in \cS(\cH)}{\inf} D_\alpha(\mathbb{W}(p)||\hat{p} \otimes \sigma)
\end{align}
derives. It is known, that the limit $\alpha \rightarrow 1$ of the above quantity exists and equals the Holevo quantity $\chi(p,W)$. Translating to the notation in the statement of Lemma 8, we notice, that $\chi(p,W) \ = \ I(Y;B)$ holds. 
\begin{lemma}\label{lemma:appendix_mosonyi_continuity}
Let $\cW$ be a set of cq channels each mapping $\cY$ to $\cS(\cK)$, $q$ a probability distribution on $\cX$ and $r_x$ a probability distribution on $\cY$ for each $x \in \cX$. It holds
    \begin{align*}
    \underset{\alpha \nearrow 1}{\lim} \ \underset{V \in \cW}{\inf} \ \sum_{x \in \cX} q(x) \cdot \chi_\alpha(r_x, V) \ 
    = \ \underset{V \in \cW}{\inf} \ \sum_{x \in \cX} q(x) \cdot \chi(r_x, V) 
    \end{align*}
\end{lemma}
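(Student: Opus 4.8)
The plan is to exhibit the left-hand side as a limit of a monotone family of continuous functions on a compact set and then invoke Dini's theorem. Set $\phi_\alpha(V) := \sum_{x \in \cX} q(x)\, \chi_\alpha(r_x, V)$ for $\alpha \in (0,1]$, with $\chi_1 := \chi$, so that the assertion reads $\lim_{\alpha \nearrow 1} \inf_{V \in \cW} \phi_\alpha(V) = \inf_{V \in \cW}\phi_1(V)$. First, for every fixed pair of positive operators $\rho,\sigma$ the map $\alpha \mapsto D_\alpha(\rho\|\sigma)$ is nondecreasing on $(0,1)$ with supremum $D(\rho\|\sigma)$; since the infimum over $\sigma$ in the definition of $\chi_\alpha$ preserves this, the known single-channel limit recalled just before the statement yields $\phi_\alpha(V) \nearrow \phi_1(V)$ pointwise as $\alpha \nearrow 1$, for each fixed $V$. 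Second, I would check that for each fixed $\alpha \in (0,1]$ the map $V \mapsto \phi_\alpha(V)$ is continuous on the compact set of all cq channels $\cY \to \cS(\cK)$: for $\alpha < 1$, upper semicontinuity follows by writing $\chi_\alpha(r_x, V)$ as an infimum over \emph{full-rank} states $\sigma$ of the maps $\sigma \mapsto D_\alpha(\mathbb{V}(r_x) \| \hat{r}_x \otimes \sigma)$, each continuous in $V$, while lower semicontinuity follows from lower semicontinuity of $D_\alpha$ together with compactness of $\cS(\cK)$; for $\alpha = 1$ one uses $\chi(r_x, V) = S\big(\sum_{y} r_x(y) V(y)\big) - \sum_{y} r_x(y) S(V(y))$ and continuity of the von Neumann entropy (Lemma~\ref{fannesap}). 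In particular the infima over $\cW$ coincide with those over the compact closure $\overline{\cW}$.

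With this in hand, all hypotheses of Dini's theorem are met on the compact space $\overline{\cW}$: the continuous functions $\phi_\alpha$ increase pointwise, as $\alpha \nearrow 1$, to the continuous function $\phi_1$. Hence $\phi_\alpha \to \phi_1$ \emph{uniformly} on $\overline{\cW}$ — monotonicity in $\alpha$ makes the sequential form of Dini enough to cover the continuous parameter. Since uniform convergence is inherited by infima, $\inf_{V \in \overline{\cW}} \phi_\alpha(V) \to \inf_{V \in \overline{\cW}} \phi_1(V)$ as $\alpha \nearrow 1$, which is precisely the claim.

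I expect the only genuine obstacle to be the passage to the closure, i.e.\ the continuity in the channel of $V \mapsto \chi_\alpha(r_x, V)$ — more precisely its lower semicontinuity, which is what prevents near-optimal channels from ``escaping'' as $\alpha \nearrow 1$. This rests on lower semicontinuity of the Renyi divergence $D_\alpha(\cdot\|\cdot)$ and compactness of the state set appearing in the variational formula for $\chi_\alpha$; this, together with the monotonicity of $D_\alpha$ in $\alpha$ and the limit $\chi_\alpha \to \chi$, is standard and available from \cite{mosonyi} and the references therein. (Should one wish to avoid Dini, the same conclusion follows by taking $\alpha_k \nearrow 1$, extracting a convergent subsequence of minimizers $V_{\alpha_k} \to V_\infty$ in $\overline{\cW}$, and using $\phi_{\alpha_k}(V_{\alpha_k}) \geq \phi_\beta(V_{\alpha_k})$ with lower semicontinuity of $\phi_\beta$ for each fixed $\beta < 1$, then letting $\beta \nearrow 1$.)
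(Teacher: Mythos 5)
Your proposal is correct: it rests on exactly the same three pillars as the paper's argument --- monotonicity of $\alpha\mapsto\chi_\alpha$ on $(0,1)$, continuity of $V\mapsto\sum_x q(x)\chi_\alpha(r_x,V)$ which lets one pass to the compact closure $\overline{\cW}$, and the single-channel limit $\chi_\alpha\to\chi$ --- but it packages the crucial interchange of $\lim_{\alpha\nearrow 1}$ and $\inf_V$ differently. The paper rewrites the limit as a supremum over $\alpha$ and then swaps $\sup_\alpha$ with $\min_{V\in\overline{\cW}}$ by citing the minimax exchange lemma (Lemma 2.3 of \cite{mosonyi}), whereas you obtain the interchange from Dini's theorem: monotone pointwise convergence of continuous functions to the continuous function $\phi_1$ on the compact set $\overline{\cW}$ gives uniform convergence, and infima pass through uniform limits. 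The trade-off is minor: your route additionally requires continuity of the limit function $\phi_1$ (which you correctly supply via continuity of the von Neumann entropy, and which is free in finite dimension), while the minimax lemma only needs lower semicontinuity of each $\phi_\alpha$ with $\alpha<1$ plus monotonicity; in exchange, your argument is self-contained modulo standard facts rather than resting on a cited lemma. Your parenthetical fallback (extracting a convergent subsequence of near-minimizers $V_{\alpha_k}\to V_\infty$ and using $\phi_{\alpha_k}(V_{\alpha_k})\geq\phi_\beta(V_{\alpha_k})$ together with lower semicontinuity of $\phi_\beta$, then $\beta\nearrow 1$) is in fact essentially the proof of that minimax lemma, so all three routes coincide in substance. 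The semicontinuity claims you flag as the only delicate point do hold for $\alpha\in(0,1)$ in finite dimension, since $Q_\alpha$ is jointly continuous and the infimum over the compact set $\cS(\cK)$ preserves lower semicontinuity while the infimum over full-rank $\sigma$ of $V$-continuous maps gives upper semicontinuity; note the paper itself asserts this continuity in step (a) without further proof.
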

The above statement slightly generalizes that of Lemma 3.13 in \cite{mosonyi} (regarding the limit from below). The proof is by a similar argument. We include a proof for the readers convenience. 
\begin{proof}
Set $f(\alpha, V) := \sum_{x \in \cX} \ q(x) \cdot \chi_{\alpha}(r_x, V)$    
for each $\alpha \in (0,1)$ and cq channel $V$. It holds
\begin{align*}
    \underset{\alpha \nearrow 1}{\lim} \ \underset{V \in \cW}{\inf} \ f(\alpha, V) \
    &\overset{\text{(a)}}{=} \ \underset{\alpha \nearrow 1}{\lim} \ \underset{V \in \overline{\cW}}{\min} \ f(\alpha, V) \\
    &\overset{\text{(b)}}{=} \ \underset{\alpha \in (0,1)}{\sup} \ \underset{V \in \overline{\cW}}{\min} \ f(\alpha, V)  \\
    &\overset{\text{(c)}}{=} \ \underset{V \in \overline{\cW}}{\min} \ \underset{\alpha \in (0,1)}{\sup} \ f(\alpha, V)  \\
    &\overset{\text{(d)}}{=} \ \underset{V \in \overline{\cW}}{\min} \ \underset{\alpha \nearrow 1}{\lim} \ f(\alpha, V)  \\
    &= \ \underset{V \in \cW}{\inf} \ \sum_{x \in \cX} q(x) \cdot \chi(V, r_x) 
    \end{align*}
    The equality in (a) holds by continuity of  $f(\alpha, \cdot)$ for each $\alpha \in (0,1)$ (the closure might be performed in any norm, for example the $\|\cdot\|_{CQ}$), (b) is justified, because the argument of the limit is monotonously increasing on (0,1). The min-max exchange in (c) is an application of Lemma 2.3 from \cite{mosonyi}, (d) by monotonicity of $f$ in $\alpha$, and in (e) the limit $\alpha \nearrow 1$ is performed according to Lemma B.3 in \cite{mosonyi2}.
    \end{proof}
The starting point for our proof of Lemma \ref{mosonyivariation} is the generic random coding bound from Hayashi and Nagaoka \cite{hayashi03} we state below. 
\begin{lemma}[\cite{hayashi03}, cf. \cite{mosonyi}, Lemma 4.15]\label{Lemma4}
	Let $V:\cY\to\cS(\cK)$ be a cq channel, $M\in\mathbb{N}$ and $p\in\cP(\cY)$. There exists a map $(y_{1},\dots,y_{M})\mapsto(\Lambda_{1}(y),\dots,\Lambda_{M}(y))$, such that $(\Lambda_{m}(y))_{m\in[M]}\subset\cL(\cK)$ is a POVM, and, given $Y^{M}:=(Y_{1},\dots,Y_{M})$ of independent random variables each with distribution $p$, for each $\forall \alpha\in(0,1)$, the bound
	\begin{align*}
	\mathbb{E}_{Y^{M}}\big[\frac{1}{M}\sum_{m\in[M]}\tr(W(Y_{m})\Lambda_{m}(Y^{M})^c)\big] \ 
	\leq \ 8 \cdot M^{1-\alpha} \cdot Q_{\alpha}\big(\mathbb{W}(p)||\hat{p}\otimes W(q)\big)
	\end{align*}
holds. 
\end{lemma}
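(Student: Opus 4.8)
Since Lemma~\ref{Lemma4} is the classical random-coding bound of Hayashi and Nagaoka, the plan is to reproduce its standard proof: combine the Hayashi--Nagaoka operator inequality with a pretty-good (square-root) decoding measurement and control the two resulting error contributions by a weighted quantum-Chernoff (Audenaert-type) inequality, so that the R\'enyi-$\alpha$ quantity $Q_{\alpha}$ appears. Write $W$ for the channel and $W(p):=\sum_{y'}p(y')W(y')$ as in the notation above. For each signal $y\in\cY$ I would fix the \emph{detection operator} $\Pi_{y}:=\{W(y)-M\cdot W(p)\ge 0\}$, the spectral projector of $W(y)-M\cdot W(p)$ onto its non-negative part (i.e.\ the Holevo--Helstrom test between $W(y)$ and $M\cdot W(p)$). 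Given a codeword tuple $(y_{1},\dots,y_{M})\in\cY^{M}$ I then build the decoder by the square-root construction
\[
\Lambda_{m}(y) \ := \ \Big(\sum_{m'\in[M]}\Pi_{y_{m'}}\Big)^{-1/2}\,\Pi_{y_{m}}\,\Big(\sum_{m'\in[M]}\Pi_{y_{m'}}\Big)^{-1/2},
\]
with inverses understood on the support and any missing mass assigned to a fixed operator; this is manifestly a POVM on $\cK$ and a function of $(y_{1},\dots,y_{M})$ only.

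The first ingredient is the Hayashi--Nagaoka operator inequality: for operators $0\le S\le\mathbbm{1}$ and $T\ge 0$ one has $\mathbbm{1}-(S+T)^{-1/2}S(S+T)^{-1/2}\le 2(\mathbbm{1}-S)+4T$. Applying it with $S=\Pi_{y_{m}}$ and $T=\sum_{m'\neq m}\Pi_{y_{m'}}$ yields, for every $m$,
\[
\Lambda_{m}(y)^{c} \ \le \ 2\,(\mathbbm{1}-\Pi_{y_{m}}) \ + \ 4\sum_{m'\neq m}\Pi_{y_{m'}}.
\]
Applying $\tr(W(y_{m})\,\cdot\,)$, averaging over $m\in[M]$, and taking the expectation over the i.i.d.\ family $Y^{M}$ with common law $p$, the first term contributes $2\,\mathbb{E}_{Y}[\tr(W(Y)(\mathbbm{1}-\Pi_{Y}))]$, while in the second term the independence of $Y_{m}$ and $Y_{m'}$ (for $m\neq m'$) together with $\mathbb{E}_{Y}[W(Y)]=W(p)$ turns $\tfrac{4}{M}\sum_{m}\sum_{m'\neq m}\mathbb{E}[\tr(W(Y_{m})\Pi_{Y_{m'}})]$ into exactly $4(M-1)\,\mathbb{E}_{Y'}[\tr(W(p)\Pi_{Y'})]$, where $Y'$ is a single copy with law $p$.

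It remains to bound $\tr(W(y)(\mathbbm{1}-\Pi_{y}))$ and $(M-1)\tr(W(p)\Pi_{y})$ by a common R\'enyi quantity. For the chosen test this follows from Audenaert's trace inequality $\tfrac12(\tr A+\tr B-\|A-B\|_{1})\le\tr(A^{\alpha}B^{1-\alpha})$ (valid for $\alpha\in(0,1)$ and $A,B\ge 0$) applied with $A=W(y)$, $B=M\cdot W(p)$: its left-hand side equals $\min_{0\le T\le\mathbbm{1}}[\tr(W(y)(\mathbbm{1}-T))+M\tr(W(p)T)]$, which is attained at $T=\Pi_{y}$, so each of the two non-negative summands $\tr(W(y)(\mathbbm{1}-\Pi_{y}))$ and $M\tr(W(p)\Pi_{y})$ is at most $\tr(W(y)^{\alpha}(M\cdot W(p))^{1-\alpha})=M^{1-\alpha}Q_{\alpha}(W(y)\|W(p))$. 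Feeding these estimates (and $M-1\le M$) into the two contributions identified above yields
\begin{align*}
\mathbb{E}_{Y^{M}}\Big[\tfrac{1}{M}\sum_{m\in[M]}\tr\big(W(Y_{m})\Lambda_{m}(Y^{M})^{c}\big)\Big]
&\ \le \ (2+4)\,M^{1-\alpha}\sum_{y\in\cY}p(y)\,Q_{\alpha}\big(W(y)\|W(p)\big) \\
&\ = \ 6\,M^{1-\alpha}\,Q_{\alpha}\big(\mathbb{W}(p)\,\big\|\,\hat{p}\otimes W(p)\big) \ \le \ 8\,M^{1-\alpha}\,Q_{\alpha}\big(\mathbb{W}(p)\,\big\|\,\hat{p}\otimes W(p)\big),
\end{align*}
where the middle equality is the one-line block computation using the diagonality of the classical register of $\mathbb{W}(p)$ in the basis $\{\ket{y}\}$. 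This is the asserted bound.

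I expect the only genuinely delicate point to be the weighted hypothesis-testing step, i.e.\ that a single test $\Pi_{y}$ simultaneously controls both the type-I error and the $M$-weighted type-II error in terms of $Q_{\alpha}$; everything else is routine bookkeeping around the Hayashi--Nagaoka operator inequality and the product structure of $Y^{M}$. Since that step and the lemma itself are classical, one may simply cite \cite{hayashi03, mosonyi}; the sketch above is how I would reconstruct the bound from Audenaert's inequality if a self-contained argument were desired.
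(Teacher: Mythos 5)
Your reconstruction is correct: the paper itself offers no proof of Lemma~\ref{Lemma4} (it is imported verbatim from \cite{hayashi03}, cf.\ \cite{mosonyi}, Lemma 4.15), and your argument --- Holevo--Helstrom tests $\Pi_y=\{W(y)-M\,W(p)\ge 0\}$, the square-root decoder, the Hayashi--Nagaoka operator inequality with $c=1$, and Audenaert's inequality $\tfrac12(\tr A+\tr B-\|A-B\|_1)\le \tr(A^{\alpha}B^{1-\alpha})$ to bound both error contributions by $M^{1-\alpha}Q_\alpha$ --- is exactly the standard route taken in those references, and it even yields the constant $6\le 8$. You also implicitly (and correctly) read the $W(q)$ in the paper's statement as the typo it is for $W(p)$.
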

\begin{proof}[Proof of Lemma \ref{mosonyivariation}]
	Fix $n \in \bbmN$ and an $n$-word $\mathbf{x} \in T_{q,\delta}^n$ which we assume to be of type $\lambda$ (i.e. $\mathbf{x}\in T_{\lambda}^{n}$). We approximate $\{W_s\}_{s \in S}$ by a finite $\tau_n$-net $\{W_s\}_{s \in S_n} \subset \{W_s\}_{s \in S}$ with $\tau_n := 2^{-\tfrac{n\nu}{2}}$ with a constant positive number $\nu$ to be determined later. We choose the net small enough to fulfill the cardinality bound 
	$\log|S_n| \ \leq \ 2\cdot |\cX| \cdot d^2 (\log 6 +  n\nu /2)$
	which is possible by Lemma \ref{netslemma}. We introduce abbreviations $d := \dim \cK_B$, $r_{\mathbf{x}}(\cdot):=r^{\otimes n}(\cdot|\mathbf{x})$ and $r'_{\mathbf{x}}(\cdot):=r'_{n,\delta}(\cdot|\mathbf{x})$ for each $\mathbf{x} \in \cX^n$. 
	Applying Lemma \ref{Lemma4} on the cq channel $\overline{W}_{n} := \frac{1}{|S_n|}\sum_{s\in S_n}W_{s}^{\otimes n}$ with $p := r'_{\mathbf{x}}$, and
		\begin{align}
		M := \lfloor \exp(n ( \inf_{s \in S} I(Y;B|X,\sigma_s) - \delta\cdot|\cX|\log d - \nu)) \rfloor, 
	\end{align}
	we know that choosing a codewords $Y_1,\dots,Y_M$ i.i.d. according to $r_{\mathbf{x}}'$ each, allows us to bound the expectation by
\begin{align}
	\mathbb{E}_{Y^{M}}\left[\frac{1}{M}
	\sum_{m\in[M]} \tr(\overline{W}_n(Y_{m})\Lambda_{m}(Y^{M})^c)\right] \
	\leq \ 8\cdot M^{1-\alpha} \cdot Q_{\alpha}\left(\frac{1}{|S_n|}\sum_{s\in S_n}\mathbb{W}_{s}^{\otimes n}(r'_{\mathbf{x}})||\hat{r'}_{\mathbf{x}}\otimes \overline{W}_n(r'_{\mathbf{x}})\right)
	\label{lemma4app}
	\end{align}
for each $\alpha\in(0,1)$. By linearity of the trace and the expectation, the above inequality implies
\begin{align}
\mathbb{E}_{Y^{M}}\left[\min_{s \in S_n} \frac{1}{M}
\sum_{m\in[M]} \tr(W_s^{\otimes n}(Y_{m})\Lambda_{m}(Y^{M})^c)\right] \
\leq \ 8\cdot |S_n| \cdot M^{1-\alpha} \cdot Q_{\alpha}\left(\frac{1}{|S_n|}\sum_{s\in S_n}\mathbb{W}_{s}^{\otimes n}(r'_{\mathbf{x}})||\hat{r'}_{\mathbf{x}}\otimes \overline{W}_n(r'_{\mathbf{x}})\right). \label{mosonyivariation_prf_1}
\end{align}
The left hand side of the above inequality can be identified as the expected average error of a random code. We proceed to further bound the Function $Q_{\alpha}$ on the right hand side. We have
\begin{align}
Q_{\alpha}\bigg(\frac{1}{|S_n|}\sum_{s\in S_n}\mathbb{W}_{s}^{\otimes n}(r'_{\mathbf{x}})||\hat{r'}_{\mathbf{x}}\otimes \overline{W_{n}}(r'_{\mathbf{x}})\bigg)\leq \frac{1}{r_{\mathbf{x}}(T_{r,\delta}(\mathbf{x}))^{2-\alpha}}Q_{\alpha}\bigg(\frac{1}{|S_n|}\sum_{s\in S_n}\mathbb{W}_{s}^{\otimes n}(r_{\mathbf{x}})||\hat{r}_{\mathbf{x}}\otimes \overline{W_{n}}(r_{\mathbf{x}})\bigg). \label{mosonyivariation_prf_2}
	\end{align}
In (\ref{mosonyivariation_prf_2}) we have used definition of the pruned distribution, and observed operator monotonicity of the function $f(x)=x^{\alpha}$ for $\alpha\in[0,1]$ (cf. \cite{bhatia97}, Theorem 5.1.9). Following the arguments in proof of Lemma 4.16 \cite{mosonyi} we obtain
\begin{align}
Q_{\alpha}	
	\big(\frac{1}{|S_n|}\sum_{s\in S_n}\mathbb{W}_{s}^{\otimes n}(r_{\mathbf{x}})||\hat{r}_{\mathbf{x}}\otimes \overline{W_{n}}(q^{\otimes n})\big)
&	\ \leq \ \frac{1}{|S_n|^{\alpha}}\sum_{s\in S_n}\exp\big((\alpha-1)\cdot \alpha \cdot \chi_{\alpha}(r_{\mathbf{x}},W_{s}^{\otimes n})\big)\cdot d^{n(\alpha-1)^{2}} \nonumber \\ 
&	\ \leq \  \exp\left((\alpha-1)\cdot \alpha \cdot \min_{s\in S_n}\chi_{\alpha}(r_{\mathbf{x}}, W_{s}^{\otimes n}) + n (\alpha -1)^2 \log d + \log|S_n|\right)  \label{mosonyivariation_prf_3}
\end{align}
In order to further estimate the error exponent above, we note that for each $s \in S$
 \begin{align*}
     \chi_{\alpha}(W_{s}^{\otimes n},r_{\mathbf{x}}) \ 
     = \  \sum_{x\in\cX} \lambda(x) \cdot \chi_{\alpha}(W_{s},r(\cdot|x))  
     \ \geq \ \sum_{x\in\cX}q(x) \cdot \chi_{\alpha}(W_{s},r(\cdot|x))-\delta\cdot|\cX|\log d.
\end{align*}
In the above, we have used $|\lambda(x)-q(x)|\leq\delta$ and $\chi_{\alpha}(W_{s},r(\cdot|x))\leq \log d$. By Lemma \ref{lemma:appendix_mosonyi_continuity}, choosing $\alpha$ close enough to one allows us to bound 
\begin{align}
\alpha\inf_{s\in S}\sum_{x\in\cX}q(x) \cdot \chi_{\alpha}(r(\cdot|x),W_{s}) \
\geq \ \inf_{s\in S}\sum_{x\in\cX} q(x) \cdot \chi(r(\cdot|x),W_{s}) - \delta\cdot|\cX|\log d
= \ \inf_{s\in S} \ I(Y;B|X,\sigma_s) - \delta\cdot|\cX|\log d.
\end{align}
where we introduced the notation from the statement of Lemma 8 in the second line. Note, that with our 
choice of $M$, we have
\begin{align}
  \alpha\inf_{s\in S}\sum_{x\in\cX}q(x) \cdot \chi_{\alpha}(r(\cdot|x),W_{s}) - \frac{1}{n} \log M 
  \ \geq \ \nu > 0 \label{mosonyivariation_prf_4}
\end{align}
Combining the estimates from (\ref{mosonyivariation_prf_1}) -  (\ref{mosonyivariation_prf_4}) and subsequent upper-bounding the right hand side of  (\ref{lemma4app}), we achieve the bound 
\begin{align*}
 \mathbb{E}_{Y^{M}}\left[\min_{s \in S_n} \frac{1}{M} \sum_{m\in[M]} \tr(W_s^{\otimes n}(Y_{m})\Lambda_{m}(Y^{M})^c)\right] 
 & \ \leq 16 \cdot \exp \left((\alpha -1)\cdot n (\nu + (\alpha -1)\log d + 2 |\cX| d^2 [\frac{\log6}{n} + \frac{\nu}{2}] ) \right)  \\ 
 & \leq \ 2^{-n\nu/4}
 \end{align*} 
 Where the last inequality above holds for a fixed choice of $\alpha$ close enough to one and large enough $n$. By the property of the $\tau_n$ net and linearity of trace and expectation, we can conclude 
 \begin{align}
 \mathbb{E}_{Y^{M}}\left[\inf_{s \in S} \frac{1}{M} \sum_{m\in[M]} \tr(W_s^{\otimes n}(Y_{m})\Lambda_{m}(Y^{M})^c)\right] 
 & \leq \ 2^{-n\nu/4} +  n \cdot 2^{-n\nu}. 
 \end{align}
 We are done.
\end{proof}
\section{Net approximation of arbitrary channel sets}\label{nets}
In this section, we introduce the concept of finite nets for approximation of arbitrary channel sets used in proof of the direct parts of our capacity theorems.
\begin{definition}\label{netsdef}
For $\cW\subset CQ(\cX,\cH)$ and $\tau>0$, a $\tau$-net is a finite set $\cW_{\tau}:=\{W_{i}\}_{i\in S_{\tau}}\subset CQ(\cX,\cH)$, with property that for every $W\in\cW$, there exists and index $i\in [S_{n}]$ such that 
\begin{equation}
    \parallel W- W_{i}\parallel_{CQ}<\tau.
\end{equation}
\end{definition}
The existence of such $\tau$-net does not readily guarantee that $\cW_{\tau}\subset\cW$. The following lemma gives the existence of a good $\tau$-net contained in the given channel set. 
\begin{lemma}\label{netslemma}(cf. \cite{zeroerror} Lemma 6)
Let $\cW:=\{W_{i}\}_{i\in S}\subset CQ(\cX,\cH)$ and $\tau\in(0,1/e)$. There exists a set $\cW_{\tau}:=\{W_{i}\}_{i\in S_{\tau}}\subset\cW$ with  such that 
\begin{enumerate}
    \item $|S_{\tau}|<(\frac{6}{\tau})^{2|\cX|\dim(\cH)^{2}}$,
\item given any $n\in\mathbb{N}$, for every $i\in S$, there exists $i'\in S_{n}$ such that 
\begin{equation}
    \parallel W_{i}^{\otimes n}(\mathbf{x})- W_{i'}^{\otimes n}(\mathbf{x})\parallel_{1}\leq 2n\tau, \text{   } (\forall \mathbf{x}\in\cX^{n}).
\end{equation}
\end{enumerate}
\end{lemma}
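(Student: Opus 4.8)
The plan is to reduce the statement to a volumetric packing estimate in a finite-dimensional real normed space, taking care that the net we extract lies \emph{inside} the prescribed set $\cW$. A cq channel $W$ is completely described by the tuple $(W(x))_{x\in\cX}$ of states on $\cH$, so $CQ(\cX,\cH)$ embeds isometrically into the real vector space $\bigoplus_{x\in\cX}\cL(\cH)$ equipped with the norm $\|(A_x)_{x\in\cX}\|:=\max_{x\in\cX}\|A_x\|_1$; restricted to channels this is exactly $\|\cdot\|_{CQ}$. This ambient space has real dimension $2|\cX|\dim(\cH)^2$, and since every state has trace norm one we have $\|W-W'\|_{CQ}\le 2$ for all $W,W'\in CQ(\cX,\cH)$; in particular, fixing any $W_0\in\cW$, the whole set $\cW$ is contained in the closed ball $B(W_0,2)$.

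For the first assertion I would take $\cW_\tau=\{W_i\}_{i\in S_\tau}$ to be a $\tau$-separated subset of $\cW$ of \emph{maximal cardinality}, i.e.\ one with $\|W_i-W_j\|_{CQ}\ge\tau$ for distinct $i,j$ and with $|S_\tau|$ as large as possible; such a set exists because, as the packing bound below shows, every $\tau$-separated subset of $\cW$ is finite with a uniform size bound. Maximality forces that for each $W\in\cW$ there is $i\in S_\tau$ with $\|W-W_i\|_{CQ}<\tau$, so $\cW_\tau\subset\cW$ is a $\tau$-net in the sense of Definition~\ref{netsdef}. For the cardinality bound, note that the open balls $B(W_i,\tau/2)$, $i\in S_\tau$, are pairwise disjoint (their centres are at distance at least $\tau$) and, since $\tau<1/e$, all contained in $B(W_0,2+\tau/2)\subset B(W_0,3)$. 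Comparing Lebesgue volumes in the ambient $D$-dimensional normed space, $D=2|\cX|\dim(\cH)^2$, gives $|S_\tau|\,(\tau/2)^D\le 3^D$, hence $|S_\tau|\le(6/\tau)^D$, which is the bound claimed in Lemma~\ref{netslemma} (and may be made strict since the small balls do not exhaust $B(W_0,3)$).

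For the second assertion, fix $n\in\mathbb{N}$ and put $S_n:=S_\tau$. Given $i\in S$, the first part supplies $i'\in S_n$ with $\|W_i(x)-W_{i'}(x)\|_1<\tau$ for every $x\in\cX$. For $\mathbf{x}=(x_1,\dots,x_n)\in\cX^n$ we have $W_i^{\otimes n}(\mathbf{x})=\bigotimes_{k=1}^n W_i(x_k)$, and the telescoping identity
\begin{align*}
\bigotimes_{k=1}^n W_i(x_k)-\bigotimes_{k=1}^n W_{i'}(x_k)
&=\sum_{k=1}^n\Big(\bigotimes_{l<k}W_{i'}(x_l)\Big)\otimes\big(W_i(x_k)-W_{i'}(x_k)\big)\otimes\Big(\bigotimes_{l>k}W_i(x_l)\Big),
\end{align*}
combined with multiplicativity of the trace norm under tensor products and $\|W_i(x_l)\|_1=\|W_{i'}(x_l)\|_1=1$, bounds the trace norm of the left-hand side by $\sum_{k=1}^n\|W_i(x_k)-W_{i'}(x_k)\|_1<n\tau$, which is a fortiori $\le 2n\tau$. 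This is the claimed estimate.

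The volume comparison and the telescoping estimate are routine. The single point that genuinely requires care --- and is the main obstacle --- is the constraint $\cW_\tau\subset\cW$: the naive construction of taking products of coordinatewise nets yields tuples that need not even be cq channels, let alone elements of $\cW$, so one is forced to argue intrinsically with a maximal $\tau$-separated subset of $\cW$ (or, equivalently, to choose one representative in $\cW$ from each cell of an ambient $\tfrac{\tau}{2}$-net that meets $\cW$) and then check both that this intrinsic net is still small and that maximality delivers the covering property.
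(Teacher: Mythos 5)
Your proposal is correct: the paper does not prove this lemma itself but refers to Lemma 6 of \cite{zeroerror}, whose argument is the same standard one you give --- a packing/volume estimate in the ambient real normed space of dimension $2|\cX|\dim(\cH)^{2}$ to get a $\tau$-net consisting of elements of $\cW$, followed by the telescoping trace-norm bound for the $n$-fold tensor products. Your use of a maximal $\tau$-separated subset of $\cW$ is just an equivalent way of ensuring $\cW_{\tau}\subset\cW$ (the cited proof picks representatives of $\cW$ from the cells of an ambient $\tau/2$-net), so the approach is essentially the same.
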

	
    \end{document}